\documentclass[11pt]{article}
\usepackage{amsmath,amssymb,amsfonts,latexsym,graphicx,amsthm}
\usepackage{fullpage,color}
\usepackage{url,hyperref}
\usepackage{comment}
\usepackage[linesnumbered,boxed,ruled,vlined]{algorithm2e}
\usepackage{framed}
\usepackage[shortlabels]{enumitem}
\usepackage{cleveref}
\usepackage{todonotes}
\usepackage[normalem]{ulem}
\usepackage{mathabx}
\usepackage{tikz}
\usetikzlibrary{calc}
\usetikzlibrary{arrows.meta}
\usetikzlibrary{backgrounds}
\usetikzlibrary{arrows.meta}

\pgfdeclarelayer{foreground}
\pgfsetlayers{main,foreground}

\usepackage{subcaption}
\usepackage[margin=1in]{geometry}
\usepackage{ifthen}
\usepackage{thm-restate}
\usepackage{xcolor}

\usepackage{stmaryrd}

\newtheorem{theorem}{Theorem}[section]
\newtheorem{lemma}{Lemma}[section]

\newtheorem{corollary}{Corollary}[section]
\newtheorem{claim}{Claim}[lemma]
\newtheorem{question}{Question}[section]

\newtheorem{observation}{Observation}[section]

        {\medskip}

\newcommand{\eps}{\epsilon}
\newcommand{\hide}[1]{}

\newcommand{\ceil}[1]{\left\lceil #1 \right\rceil}

\newcommand{\brac}[1]{\left(#1\right)}

\newcommand{\mst}{\mathsf{MST}}

\newcommand{\wts}{\mathbf{w}}

\newcommand{\dist}{\mathsf{dist}}

\newcommand{\anc}{\mathsf{anc}}
\newcommand{\net}{\mathsf{net}}

\newcommand{\p}{\mathbf{p}}

\definecolor{BrickRed}{rgb}{0.8, 0.25, 0.33}
\def\EMPH#1{\emph{\textcolor{BrickRed}{#1}}}

\begin{document}

	\title{Improved Euclidean Shallow Light Trees}

	\author{
		Hung Le\thanks{University of Massachusetts Amherst, \href{mailto:hungle@cs.umass.edu}{hungle@cs.umass.edu}}\and
		Shay Solomon\thanks{Tel Aviv University, \href{mailto:solo.shay@gmail.com}{solo.shay@gmail.com}}\and	
		Cuong Than\thanks{University of Massachusetts Amherst, \href{mailto:cthan@umass.edu}{cthan@umass.edu}}\and
		Csaba D. T\'oth\thanks{California State University Northridge and Tufts University, 	\href{mailto:csaba.toth@csun.edu}{csaba.toth@csun.edu}}\and
		Tianyi Zhang\thanks{State Key Laboratory for Novel Software Technology, Nanjing University, Email: \href{mailto:tianyiz25@nju.edu.cn}{\nolinkurl{tianyiz25@nju.edu.cn}}}
	}

\date{}

\maketitle

\thispagestyle{empty}

\begin{abstract}
	A {\em minimum spanning tree (MST)} and a {\em shortest paths tree (SPT)} are two of the most fundamental graph structures, capturing the two conflicting objectives of minimizing total weight and preserving distances from a designated root vertex, respectively. A natural goal is to obtain a single spanning tree that simultaneously approximates both objectives. For parameters $\alpha,\beta \ge 1$, a spanning tree $T$ of a weighted graph $G$ rooted at a designated vertex $r$ is called an $(\alpha,\beta)$-shallow-light tree (SLT) if (i) for every vertex $v$, $\dist_T(r,v) \le \alpha \cdot \dist_G(r,v)$ ({\em root-stretch} $\alpha$), and (ii) $\wts(T) \le \beta \cdot \wts(\mst)$ ({\em lightness} $\beta$). The pioneering work of Khuller, Raghavachari, and Young (SODA~1993) constructed  $(1+\epsilon, \tfrac{2}{\epsilon}+1)$-SLTs for general weighted graphs, and proved that this tradeoff between root-stretch and lightness is tight even for {\em series-parallel} graphs. {\bf They further asked whether even a slight improvement, namely reducing the lightness to $\tfrac{2-c}{\epsilon}$ for any constant $c>0$, is possible in the Euclidean plane}. Despite considerable attention, this question has remained open, and the tight tradeoff for general graphs has persisted as the state of the art across all nontrivial graph families.

	We resolve this longstanding question in the affirmative. Our result applies not only to the Euclidean plane, but to Euclidean spaces of \EMPH{arbitrary dimension}. Specifically, we show that every Euclidean instance admits an SLT with root-stretch $1+\eps$ and lightness at most $\left(\frac{5}{3} + o_\eps(1)\right) \cdot \frac{1}{\eps}$, thereby significantly improving upon the longstanding $2/\eps$ barrier. We complement this result with two lower bounds: (1) lightness  $(4/3-o_\epsilon(1))/\epsilon$ is needed in Euclidean high-dimensional instances; (2) lightness $(2-o_\epsilon(1))/\epsilon$ is needed in \EMPH{$\ell_1$-normed} high-dimensional spaces.
    
    As our second main result, we provide a construction of SLTs in \EMPH{the Euclidean plane}, with root stretch $1+\epsilon$ and lightness at most $\left(\frac{2\pi}{\sqrt{4\pi^2+1}}+o_\epsilon(1)\right)\frac{1}{\epsilon} \approx (0.987+o_\epsilon(1))\frac{1}{\epsilon}$. Notably, this reduces the leading $2/\eps$ term in the lightness bound by more than a factor of two, and comes quite close to the lower bound of $\left(\frac{2\pi}{2\pi+1} +o_\epsilon(1))\right) \cdot \frac{1}{\eps} \approx (0.862 +o_\epsilon(1))\frac{1}{\epsilon}$ by Elkin and Solomon (FOCS 2011).
We complement this result with a lower bound of $(1-o_\epsilon(1))/\epsilon$ on the lightness in $3$ dimensions,
thus establishing a separation between the $2$ and $3$ dimensional cases.    
 
\end{abstract}

\clearpage

\thispagestyle{empty}
\tableofcontents
\clearpage

\setcounter{page}{1}

\section{Introduction}

Two of the most fundamental tree structures in graph algorithms are minimum spanning trees (MST) and single-source shortest-path trees. A minimum spanning tree captures the global geometry of a graph: it connects all relevant vertices using the least possible total weight, but it may distort distances from any particular source by an arbitrarily large factor. In contrast, a single-source shortest-path tree captures the metric geometry from a fixed source: it preserves the shortest-path distance from the source to every vertex, but its total weight can be much larger than that of a minimum spanning tree. A natural and widely studied goal is therefore to design a tree that simultaneously resembles both objects: it should have total weight comparable to the minimum spanning tree, while approximately preserving distances from the source.

Formally speaking, let $G=(V,E,\wts)$ be a connected undirected graph with nonnegative edge weights, let $s\in V$ be a distinguished source vertex. We write $\dist_G(u,v)$ for the shortest-path distance between $u$ and $v$ in $G$. For parameters $\alpha,\beta\ge 1$, an $(\alpha,\beta)$ shallow-light tree (SLT) for $G$ is a spanning tree $T$ in $G$ such that
\begin{align*}
    \dist_T(s,t)&\le \alpha\cdot \dist_G(s,t), \,\,\, \forall t\in V\\
    \wts(T)&\le \beta\cdot \wts(\mst).
\end{align*}
The first condition is the \emph{root-stretch} condition, which requires the distance from the source to every terminal in $T$ to approximate its shortest-path distance in $G$. The second condition is the \emph{lightness} condition, which requires the total weight of $T$ to be within a factor $\beta$ of the MST.

The study of shallow-light trees originated in the distributed-computing works of Awerbuch, Baratz, and Peleg \cite{ABP90,ABP92}.
Motivated by a natural cost-sensitive communication model \cite{ABP90}, 
they first studied the diameter version of the problem and proved that 
every weighted graph admits a spanning tree of weight at most $2(1+1/\eps)\wts(\mst)$ whose diameter is at most $2/\eps+1$ times the graph diameter. In their subsequent work on efficient broadcast~\cite{ABP92}, they introduced the rooted version of SLTs, showing that every root admits a spanning tree with root-stretch at most $1+\eps$ and lightness at most $1+4/\eps$. Thus, these papers already obtained the same qualitative $1+\eps$ versus $O(1/\eps)$ shallow-light 
trade-off, though with weaker constants.

Independently of \cite{ABP92}, Khuller, Raghavachari, and Young~\cite{KRY95} proved that for every edge-weighted undirected graph and every parameter $\eps>0$, there exists a $(1+\eps,1+2/\eps)$-SLT. Motivated by applications in performance-driven routing and network design where the root-stretch is required to be arbitrarily close to $1$, 
and following much of the literature on shallow-light and low-light trees, we shall focus on the regime of $\eps=o(1)$~\cite{CKRSW91,CKRSW92,ChenY20,LiQCM021}. In this regime, the leading constant in front of the $1/\eps$ lightness term becomes the central quantity.

Khuller et al.\ also showed that, for general graphs, their trade-off is tight: even on \emph{series-parallel} graphs, namely graphs of treewidth $2$, root-stretch $1+\eps$ may force lightness at least $1+2/\eps$. Thus, even in a very restricted family of sparse graphs, the leading constant $2$ in the $2/\eps$ term of the lightness bound cannot be improved. In their Conclusions section, Khuller et al.~\cite{KRY95} asked whether a better trade-off is possible in Euclidean graphs, in particular in the Euclidean plane.

\begin{question}[\cite{KRY95}]\label{qst:main}
	Is there a constant $c>0$, such that for any parameter $\epsilon >0$, every 2-dimensional Euclidean instance admits a
	$\brac{1+\eps,\frac{2-c}{\eps}+O(1)}$-SLT? 
\end{question}

In the Conclusions section of \cite{KRY95}, the authors suggested a concrete route for improving the coefficient $2$ in 2-dimensional Euclidean instances: replacing the factor-$2$ MST traversal in their analysis by a shorter Traveling Salesman path. They wrote: ``\emph{In Euclidean graphs (i.e., points in the plane), the proof only requires that the algorithm walks around the graph from the root visiting every vertex once, i.e., that the algorithm traverse a Traveling Salesman path starting at the root. In Euclidean graphs, perhaps such a path of weight at most $(2-c)$ times the minimum spanning tree weight always exists and can be found in polynomial time}''.

Subsequently, \cite{fekete1997network} presented a Euclidean plane instance for which the ratio between the minimum-weight Traveling Salesman path and the MST weight is arbitrarily close to $2$. Consequently, this route cannot improve the lightness analysis of \cite{KRY95}: it is already tight, even in the Euclidean plane. In particular, any improvement over the constant $2$ in the $2/\eps$ term of the lightness bound must depart from the MST-traversal paradigm used in \cite{KRY95} and in \cite{ABP90,ABP91}.

Elkin and Solomon \cite{ElkinS15} showed that there exists a 2-dimensional Euclidean instance for which any $(1+\eps,\beta)$-SLT has lightness $\beta \ge \left(\frac{2\pi}{2\pi+1} +o_\epsilon(1))\right) \cdot \frac{1}{\eps} \approx (0.862 +o_\epsilon(1))\frac{1}{\epsilon}$. This instance, discussed in detail in \Cref{sec:overview_plane}, is simply a set of evenly spaced points around a circle boundary, with the circle center as the source.
Thus, prior to this work, the best known upper bound in the Euclidean plane
remained $2/\epsilon$, as in general graphs, whereas the best known lower bound was
$\approx0.862/\epsilon$.

\subsection{Our Results}
\subsubsection{Upper Bounds}

More than three decades after Khuller et al.~posed \Cref{qst:main}, we resolve it in the affirmative. 
Our result applies not only to the Euclidean plane, but to Euclidean spaces of \EMPH{arbitrary dimension},
which is perhaps surprising, especially given that the bound $2/\eps$ is tight even in the very structured family of series-parallel graphs. Specifically, 
we show that every Euclidean instance 
admits a $\brac{1+\eps,\frac{2-c}{\eps}+O(1)}$-SLT with $c = \frac{1}{3}$, thereby significantly improving upon the longstanding $2/\epsilon$ barrier of \cite{KRY95} in any Euclidean space:

\begin{theorem}\label{thm:high-dim-slt}
	For every 
    set $S$ of $n$ terminal points in Euclidean space of arbitrary dimension, every $s \in S$ and every sufficiently small $\eps > 0$, there exists a $\brac{1 + \eps, \frac{5}{3\eps} + O(1)}$-SLT rooted at $s$.
\end{theorem}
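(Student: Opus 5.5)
Our plan is to follow the KRY framework, but to replace the ``break points along an MST traversal'' charging argument with an argument that exploits Euclidean geometry. Recall the KRY construction. Take a DFS tour $W$ of $\mst$ starting at $s$, of length $2\wts(\mst)$. Walk along $W$ while maintaining the current tree. Whenever the tree distance to the current point $v$ exceeds $(1+\eps)\|sv\|$, add the straight shortcut $sv$. The analysis charges each shortcut $\|sv_i\|$ to the stretch of tour $W$ between consecutive break points, $\eps\|sv_i\| \le \wts(W[v_{i-1},v_i]) + \|sv_{i-1}\| - \|sv_i\|$. Summing gives $\sum\|sv_i\| \le 2\wts(\mst)/\eps$. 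The factor $2$ comes from traversing every MST edge twice. Our goal is to make the two traversals of each edge together pay only about $\tfrac53$ rather than $2$ in the charging, for every edge.

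Concretely, we would orient each MST edge $e=(u,w)$ with $w$ the child, and compare its two traversals. The downward traversal moves from $u$ to $w$. The upward traversal moves from $w$ to $u$. In the telescoping sum, an edge contributes $\wts(e) + (\|su\|-\|sw\|)$ on one pass and $\wts(e)-(\|su\|-\|sw\|)$ on the other, in terms of radial progress. Hence the total charge is $2\wts(e)$ only when the radial component of $e$ is wasted. The key is to use a different, adaptive rule for the shortcuts. When the tour descends along a path, we would connect $v$ not to $s$ but to an earlier point on the current root path, or a Steiner-free nearby point, chosen so that the new edge is nearly radial. In Euclidean space the saving then comes from the triangle inequality with angles. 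An edge $e$ making angle $\theta$ with the radial direction gains $\wts(e)\cos\theta$ of radial progress on one pass. It also costs only about $\wts(e)\sin^2\theta/2$-type slack, relative to a root path of stretch $1+\eps$.

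We would set up a potential-function analysis along the DFS. The potential is the slack $\Phi(v)=(1+\eps)\|sv\|-\dist_T(s,v)$. We would show that each traversed edge decreases $\Phi$ by at most a quantity $f(e)$, and that each added edge of weight $L$ restores slack $\eps L$. The claim to prove is then $\sum_{\text{passes}} f \le \tfrac53\wts(\mst)+O(\eps)\wts(\mst)$. To get $\tfrac53$, we would split edges by angle. Nearly radial edges are cheap on the outward pass because they gain radial distance, so their two passes cost about $\wts(e)$ in total. Nearly tangential edges do cost about twice. However, we would use the second construction ingredient for them. On the return pass we do not insist on reusing the tree path; we maintain a fresh shortcut candidate and take a convex combination of two strategies: (a) plain KRY, and (b) a strategy charging against radial gain. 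Averaging the two bounds with weights optimized gives the constant $\tfrac53$, roughly from $\min\{2,1+\cos\theta\text{-type bound}\}$ balanced against a mass argument. We would use the inequality $\wts(\mst)\ge$ (sum of radial extents), together with the fact that nodes far from $s$ force radial MST mass, to cap the contribution of tangential edges.

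The main obstacle is this balancing step: proving that the worst-case mixture of radial and tangential MST edges cannot force charge $2$. In particular we must rule out the circle-like configurations, which might exceed $\tfrac53$ in high dimension. We would first try to prove a per-edge inequality of the form $\text{charge}(e)\le \tfrac53\wts(e) + (\text{telescoping radial terms})$. If that fails, we would fall back to a global argument that pairs tangential MST mass at radius $r$ with the radial mass needed to reach radius $r$. The additive $O(1)$ term absorbs the root-adjacent edges and boundary effects of the telescoping.
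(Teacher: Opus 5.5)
There is a genuine gap. The proposal never specifies an algorithm or an inequality that yields $\tfrac53$, and the mechanism it suggests cannot work as stated.

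Look at your own per-pass contributions in the DFS-tour telescoping: $\wts(e)+(\|su\|-\|sw\|)$ and $\wts(e)-(\|su\|-\|sw\|)$. They sum to exactly $2\wts(e)$ for every edge, whatever its direction. Radial progress gained on the outward pass is given back on the return pass. In the walk-based slack accounting behind KRY, an outward radial edge raises the slack by $\eps\wts(e)$, but walking back over it lowers it by $(2+\eps)\wts(e)$. So the claim that nearly radial edges cost about $\wts(e)$ over both passes is false, and splitting edges by angle buys nothing.

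The ``convex combination of two strategies'' step is empty, because strategy (b) and its bound are never defined. Taking the better of two constructions only helps once both bounds are proved in terms of common quantities. The edges from $v$ to ``an earlier point on the root path'' would also need to be paid for. The paper itself remarks that any improvement over $2$ must depart from the MST-traversal paradigm.

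The missing idea is to stop charging MST edges twice.
\begin{itemize}
\item The paper roots $\mst$ at $s$ and, in post-order, prunes the first subtree $T_{v^\star}$ with $\widehat{\wts}(T_{v^\star})\ge\lambda\eps\,\wts(s,\anc(T_{v^\star}))$, where $\lambda=0.6$. Each MST edge is removed, and hence charged, exactly once.
\item The single root shortcut has length $\rho=\wts(s,\anc(T_{v^\star}))\le\widehat{\wts}(T_{v^\star})/(\lambda\eps)$. This gives the $\tfrac{1}{\lambda\eps}=\tfrac{5}{3\eps}$ term. A star to a $10^{-3}\eps\rho$-net adds only $O(1)\cdot\widehat{\wts}(T_{v^\star})$.
\item The price is stretch. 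Every child subtree of $v^\star$ is below the threshold, so going from $a=\anc(T_{v^\star})$ through $v^\star$ into a sibling subtree can cost about $2\lambda\eps\rho>\eps\rho$.
\item This is where Euclidean geometry enters. Either the route through $a$ has slack to every net point, or some net point $w$ is nearly tight. Near-tightness forces $a,v^\star,w$ to be nearly collinear.
\item Two applications of Apollonius' theorem at the midpoint of $aw$, in triangles $saw$ and $av^\star w$, give $\wts(s,v^\star)\le\bigl(1+(1.424\lambda-0.4925)\eps\bigr)\rho$. This leaves room for a star edge from $v^\star$ of length about $\lambda\eps\rho$.
\end{itemize}
The constant $\tfrac53$ is $1/\lambda$ for $\lambda=0.6$. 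This is just below the largest value, about $0.615$, for which that bound plus the star edge still fits in the $\eps\rho$ stretch budget. Your plan contains neither the threshold-pruning structure nor the collinearity lemma.
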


Furthermore, in the Euclidean plane---the setting explicitly asked about by Khuller et al.---we obtain a substantially stronger bound, reducing the leading $2/\eps$ term by more than a factor of two:
\begin{theorem}\label{thm:plane-slt}
	For every finite set $S\subset \mathbb{R}^2$ of $n$ terminal points, every $s\in S$, and every sufficiently small $\eps>0$, there is a
	$\bigl(1+\epsilon,(\frac{2\pi}{\sqrt{4\pi^2+1}}+o_\epsilon(1))\cdot \frac{1}{\eps}\bigr)$-SLT rooted at $s$, where the $o_{\eps}(1)$ term tends to $0$ as $\eps \rightarrow 0$.
\end{theorem}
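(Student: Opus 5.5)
The plan is to follow the high-dimensional approach behind \Cref{thm:high-dim-slt}: a root-stretch-preserving ``break-point'' construction in the style of KRY. We keep the KRY skeleton: maintain a light backbone tree, and whenever a terminal $t$ would violate root-stretch $1+\eps$, add a shortcut from the root side. The change is in how the backbone is traversed and how shortcuts are paid for, using planar geometry. The guiding example is the Elkin--Solomon instance of points evenly spaced on a circle of radius $R$ around $s$. There the natural near-optimal tree is a family of spirals: each spiral starts near $s$, gains radius slowly while moving tangentially, and covers an arc of the circle. The constant $\frac{2\pi}{\sqrt{4\pi^2+1}}$ should come out of a Pythagorean computation. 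Over a full turn, tangential length of order $2\pi R$ (charged to the MST) is combined with a radial length of order $R$ (charged to the stretch slack $\eps R$). So a segment of the tree making angle $\phi$ with the tangential direction is charged $\cos\phi$ to the MST and $\sin\phi/\eps$ to the slack, and optimizing over $\phi$ yields the $\sqrt{4\pi^2+1}$ denominator.

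\textbf{Step 1: reduction to scales and cones.} First I would reduce to a structured instance. Partition $\mathbb{R}^2\setminus\{s\}$ into annuli $A_i=\{x: (1+\eps)^{i}\le |xs|<(1+\eps)^{i+1}\}$, or a coarser ratio $1+\sqrt{\eps}$ so that errors become $o_\eps(1)$. Subdivide each annulus into angular sectors of width $o(1)$ but $\gg\eps$. Snapping terminals to sector centers and routing through a net costs only a $(1+o_\eps(1))$ factor in stretch and an $o_\eps(1)/\eps$ additive term in lightness, measured against $\wts(\mst)$. This uses standard packing arguments together with the fact that any tree spanning points at distance $\approx r$ in $k$ sectors has weight $\Omega(k r\theta)$.

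\textbf{Step 2: the spiral construction and the charging.} Within the reduced instance, I would process terminals in angular order along an MST traversal, as in KRY. The difference is that the path between consecutive break points is a polygonal ``spiral'' that gains radius at a controlled rate, instead of following the MST Euler tour. A shortcut from $s$ (or from an inner scale) is placed each time the accumulated path length exceeds $(1+\eps)$ times the Euclidean distance. We then charge each piece of the tree: its angular component against the angular projection of the MST, and its radial component against the $\eps$ slack, with the cumulative total bounded via a potential $\Phi=\dist_T(s,t)-(1+\eps)|st|$ along the traversal. Choosing the spiral pitch optimally gives the bound $\frac{2\pi}{\sqrt{4\pi^2+1}}\cdot\frac1\eps$. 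The factor that KRY lose to the doubled Euler tour is avoided because the spiral never backtracks radially.

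\textbf{Main obstacle.} I expect the hardest step to be the lower-bound charging: showing $\wts(\mst)\ge(1-o_\eps(1))\sum_i r_i\,\Theta_i$, where $\Theta_i$ is the total angular span covered in annulus $i$. The difficulty is avoiding a factor-$2$ loss from terminals spread in arbitrary, non-circular configurations and from MST edges that cross several scales. My first attempt would be a projection argument: map every MST edge to its polar-coordinate image and bound angular coverage by $\int |d\theta|\,r$ along the edge, restricting to the sectors that actually contain terminals. Where that fails, I would handle a tree that goes radially deep in a small cone separately, since there the stretch constraint is trivially satisfied and the KRY bound already costs only $o(1/\eps)$.
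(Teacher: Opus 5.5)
There is a genuine gap exactly where the constant is supposed to come from. In the paper, $\frac{2\pi}{\sqrt{4\pi^2+1}}$ does not come from optimizing a spiral pitch in the constructed tree. In the circle instance every spoke $(s,p_i)$ is forced, so the optimal tree has nothing spiral-like. Also, ``charging radial movement to the $\eps$ slack'' is not a weight budget.

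The constant comes from a sharper \emph{lower bound on} $\wts(\mst)$:
\begin{itemize}
\item The shortcut cost is about $W/\eps$. Here $W\le 2\pi L$ is the total transverse width of the MST subpaths crossing cones of aperture $\approx\eps$, and $L=\max_t\wts(s,t)$.
\item The MST path from $s$ to a farthest terminal must make radial progress $L$. This is paid either by the radial extents $A_P$ of the crossing subpaths or by the MST outside them.
\item A slanted crossing can make transverse and radial progress at the same time, so one cannot simply add $L$. Instead the paper shows each crossing has length at least $H_P+\sqrt{A_P^2+B_P^2}-B_P$, where $H_P$ is its transverse height and $B_P$ its transverse fluctuation.
\item Summing via the triangle inequality for the vectors $(A_P,B_P)$ gives $(1+o_\eps(1))\wts(\mst)\ge W+(\sqrt{4\pi^2+1}-2\pi-o_\eps(1))L$. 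Hence $W\le(\frac{2\pi}{\sqrt{4\pi^2+1}}+o_\eps(1))\wts(\mst)$.
\end{itemize}
The inequality you single out as the main obstacle, $\wts(\mst)\ge(1-o_\eps(1))\sum_i r_i\Theta_i$, contains only the transverse term. Even if proved, it yields coefficient $1$ (the naive crossing charge), not $0.987$.

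The construction steps also fail as stated. In Step~1, sectors of angular width $\gg\eps$ are too coarse. Snapping a terminal at radius $r$ to a sector center moves it by $\gg\eps r$, and this is not an affordable $(1+o_\eps(1))$ loss: the entire stretch budget is $1+\eps$, so every detour must fit into $\eps r$. This is why the paper uses cones of aperture $\theta\le(1-4\gamma)\eps$, since the sideways hop from an anchor $b$ to a net point $u$ costs $\theta\wts(s,b)$. It also uses nets of spacing $\gamma\eps L_h$ with $\gamma=o_\eps(1)$.

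In Step~2, a KRY-style scheme placing break points along a walk through all terminals pays (walk length)$/\eps$. Such walks can cost nearly $2\wts(\mst)$ even in the plane (the Fekete et al.\ example cited in the introduction). An angular-order traversal can be arbitrarily longer than the MST. Your non-Steiner ``spiral'' between break points is never defined, and nothing bounds its weight.

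The paper abandons traversal altogether:
\begin{itemize}
\item In each cone it builds a radial path through the closest terminals (anchors) at non-adjacent geometric scales $(1+\alpha)^i$, and attaches net points to these anchors.
\item It charges each radial path to a crossing of its cone by an MST component inside an overlapping annulus.
\item Sparse buffers guarantee every retained component has weight $\Omega(\beta\alpha L_h)$, so the uncharged (inactive) paths and net edges cost only $O(\eps^{-11/12})\wts(\mst)$.
\end{itemize}
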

The leading constant $\frac{2\pi}{\sqrt{4\pi^2+1}}\approx0.987$ in our upper bound comes quite close to the lower bound  $\frac{2\pi}{2\pi+1}\approx0.862$ of \cite{ElkinS15}. Bridging this remaining gap is left as an intriguing open problem.

Our geometric constructions admit polynomial time algorithms, 
but we do not focus on optimizing their running time.

\subsubsection{Lower Bounds}
Our upper bounds are complemented by the following lower bounds.

We  prove that a $\brac{1+\epsilon, (1 - c)/\epsilon + O(1)}$-SLT cannot be achieved in 3-dimensional Euclidean spaces for any constant $c>0$, providing a separation between the  $2$ and $3$ dimensional cases.

\begin{restatable}{theorem}{LowerBoundThree}\label{thm:LB-3D}
	For every sufficiently small $\eps>0$, there exists a finite set
	$P_\eps\subset \mathbb{R}^3$ and a point $s\in P_\eps$ such that every
	$(1+\eps,\beta)$-SLT rooted at $s$ for $P_\eps$ has lightness	$\beta\geq (1-o_{\eps}(1))\frac{1}{\eps}$.
\end{restatable}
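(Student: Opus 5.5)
The plan is to lift the Elkin--Solomon circle instance to the sphere, where the number of points grows like $1/\delta^2$ rather than $1/\delta$. With that many points, the single root edge that costs a factor in the plane becomes negligible. Concretely, take $s$ at the origin and place all other points of $P_\eps$ on the unit sphere centered at $s$. The points will be pairwise at Euclidean distance strictly greater than $\eps$, yet dense enough that the MST is essentially $N\delta$, where $N$ is the number of sphere points and $\delta=\eps(1+\eta)$ for a small $\eta=\eta(\eps)\to 0$.

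\textbf{Step 1 (every point must be a child of $s$).} Let $T$ be any $(1+\eps,\beta)$-SLT rooted at $s$, and let $v$ be a sphere point. Suppose the tree path from $s$ to $v$ passes through another sphere point $u$. Then $\dist_T(s,v)\ge |su|+|uv| \ge 1+\delta>1+\eps$, which violates the root-stretch bound since $\dist(s,v)=1$. Hence every sphere point is adjacent to $s$ in $T$, so $T$ is the star and $\wts(T)=N$.

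\textbf{Step 2 (construct the point set and bound the MST).} Choose latitude circles on the unit sphere at consecutive geodesic spacing $\delta$, so there are about $\pi/\delta$ circles. On each circle, place points consecutively at chord distance in $[\delta,(1+\eta)\delta]$; this is possible up to one rounding gap per circle, by adjusting the count and absorbing the rounding into $\eta$. I will also check that points on adjacent circles are at distance $\ge\delta$, shifting the circle radii slightly if needed. Points on a circle of radius $\rho$ are consecutively $\approx\delta$ apart, and the circles are $\approx \delta$ apart, so $N=(4\pi+o(1))/\delta^2$.

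For the MST, take the path along each circle, add one connector of length $O(\delta)$ between consecutive circles, and add one edge of length $1$ from $s$. This gives
\[
\wts(\mst)\le (1+\eta)N\delta + O(\pi/\delta)\cdot\delta + 1 = (1+o(1))N\delta,
\]
because $N\delta=\Theta(1/\delta)\to\infty$.

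\textbf{Step 3 (conclude).} Combining the two steps gives
\[
\beta\ge \frac{N}{(1+o(1))N\delta}=\frac{1-o(1)}{\eps(1+\eta)}=(1-o_\eps(1))\frac1\eps .
\]

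The main obstacle is Step 2. The point set must simultaneously be strictly $\eps$-separated in Euclidean chord distance and have MST weight only $(1+o(1))N\delta$. Any slack in the spacing enters the leading constant directly, so both the rounding on short polar circles and the separation between neighboring circles must be controlled. I would handle this by choosing $\eta=\sqrt{\eps}$, say, and discarding the $O(1)$ circles nearest the poles, which changes $N$ and the MST weight only by lower-order terms. This contrasts with the planar case, where $N\delta=2\pi$ is bounded and the root edge of length $1$ costs the factor $2\pi/(2\pi+1)$; in three dimensions that edge becomes negligible.
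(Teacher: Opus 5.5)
Your proposal is correct, but your point set is different from the paper's. The paper does not cover the whole sphere. It takes an $(N+1)\times(N+1)$ square lattice of spacing $b=\eps+\eps^2$ in the $xy$-plane, restricted to a patch of side about $\eps^{1/4}$ (so $N=\Theta(\eps^{-3/4})$), and lifts it vertically onto the unit sphere via $z=\sqrt{1-x^2-y^2}$. What that buys is that separation comes for free: vertical projection is $1$-Lipschitz, so all pairwise distances are at least $b>\eps$. The only real computation is that lifted lattice edges have length $\eps+O(\eps^{3/2})$, and this is why the patch must be small. A comb spanning tree then gives $\beta\ge (N+1)^2/\bigl(1+N(N+2)(\eps+O(\eps^{3/2}))\bigr)=(1-\Theta(\sqrt\eps))/\eps$. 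Both the root edge and the edge-length distortion cost relative $\Theta(\sqrt\eps)$, because that MST has weight only $\Theta(\eps^{-1/2})$.

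Your global latitude-circle construction needs more bookkeeping. In return, the MST has weight $\Theta(1/\eps)$, so the root edge and the $O(1/\delta)$ connectors cost only relative $O(\eps)$, and your error term is $O(\eta+\eps)$.

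Two small fixes are needed in that bookkeeping.
\begin{itemize}
\item Geodesic spacing $\delta$ between latitude circles gives a same-longitude chord of $2\sin(\delta/2)<\delta$. Either space the circles by $2\arcsin(\delta/2)$, or note that you only need separation $>\eps$, and $2\sin(\delta/2)>\eps$ as soon as $\eta\ge\eps^2$.
\item With equal spacing, the relative rounding slack on a circle of radius $\rho$ is $\Theta(\delta/\rho)$. So it exceeds $\eta=\sqrt\eps$ on the $\Theta(\eps^{-1/2})$ circles with $\rho\lesssim\sqrt\eps$, not on $O(1)$ circles. This is harmless, since those polar caps hold only $O(1/\eps)$ of the $\Theta(1/\eps^2)$ points.
\end{itemize}
There is also a simpler option for the second point. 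Place points at consecutive chord exactly $\delta$, so the last gap lies in $[\delta,2\delta)$ in chord terms, and drop that gap from each circle's cycle. Then circle $j$ costs exactly $(k_j-1)\delta$, all same-circle pairs are still at least $\delta$ apart, and no absorption into $\eta$ is needed.
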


We next show that even an $\brac{1+\epsilon, \frac{1}{\epsilon} + O(1)}$-SLT is not achievable in 4-dimensional spaces. 

\begin{restatable}{theorem}{LowerBoundFour}\label{thm:four-dim-lower-bound}
	For every sufficiently small \(\eps>0\), there exists a finite point set \(P_\eps\subset \mathbb{R}^4\) and a point \(s\in P_\eps\) such that every \((1+\eps,\beta)\)-SLT   rooted at $s$ for \(P_\eps\) has lightness
	$\beta >\left(12/11-o_\epsilon(1)\right)\frac{1}{\eps}$.
\end{restatable}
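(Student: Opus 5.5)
The plan is to follow the template of the Elkin--Solomon circle instance and of the $3$-dimensional bound in Theorem~\ref{thm:LB-3D}: put the source $s$ at the origin and a dense, evenly spaced set of terminals on a low-dimensional manifold $M$ on the unit sphere $S^3\subset\mathbb{R}^4$. I will take $M$ to be a flat $2$-dimensional torus (a Clifford-type torus $\{(r\cos\theta,r\sin\theta,r'\cos\phi,r'\sin\phi)\}$ with $r^2+r'^2=1$), possibly together with a second family of points, and tune the free parameters at the end. The spacing $\delta$ is chosen with $\delta\ll\eps$, so that $\wts(\mst)$ is, up to $1+o(1)$, the number of terminals times $\delta$ on a square grid. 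The reason for moving to $\mathbb{R}^4$ is that a $2$-dimensional flat set has a cheap MST, while the root-stretch constraint still acts in every direction.

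The key step is a per-terminal charging bound for an arbitrary $(1+\eps,\beta)$-SLT $T$. Every terminal $v$ satisfies $\dist_T(s,v)\le 1+\eps$. The path from $s$ to $v$ leaves $s$ and, at its last ``long'' edge, enters the near-neighbourhood of $M$ at some terminal $u$. The triangle inequality and the Euclidean identity $|su|+|uv|-|sv|\approx |uv|^2/2$ for nearly tangential $uv$ then force $|uv|\lesssim\sqrt{2\eps}$, with the remaining slack bounded by $\eps-|uv|^2/2$. I would partition $T$ into the \emph{radial} part, the edges used to travel from $s$ out to $M$, and the \emph{local} part, the subtrees hanging inside $M$ around each entry terminal $u$. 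Each radial edge costs about $1$. Each local subtree $T_u$ covers a region of $M$ of geodesic radius at most about $\sqrt{2\eps}$, and its weight is at least roughly (area covered)$/\delta$, with a correction that accounts for branching inside the region.

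Following the KRY-style accounting, a local subtree rooted at $u$ that serves a disk $D$ has weight at least the MST of $D$. On top of that it carries an extra penalty: in $\mathbb{R}^4$ a branch of $T$ cannot shortcut through the ambient space without paying $|uv|^2/2$ of slack. The aim is to reduce everything to a one-variable optimization over the shape and area $A$ of the regions served per radial edge, of the form
\[
\frac{\wts(T)}{\wts(\mst)}\;\ge\;\min_{A}\;\frac{1}{\eps}\cdot\frac{\delta\,(N_A+\text{overhead}(A))}{\delta N_A}\cdot\frac{\eps}{\ldots},
\]
whose optimum, with the torus radii $r,r'$ and the mixing parameter chosen correctly, evaluates to $12/11$. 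The matching check on the construction side is that the bound cannot beat what a sensible hub-and-spoke tree achieves; that confirms the instance is being analysed at the right constant.

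The main obstacle is making the local lower bound rigorous: showing that no clever tree topology, such as Steiner-like branching at intermediate terminals or paths that weave between the radial and local parts, can cover a region more cheaply than the idealized per-region bound. I would first try an LP/duality charging argument. Each terminal $v$ receives a charge of $1/\delta$ spread over the path edges near $v$, and one verifies that every tree edge $e$ is charged at most $\wts(e)\cdot(1+o(1))$ times a fixed constant. I would also compare against a coarsened version of the instance on a grid of mesh $\Theta(\sqrt{\eps})$, which reduces the optimization to finitely many local configurations where the constant $12/11$ can be verified directly.
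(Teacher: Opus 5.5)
There is a genuine gap, and it lies in the choice of instance, not only in the rigor of the local bound.

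\textbf{The dense instance is too easy.} With terminals on the unit sphere at spacing $\delta\ll\eps$, an SLT is cheap. Cut $M$ into cells of side $\eps/2$, join $s$ to one terminal $u$ of each cell, and span the cell by a comb through $u$. Then $\dist_T(s,v)\le 1+\eps/2$ for every terminal $v$. Meanwhile $\wts(T)\approx 4\,\mathrm{area}(M)/\eps^2+\mathrm{area}(M)/\delta$ and $\wts(\mst)\approx\mathrm{area}(M)/\delta$. So this instance has lightness $1+O(\delta/\eps^2)=o(1/\eps)$, and no charging argument can extract $\frac{12}{11\eps}$ from it. Your own accounting already shows this: about $\mathrm{area}(M)/\mathrm{area}(D)$ unit radial edges plus $\mathrm{area}(M)/\delta$ of local weight, over $\mathrm{area}(M)/\delta$, gives only $1+\delta/\mathrm{area}(D)$.

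\textbf{Two further errors.} The geometric step is wrong: for $u,v$ on the unit sphere, $\wts(s,u)+\wts(u,v)-\wts(s,v)=\wts(u,v)$ exactly. So local pieces have radius at most $\eps$, not $\sqrt{2\eps}$. Also, the displayed optimization is unfinished, so $12/11$ is asserted rather than derived.

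\textbf{Even spacing just above $\eps$ is not enough.} A flat grid at that spacing is just the $3$-dimensional lattice instance again and cannot beat the constant $1$. When all $n$ terminals are pairwise more than $\eps$ apart on the unit sphere, $\wts(\mst)\ge 1+(n-1)\eps$, while the star has weight exactly $n$.

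\textbf{The missing idea.} You need to force every terminal to be a spoke, and then make the MST cheaper per spoke using hub points that cannot serve as relays. If terminals lie on $\mathbb{S}^3$ pairwise more than $\eps$ apart, any $s$--$q$ path through another point $p$ with $\wts(s,p)\ge 1$ has length greater than $1+\eps$, so $(s,q)\in T$.

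The paper builds the instance as follows:
\begin{itemize}
\item It chains regular tetrahedra $\{p_i,p_{i+1},u_i,v_i\}\subset\mathbb{S}^3$ of side $b=\eps(1+\eps^{1/3})$ along a curve of length $\Theta(\eps^{-1/2})$ on the Clifford torus. This gives $N=\Theta(\eps^{-3/2})$ tetrahedra, so the single MST edge at $s$ is negligible.
\item The points $u_i,v_i$ are placed consistently, so vertices of different tetrahedra stay more than $\eps$ apart.
\item It adds a hub $x_i$ at height $h$ above each circumcenter, at distance $L$ from all four vertices.
\end{itemize}
Blocking $x_i$ as a relay requires $h+L>\eps+\Theta(\eps^2)$, while $L^2=h^2+3b^2/8$. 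Minimizing gives $L-h=3\eps/8$, i.e.\ $L=11\eps/16$. Each tetrahedron adds three unit spokes to $T$ but only $4L=11\eps/4$ to a spanning tree. Hence $\beta\ge(1-o(1))\frac{3}{4L}=(\frac{12}{11}-o(1))\frac{1}{\eps}$.
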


We then generalize our 4-dimensional lower bound for high dimensional Euclidean spaces, showing that SLTs must have lightness at least $\left(4/3-o_\epsilon(1)\right)\frac{1}{\eps}$ in high dimensions.

\begin{restatable}{theorem}{LowerBoundHigh}\label{thm:high-dim-lower-bound}
	For every sufficiently small \(\eps>0\) and sufficiently large integer $d$, there exists a finite point set \(P_\eps\subset \mathbb{R}^d\) and a point \(s\in P_\eps\) such that every \((1+\eps,\beta)\)-SLT  rooted at $s$ for \(P_\eps\) has lightness
	$\beta >\left(4/3-o_\epsilon(1)\right)\frac{1}{\eps}$.
\end{restatable}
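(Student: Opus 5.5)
The plan is to build a point set in which the MST is cheap but every short path to the root can reuse very little of the tree. High dimension is what makes this possible: offsets placed along mutually orthogonal directions combine by Pythagoras, so detours through other terminals cost only the linear length of the detour and never save anything. Throughout I normalize distances so that every relevant point is at distance about $1$ from $s$.

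\textbf{Step 1: the construction.} Pick a small scale $\delta=\delta(\eps)$ with $\eps\ll\delta\ll 1$, to be tuned later. Take a terminal $q$ at distance $1$ from $s$. In directions orthogonal to $q$ and to each other, place a long chain of terminals $p_1,\dots,p_m$ with consecutive gaps $\delta$. Each new step goes in a fresh coordinate, so $\|p_j-q\|^2=j\delta^2$ and $\|p_j\|=\sqrt{1+j\delta^2}$; the chain stays nearly on a sphere while wandering far away. I would iterate this recursively, hanging sub-chains of geometrically smaller gaps off each chain point, again in fresh coordinates. The depth and the ratio between levels are the parameters that will eventually produce the constant $4/3$. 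The MST is essentially the union of the chain edges plus one edge from $s$. Its weight is therefore $W=1+\sum(\text{gaps})$, up to negligible terms.

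\textbf{Step 2: a local charging inequality.} I would first establish a detour lemma for a tree path $s\to\cdots\to u\to v$. Since the directions are orthogonal, $\|v\|-\|u\|$ is only quadratic in $\|v-u\|$. Reaching $v$ through $u$ therefore spends root-stretch budget $\|v-u\|-(\|v\|-\|u\|)\approx \|v-u\|$. Each terminal has total budget $\eps\|v\|\approx\eps$. Consequently, along any root path in $T$, the total length of edges that do not point outward, plus the lengths of parallel-ish shortcuts, is at most about $\eps$. The resulting dichotomy: every subtree of $T$ whose span exceeds roughly $\eps$ must contain a fresh long edge coming from near $s$, of length at least about $1-O(\eps)$. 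The alternative is that the tree pays twice for sideways coverage.

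\textbf{Step 3: global counting and optimization.} Partition the chain into windows of length about $\eps$. Charge to each window either a long root edge of weight about $1$, or the sideways edges that connect it to a previous root edge. Summing these charges gives a recursion in the hierarchy level. In the flat (single-chain) case it yields the KRY-type count $\approx W/\eps$ times a constant. With the recursive levels, tree edges at a finer level can be shared only at the price of extra sideways length. Solving the recursion/LP over the level ratio should give weight at least $(4/3-o_\eps(1))W/\eps$.

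\textbf{Main obstacle.} The hard part is Step 3: turning the local budget constraint into a tight global lower bound. The optimal SLT can mix strategies across levels, so the charging has to be robust against trees that attach windows to long edges of other windows. It also has to control the $o_\eps(1)$ errors from the quadratic terms $\delta^2$. I would first try to write the optimal tree's behaviour as a linear program in the fraction of windows served directly versus via sideways edges at each level. I would then verify the value $4/3$ by exhibiting a dual solution, and finally choose $d$ large enough that all required directions are orthogonal.
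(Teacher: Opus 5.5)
Your proposal has a genuine gap, and it sits exactly where the constant should come from. Step~3 asserts that an LP over level ratios outputs $4/3$. But nothing in your construction singles out that constant, and the construction lacks the one feature that lets an instance beat $1/\eps$.

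Every step of your chains and sub-chains is orthogonal to $q$ and to all earlier steps. Hence it is orthogonal to the position vector of its inner endpoint. So at a branching point $p$ with children $c_i=p+\ell f_i$, the hub lies radially \emph{inside} its children: $\|c_i\|=\sqrt{\|p\|^2+\ell^2}$. There are two cases.
\begin{itemize}
\item If $\ell\le\eps\|p\|$, the route $s\to p\to c_i$ has length $\|p\|+\ell\le(1+\eps)\|c_i\|$. One root edge plus the MST edges then serves all children.
\item If $\ell>\eps\|p\|$, the children can be forced spokes. But the MST then pays more than $\eps$ times the length of each forced root edge.
\end{itemize}
Either way, each gadget trades forced root weight against MST weight at rate at most about $1/\eps$. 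The top chain with gaps $\delta\gg\eps$ is worse still: the star from $s$ is itself an SLT, so that level contributes ratio about $1/\delta$. Recursing over levels cannot repair a local exchange rate that is already capped near $1/\eps$.

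The missing idea is an auxiliary point of $P$ that is lifted radially \emph{outward} and is never usable for shortcuts.
\begin{itemize}
\item The paper places the $d$ vertices of a regular simplex of edge $b=\eps(1+\eps^{1/3})$ on the unit sphere. They are pairwise more than $\eps$ apart and all at distance exactly $1$ from $s$, so each is a forced spoke, as in the circle example.
\item It adds a hub $x_i=o_i+h\,o_i/\|o_i\|$ above the circumcenter, at distance $L$ from every vertex. The hub is useless for shortcuts once $\wts(s,x_i)+L\approx 1+h+L>1+\eps$.
\item Meanwhile $L^2=h^2+\frac{d-1}{2d}b^2\to h^2+\eps^2/2$ as $d\to\infty$. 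Minimizing $L$ subject to $h+L\ge\eps$ gives $L-h=\eps/2$, hence $h=\eps/4$ and $L=3\eps/4$. The MST pays $3\eps/4$ per forced unit edge, which is exactly where $4/3$ comes from.
\item Consecutive simplices share a vertex along a long Clifford-torus curve on the sphere, with $N=\Theta(\eps^{-3/2})$ points at spacing $b$. So gadgets chain at no extra cost, all simplex points stay pairwise more than $\eps$ apart, and the single MST edge from $s$ is negligible.
\end{itemize}
The bound then follows with no charging or LP at all: forced spokes give $\wts(T)\ge(d-1)(N-1)+1$, and the hub stars give $\wts(\mst)\le 1+(N-1)dL$.
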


Finally, we contrast our construction of
\Cref{thm:high-dim-slt}, which achieves lightness
$\approx\frac{5}{3\eps}$ in arbitrary Euclidean dimension,
with high-dimensional \EMPH{$\ell_1$-normed spaces}, where we show that every
SLT must have lightness at least
$\left(2-o_\eps(1)\right)\frac1\eps$.
This demonstrates that the improvement over the longstanding $2/\eps$ barrier crucially relies on Euclidean geometry and does not extend to
general normed spaces.

\begin{restatable}{theorem}{LowerBoundHighellone}\label{thm:high-dim-lower-bound-ell1}
For every sufficiently small $\eps>0$ and sufficiently large integer $d$ such that $\eps d<1$, there exists a finite point set
$P_\eps\subset (\mathbb{R}^d,\|\cdot\|_1)$
and a point $s\in P_\eps$ such that every
$(1+\eps,\beta)$-SLT rooted at $s$ for $P_\eps$ has lightness
$\beta>\left(2-o_\epsilon(1)\right)\frac{1}{\eps}$.
\end{restatable}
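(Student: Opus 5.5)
The plan is to simulate, inside $(\mathbb{R}^d,\|\cdot\|_1)$, the series-parallel lower-bound instance of Khuller et al.\ that forces lightness $1+2/\eps$. The exact construction and parameters below are tentative; the one part I expect to survive as stated is the $\ell_1$ identity used in the second paragraph.

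\textbf{Why $\ell_1$ helps.} In Euclidean space a near-shortest path from $s$ to $x$ through $y$ is allowed for many $y$: the detour excess $\|y\|+\|x-y\|-\|x\|$ is quadratic in the angle between $y$ and $x$. In $\ell_1$ there are no such cheap near-geodesics off the common coordinates. With $s=0$, if $x$ and $y$ have nonnegative coordinates, then $\|y\|_1+\|x-y\|_1-\|x\|_1 = 2\sum_i \max(0,y_i-x_i)$. So routing to $x$ through $y$ costs exactly twice the mass that $y$ carries outside $x$. Mutually ``orthogonal'' perturbations $\delta e_i$ therefore behave like the parallel branches of a series-parallel graph, and disjoint coordinates supply as many branches as needed. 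This is where large $d$ enters.

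\textbf{Construction and per-level bound.} Place the terminals on the positive orthant of the $\ell_1$ sphere. At the first level take $x_i = e_0+\delta e_i$ for $i=1,\dots,m$. Then $\|x_i\|_1 = 1+\delta$ and $\|x_i-x_j\|_1 = 2\delta$, and the MST has weight about $1+2m\delta$. By the identity above, a tree path reaching $x_j$ through $x_i$ (or through any chain of terminals) incurs excess at least $2\delta$. Choosing $\delta$ slightly larger than $\eps/2$ forbids every such detour. A counting argument then forces each $x_i$ to be the first terminal on its own root path, contributing weight about $1$ apiece. This single level only gives ratio $m/(1+m\eps)<1/\eps$.

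\textbf{Recursion to reach $2/\eps$.} To match the KRY tradeoff I would recurse. Each $x_i$ is replaced by a scaled copy of the gadget living on fresh coordinates, with scales chosen geometrically (say shrinking by a factor $\eps$ per level). The depth is $O(\log(1/\eps))$, so the number of coordinates is $m^{O(\log 1/\eps)}$. I expect the condition $\eps d<1$ to be what keeps the total MST weight, essentially $1+\sum(\text{perturbation masses})$, comparable to the root distance, as in the series-parallel example.

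The lower bound would then be proved by a KRY-style charging argument. Let $\ell(v)$ be the tree distance from $s$ to $v$, and let $u$ be the parent of $v$. The identity gives $\ell(v)\ge \ell(u)+\|u-v\|_1$. Summing the root-stretch slack $\eps\|x\|_1$ along a traversal of the tree, every tree edge is charged against slack. Because of the $\ell_1$ excess identity, each edge that is not ``geodesic'' consumes twice its mass in slack, rather than the factor $1$ available in Euclidean space. This should give $\wts(T)\ge (2-o_\eps(1))\,\wts(\mst)/\eps$.

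\textbf{Main obstacle.} The hardest step is this charging in the recursive instance. I must show that the tree cannot amortize detours across levels, that is, that the factor $2$ from the excess identity is not lost when a path shares prefixes across different gadgets. I would first try an explicit induction on the recursion depth, proving that any $(1+\eps)$-root-stretch subtree spanning a gadget of scale $r$ has weight at least $(2-o(1))/\eps$ times that gadget's MST weight, with the error terms summing geometrically.
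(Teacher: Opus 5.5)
There is a genuine gap, and it appears before the recursion. The $1/\eps$ ceiling of your one-level gadget is not a matter of depth. It comes from every point being a forced spoke.

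Suppose $p,q\in P\setminus\{s\}$ are both forced in your sense, i.e.\ every relay $r$ satisfies $\|r\|_1+\|r-q\|_1>(1+\eps)\|q\|_1$, and suppose $\|p\|_1\ge\|q\|_1$. Using $q$ as a relay for $p$ gives $\|p-q\|_1>(1+\eps)\|p\|_1-\|q\|_1\ge\eps\|p\|_1$. Root the MST at $s$. Every vertex's parent edge then has length at least $\eps$ times that vertex's root distance, so $\wts(\mst)\ge\eps\sum_q\|q\|_1$. That sum is $\eps$ times the weight of the star from $s$, and the star is itself a valid SLT. Hence no instance made only of forced spokes can beat $1/\eps$, in any metric.

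Recursing on copies of the same hub-free gadget, as you describe it, does not escape this. With perturbations shrinking by a factor $\eps$, points of one sub-cluster are only $O(\eps^2)$ apart. They stop being forced, and one spoke per sub-cluster suffices.

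Two further problems:
\begin{itemize}
\item Your last-paragraph charging, which sums root-stretch slack along a traversal, is the KRY \emph{upper}-bound mechanism. It does not lower-bound $\wts(T)$.
\item A coordinate count of $m^{O(\log 1/\eps)}$ would violate the hypothesis $\eps d<1$ unless $m$ is bounded.
\end{itemize}

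The missing idea is a \emph{hub}: a point of $P$ that is not a forced spoke, is $\ell_1$-close to many forced leaves, and is nonetheless useless as a relay. Your excess identity $2\sum_i\max(0,y_i-x_i)$ tells you how to build it: the leaves must be obtained by \emph{decreasing} private coordinates of the hub.

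The paper takes a center $v$ with $\|v\|_1=1+\eps'/2$ and leaves $u_i=v-\frac{\eps'}{2}e_{i+3}$, where $\eps'$ is slightly above $\eps$. Both routes $s\to v\to u_i$ and $s\to u_j\to u_i$ have excess exactly $\eps'>\eps$. So all $d-3$ leaves are forced spokes of weight $1$, while the MST attaches each leaf to $v$ at cost only $\eps'/2$. The sign matters: adding a hub \emph{below} your leaves, e.g.\ $e_0$ in your gadget, fails, because $e_0$ lies on an $\ell_1$-geodesic to each $e_0+\delta e_i$. This single star is exactly the KRY instance, which is not recursive, so no amortization across levels is needed.

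The paper then replicates the star:
\begin{itemize}
\item It translates the star by $|B|=\binom{N+2}{2}$ vectors $b$ of fixed norm supported on three extra coordinates. All leaves stay at norm $1$, and distinct stars are at distance at least $\eps'$.
\item It chains the centers at cost $\eps'$ per star and adds one root edge.
\end{itemize}
This yields $\beta\ge\frac{(d-3)|B|}{1+|B|\eps'+(d-3)|B|\eps'/2}=(1-O(1/d))\frac{2}{\eps'}$.

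Finally, $\eps d<1$ is a budget constraint, not an aid. The center spends $(d-3)\eps'/2$ of its norm on private coordinates, and this must fit under $1+\eps'/2$, i.e.\ $N=2/\eps'-d+4>0$.
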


\subsection{Related Work}
Research on shallow-light trees has developed in several directions. We briefly review the work most closely related to ours.

Our work concerns non-Steiner shallow-light trees. However, Steiner variants of SLTs have also been studied extensively. Elkin and Solomon~\cite{ElkinS15} showed that allowing Steiner points can make SLTs substantially lighter than non-Steiner ones. In the Euclidean plane, Solomon~\cite{Solomon15} constructed Euclidean Steiner $(1+\eps,O(\sqrt{1/\eps}))$-SLTs and proved that this dependence is optimal up to constant factors. Very recently, an extension by Frost, Kokado, and Tóth~\cite{Frost2026EuclideanSS} proves the same $O(\sqrt{1/\eps})$ lightness bound for Euclidean Steiner SLTs in arbitrary dimension, independent of the dimension.

Another line of research concerns optimization versions of shallow-light and delay-constrained Steiner trees, often with separate cost and length functions. Kortsarz and Peleg~\cite{kortsarz1997approximating} studied approximation algorithms for diameter-bounded shallow-light Steiner trees, and Naor and Schieber~\cite{NaorS97} gave improved approximations for shallow-light spanning trees. Hajiaghayi, Kortsarz, and Salavatipour~\cite{Hajiaghayi2006ApproximatingBA} studied shallow-light $k$-Steiner trees and their connection to buy-at-bulk network design. Chimani and Spoerhase~\cite{ChimaniS15} used directed shallow-light Steiner trees as a tool for bounded-distance network design and directed light spanners. 
These works study approximation algorithms and computational complexity for optimization versions of shallow-light and delay-constrained Steiner trees; they are complementary to our focus on the \EMPH{existential} trade-off between root stretch and lightness.

\section{Technical Overview}

\subsection{Shallow Light Trees in Euclidean Spaces of Arbitrary Dimension}

\paragraph{The pruning framework.}
The arbitrary-dimensional construction is based on a very simple pruning intuition.
Root an MST \(T\) at the source \(s\).  For a rooted subtree \(T_v\), let
\(\p(v)\) be the parent of \(v\), let
\[
\widehat{\wts}(T_v)=\wts(T_v)+\wts(v,\p(v)),
\]
and let
\[
\anc(T_v)\in V(T_v)
\]
be a vertex of \(T_v\) closest to \(s\) in Euclidean metric.  Think of
\(\widehat{\wts}(T_v)\) as the amount of MST weight that disappears when
\(T_v\) is pruned from the current tree.

Let \(\lambda=\frac35=0.6\).  Suppose first that we could decompose the MST into rooted
subtrees \(P\) such that, for each piece,
\[
\widehat{\wts}(P)
\approx
\lambda\eps\cdot \wts(s,\anc(P)).
\]
Then the construction would be immediate: For every piece \(P\), add one
shortcut edge from \(s\) to the closest vertex \(\anc(P)\) in $P$, and keep all MST
edges inside \(P\).  For any terminal \(u\in V(P)\), the path that first uses
the shortcut \((s,\anc(P))\) and then follows the MST inside \(P\) has length at
most
\[
\wts(s,\anc(P))+\dist_P(\anc(P),u).
\]
Since \(\anc(P)\) is the closest vertex of \(P\) to \(s\), we have
\[
\wts(s,\anc(P))\le \wts(s,u).
\]
If the piece has weight at most \(\lambda\eps\cdot \wts(s,\anc(P))\), then
\[
\dist_P(\anc(P),u)
\le
\widehat{\wts}(P)
\le
\lambda\eps\cdot \wts(s,\anc(P))
\le
\lambda\eps\cdot \wts(s,u).
\]
Therefore
\[
\dist(s,u)
\le
\wts(s,\anc(P))+\dist_P(\anc(P),u)
\le
(1+\lambda\eps)\wts(s,u)
\le
(1+\eps)\wts(s,u).
\]
Thus a single shortcut to the closest vertex of each piece would be enough for the
root stretch.

The same condition also explains the lightness.  The shortcut added for \(P\)
has weight \(\wts(s,\anc(P))\), and the lower bound
\[
\widehat{\wts}(P)
\gtrsim
\lambda\eps\cdot \wts(s,\anc(P))
\]
gives
\[
\wts(s,\anc(P))
\lesssim
\frac{1}{\lambda\eps}\widehat{\wts}(P)
=
\frac{5}{3\eps}\widehat{\wts}(P).
\]
Since the pruned quantities \(\widehat{\wts}(P)\) telescope over the pruning
process, their total is at most the original MST weight.  Hence the total
shortcut weight would be at most
\[
\frac{1}{\lambda\eps}\wts(\mst)
=
\frac{5}{3\eps}\wts(\mst).
\]
This ideal calculation is the source of the constant \(5/3\).

\paragraph{Why the ideal decomposition is too optimistic.}
The difficulty is that such a perfectly balanced decomposition need not exist.
A natural attempt is to process the rooted MST bottom-up, following a post-order
traversal, and greedily prune the first subtree \(T_v\) for which
\[
\widehat{\wts}(T_v)
\ge
\lambda\eps\cdot \wts(s,\anc(T_v)).
\]
This rule guarantees the lower bound needed to pay for the shortcut
\((s, \anc(T_v))\).  However, it gives no comparable upper bound on
\(\widehat{\wts}(T_v)\).

The reason is that the transition from a child to its parent can be abrupt.
A child subtree \(T_x\) may still be small, 
\(
\widehat{\wts}(T_x)
<
\lambda\eps\cdot \wts(s,\anc(T_x)),
\)
but when we move to its parent \(v\), the subtree \(T_v\) suddenly contains the
parent edge and all child components of \(v\).  Thus it may happen that
\(
\widehat{\wts}(T_v)
\gg
\lambda\eps\cdot \wts(s,\anc(T_v)).
\)
In this case, \(T_v\) is good for charging, because it has already crossed the
threshold, but it may be bad for stretch.  Indeed, after taking the shortcut
from \(s\) to the closest vertex \(\anc(T_v)\), reaching a vertex in another
child component may require going through \(v\).  Intuitively, as we will explain shortly, this costs about
two local traversals of length roughly \(\lambda\eps\rho\), where
\(\rho=\wts(s,\anc(T_v))\).  Since \(\lambda>1/2\), such a
\(2\lambda\eps\rho\)-length detour is too large to fit within the
\(\eps\rho\) stretch budget.

The actual construction keeps the useful part of the post-order rule and repairs
precisely this failure.  We choose the first vertex \(v^\star\) satisfying the
threshold condition
\[
\widehat{\wts}(T_{v^\star})
\ge
\lambda\eps\cdot \wts(s,\anc(T_{v^\star})).
\]
Then every child subtree of \(v^\star\) fails the threshold.  Define the variables
\[
a=\anc(T_{v^\star})
\qquad\text{and}\qquad
\rho=\wts(s,a).
\]
If \(a\neq v^\star\), let \(T_x\) be the child subtree of \(v^\star\) that
contains \(a\).  Since \(T_x\) fails the threshold and \(a=\anc(T_x)\), the path
from \(v^\star\) to \(a\) is contained in \(T_x\) together with the parent edge
of \(x\), and hence
\[
\dist_T(v^\star,a)
\le
\widehat{\wts}(T_x)
<
\lambda\eps\cdot \wts(s,a)
=
\lambda\eps\rho .
\]
The same inequality is trivial if \(a=v^\star\).

Now consider any vertex \(u\in V(T_{v^\star})\).  If \(u\neq v^\star\), let
\(T_y\) be the child subtree of \(v^\star\) that contains \(u\).  Since \(T_y\)
also fails the threshold, we have
\[
\dist_T(v^\star,u)
\le
\widehat{\wts}(T_y)
<
\lambda\eps\cdot \wts(s,\anc(T_y))
\le
\lambda\eps\cdot \wts(s,u),
\]
where the last inequality uses the definition of \(\anc(T_y)\).  Thus, using
the bound on \(\dist_T(v^\star,a)\),
\[
\wts(s,u)
\le
\rho+\dist_T(a,v^\star)+\dist_T(v^\star,u)
\le
(1+\lambda\eps)\rho+\lambda\eps\cdot \wts(s,u),
\]
and therefore
\(
\wts(s,u)
\le
\frac{1+\lambda\eps}{1-\lambda\eps}\rho .
\)
Similarly, we have
\[
\wts(s,u)
\ge
\rho-\dist_T(a,v^\star)-\dist_T(v^\star,u)
\ge
(1-\lambda\eps)\rho-\lambda\eps\cdot \wts(s,u),
\]
so we have
\(
\wts(s,u)
\ge
\frac{1-\lambda\eps}{1+\lambda\eps}\rho .
\)
Consequently, all vertices of \(T_{v^\star}\) lie in a thin radial shell around
radius \(\rho\).  Moreover, every child component of \(v^\star\) can be
traversed from \(v^\star\) with local cost
\[
\dist_T(v^\star,u)
\le
\lambda\eps\cdot \wts(s,u)
\le
\frac{(1+\lambda\eps)\lambda\eps}{1-\lambda\eps}\rho .
\]
Thus the informal \(\lambda\eps\rho\) bound should be understood only up to the
factor \((1+\lambda\eps)/(1-\lambda\eps)=1+O(\eps)\).  In particular, the
generic estimate for going from \(a\) to another child component through
\(v^\star\) is bounded by
\[
\dist_T(a,v^\star)+\dist_T(v^\star,u)
\le
\lambda\eps\rho+
\frac{(1+\lambda\eps)\lambda\eps}{1-\lambda\eps}\rho
=
\frac{2\lambda\eps}{1-\lambda\eps}\rho
=
(2\lambda+O(\eps))\eps\rho .
\]
For any choice $\lambda>1/2$, this exceeds the available additive stretch budget. Indeed, all vertices of $T_{v^\star}$ lie in the same thin shell, so 
$\wts(s,u)=(1+O(\eps))\rho$. Hence the available additive stretch is only $(1+O(\eps))\eps\rho$, whereas the detour through $v^\star$ is $(2\lambda+O(\eps))\eps\rho$.
Since $2\lambda>1$, we conclude that a single shortcut to $a$ together with the MST edges of
$T_{v^\star}$ cannot guarantee $(1+\eps)$-stretch.

\paragraph{Exploiting collinearity.}

The main difficulty is how to reach vertices on other child subtrees of
\(v^\star\) which exclude \(\anc(T_{v^\star})\).
To begin with, we only need to worry about the net points in these subtrees.
Namely, choose a sparse
\(10^{-3}\eps\rho\)-net of \(T_{v^\star}\), and let \(N_{v^\star}\) be the union of these net points.  Every terminal in
those child components can first move to a nearby net point at cost only
\(10^{-3}\eps\rho\), so it remains to provide shallow paths from \(s\) to the
net points.

There are two cases.  If routing through the closest vertex
\(a=\anc(T_{v^\star})\) has enough slack for every net point
\(w\in N_{v^\star}\), namely
\[
\wts(s,a)+\wts(a,w)
\le
(1+\eps)\wts(s,w)-0.005\eps\rho,
\]
then we add the shortcut \((s,a)\) and a star from \(a\) to the net points.  The
extra \(0.005\eps\rho\) slack pays for the final move from a net point to the
terminal it represents.

Otherwise, some net point \(w\) witnesses that the route through \(a\) is almost
tight.  In that case, our key observation is that the three points
\(a,v^\star,w\) should be \emph{nearly collinear}.

Let us explain the meaning of this collinearity.  Since \(a\) is the closest
vertex of \(T_{v^\star}\) to \(s\), we have
\[
\wts(s,w)\ge \wts(s,a)=\rho .
\]
On the other hand, the failure of the previous case gives
\[
\rho+\wts(a,w)
>
(1+\eps)\wts(s,w)-0.005\eps\rho .
\]
Writing \(x=\wts(a,w)/\rho\), this implies
\[
\wts(s,w)
<
\left(1+\frac{x-0.995\eps}{1+\eps}\right)\rho .
\]
In particular \(x\ge 0.995\eps\).  Thus \(a\) and \(w\) are separated by
distance at least \(0.995\eps\rho\), but the route \(s\to a\to w\) is still
almost as short as the direct distance \(\wts(s,w)\).  Geometrically, this means
that the triangle \(saw\) is nearly degenerate: the points \(s,a,w\) are almost
on one line, with point \(a\) lying approximately on the line segment from \(s\) to \(w\). \EMPH{This is where we use (dimension-independent) Euclidean geometry}.

At the same time, the choice of \(v^\star\) implies that every child subtree of
\(v^\star\) is still below the pruning threshold.  Hence both \(a\) and \(w\)
are close to \(v^\star\):
\[
\wts(a,v^\star)\le \lambda\eps\rho,
\qquad
\wts(w,v^\star)
\le
\frac{(1+\lambda\eps)\lambda\eps}{1-\lambda\eps}\rho .
\]
So \(a\) and \(w\) form a base of length \(\Omega(\eps\rho)\), while
\(v^\star\) is within \(O(\eps\rho)\) of both endpoints.  This is the precise
sense in which \(a,v^\star,w\) are nearly collinear: the triangle
\(av^\star w\) is thin, and \(v^\star\) lies close to the line through \(a\) and
\(w\).

\EMPH{This collinearity gives a better estimate for \(\wts(s,v^\star)\) than the
trivial triangle inequality.}  The trivial bound is
\[
\wts(s,v^\star)
\le
\wts(s,a)+\wts(a,v^\star)
\le
(1+\lambda\eps)\rho .
\]
This is not strong enough, because after shortcutting to \(v^\star\) we still
have to pay for the edge from \(v^\star\) to a net point.  The near-collinearity
saves a constant fraction of the \(\eps\rho\) slack.

Indeed, let \(m\) be the midpoint of the segment \(aw\).  The near-tightness of
the route through \(a\), together with Apollonius' theorem in the triangle
\(saw\), shows that \(m\) is closer to \(s\) than the triangle inequality would
suggest:
\[
\wts(s,m)
\le
\left(
1+\frac{x-0.995\eps}{2(1+\eps)}
\right)\rho .
\]
Similarly, applying Apollonius' theorem in the triangle \(av^\star w\) gives an
upper bound on the distance from \(v^\star\) to the same midpoint:
\[
\wts(v^\star,m)
\le
\sqrt{
	\frac{(1+\lambda^2\eps^2)\lambda^2\eps^2}{(1-\lambda\eps)^2}
	-\frac{x^2}{4}
}\cdot \rho .
\]
Combining the two bounds through \(m\), and then optimizing over the possible
values of \(x\), gives
\[
\wts(s,v^\star)
\le
\wts(s,m)+\wts(m,v^\star)
\le
\left(1+\eps(1.424\lambda-0.4925)\right)\rho .
\]
For our choice \(\lambda=0.6\), the coefficient of \(\eps\) is
\[
1.424\lambda-0.4925=0.3619,
\]
which is substantially better than the trivial coefficient \(\lambda=0.6\).

We therefore use \(v^\star\) as the shortcut center in this second case: add the
edge \((s,v^\star)\) and a star from \(v^\star\) to the net points in
\(N_{v^\star}\).  For any terminal \(u\) represented by a closest net point
\(w\in N_{v^\star}\), the resulting route satisfies
\[
\begin{aligned}
	\dist_{G(T_{v^\star})}(s,u)
	&\le
	\wts(s,v^\star)+\wts(v^\star,w)+10^{-3}\eps\rho  \\
	&\le
	\left(1+\eps(1.424\lambda-0.4925)\right)\rho
	+
	\frac{(1+\lambda\eps)\lambda\eps}{1-\lambda\eps}\rho
	+
	10^{-3}\eps\rho  \\
	&\le
	\left(1+\eps(2.427\lambda-0.4915)\right)\rho .
\end{aligned}
\]
Since \(\lambda=0.6\), we have
\[
2.427\lambda-0.4915 < 1.
\]
Thus
\[
\dist_{G(T_{v^\star})}(s,u)
\le
(1+\eps)\rho
\le
(1+\eps)\wts(s,u),
\]
where the last inequality uses the fact that \(a=\anc(T_{v^\star})\) is the
closest vertex of \(T_{v^\star}\) to \(s\).

Finally, the additional star edges only contribute lower-order weight.  The net
spacing is \(10^{-3}\eps\rho\), so
\[
|N_{v^\star}|
\le
\left\lceil
\frac{\widehat{\wts}(T_{v^\star})}{10^{-3}\eps\rho}
\right\rceil .
\]
Each star edge has length \(O(\eps\rho)\), because every child component of
\(v^\star\) is below the pruning threshold.  Therefore the total weight of the
star is \(O(1)\cdot\widehat{\wts}(T_{v^\star})\).  The single root shortcut is
charged using the cutoff condition
\[
\rho=\wts(s,\anc(T_{v^\star}))
\le
\frac{1}{\lambda\eps}\widehat{\wts}(T_{v^\star}) .
\]
Hence each pruning step adds weight at most
\[
\left(\frac{1}{\lambda\eps}+O(1)\right)
\widehat{\wts}(T_{v^\star}) .
\]
Summing over all pruning steps and using the telescoping of
\(\widehat{\wts}(T_{v^\star})\), the total weight is
\[
\left(\frac{1}{\lambda\eps}+O(1)\right)\wts(\mst)
=
\left(\frac{5}{3\eps}+O(1)\right)\wts(\mst),
\]
where the last equality uses \(\lambda=0.6\).

\subsection{Shallow Light Trees in the Euclidean Plane} \label{sec:overview_plane}
\paragraph{A hard example.}
We begin with the simplest obstruction one should keep in mind; this lower bound is due to \cite{Elkin2011SteinerST}.  Put the source
\(s\) at the center of a unit circle, and place
\[
N=\left\lfloor \frac{2\pi}{(1+\zeta)\eps}\right\rfloor
\]
terminals \(p_1,\ldots,p_N\) evenly around the circle, where \(\zeta>0\) is an
arbitrarily small constant.  Thus every terminal is at Euclidean distance \(1\)
from the source, and the distance between consecutive terminals on the circle is
\[
2\sin\!\left(\frac{\pi}{N}\right)
=
(1+\zeta+o_\eps(1))\eps
>
\eps
\]
for sufficiently small \(\eps\).

Now consider any tree \(T\) rooted at \(s\) with root-stretch at most $1+\eps$.  For every
terminal \(p_i\), the root-to-terminal distance in \(T\) must be at most
\[
(1+\eps)\wts(s,p_i)=1+\eps .
\]
We claim that the edge \((s,p_i)\) is forced.  Indeed, if the path from \(s\) to
\(p_i\) first goes from \(s\) to some other boundary terminal \(p_j\), then that
path has length at least
\[
\wts(s,p_j)+\wts(p_j,p_i)
>
1+\eps ,
\]
because \(\wts(s,p_j)=1\) and every two distinct boundary terminals are separated
by more than \(\eps\).  This violates the required stretch.  Hence every  spoke (edge)
\((s,p_i)\) must belong to \(T\), and therefore
\[
\wts(T)\ge N
=
\left(\frac{2\pi}{1+\zeta}-o_\eps(1)\right)\frac{1}{\eps}.
\]

On the other hand, the MST is essentially the boundary cycle, with one boundary
edge deleted, together with one spoke from \(s\) to the circle.  Its weight is
\[
\wts(\mst)
\le
1+(N-1)\cdot 2\sin\!\left(\frac{\pi}{N}\right)
< 1+2\pi.
\]
Consequently every \((1+\eps)\)-shallow tree on this instance has lightness at
least
\[
\frac{\wts(T)}{\wts(\mst)}
\ge
\left(
\frac{2\pi}{(1+\zeta)(2\pi+1)}
-o_\eps(1)
\right)\frac{1}{\eps}.
\]
Since \(\zeta>0\) is an arbitrarily small constant, this gives the asymptotic lower bound
$\frac{2\pi}{2\pi+1}\cdot \frac{1}{\eps}$.

\paragraph{Cone partition and radial paths.}
This example was the main benchmark for us.  It says that the right obstruction
is not the depth-first traversal of the MST.  The MST pays once for going around
the circle and once for reaching the circle, while a shallow tree must buy
roughly one radial spoke per \(\eps\) radians.  Thus the natural target is a
charging scheme in which the shortcut cost is paid for by angular progress of
the MST around the root, rather than by a doubled MST traversal.

The circle example suggests the basic shape of the construction.  In that
example, the hard part is angular: at radius \(1\), a terminal that is
\(\eps\) radians away from the nearest available spoke already cannot be reached
within stretch \(1+\eps\).  Thus a shallow tree has to buy roughly one radial
shortcut for every \(\eps\) radians.  This led us to the following guiding
principle.

\medskip
\noindent
\emph{Instead of shortcutting a traversal of the MST, partition the plane into
	cones of angle 
    about \(\eps\), and inside each cone build a nearly radial
	shortcut path.}

\medskip

If all terminals in one cone lay on a single ray from the source, then the right
thing to do would be obvious: connect them in increasing order of distance from
\(s\).  The path from \(s\) to any terminal would then be exactly shortest. For a general instance, the right analogue is to choose one representative at
each relevant distance scale, and connect all the representatives as a radial path. When the cones have small angles $\theta \approx \epsilon$, we hope that the radial path is close to a straight ray. The key point is that we can pay a sideway deviation for each scale, and if the scales are geometrically increasing, it will sum to about $\theta r$.
We therefore partition the space into annuli whose
radii grow geometrically, say by a factor \(1+\alpha\), and in each nonempty
intersection of a cone and an annulus we choose an \emph{anchor}: the closest
terminal to the source in that region.  The anchors of a fixed cone are then
connected in increasing order of scale, forming a zig--zag path that stays inside
the cone and moves steadily away from \(s\).

This is the first main idea of the algorithm.  The annuli make the construction
scale-free: an error of size \(\eps r\) is affordable for terminals at radius
\(r\), and the geometric levels let us measure all errors relative to the correct
radius.  The cones make the construction directional: \EMPH{the shortcut path is
allowed to ignore the detailed shape of the MST and instead follows the radial
geometry suggested by the lower bound}.

A small but important technical issue appears at this point.  Consecutive
anchors on the zig--zag path cannot be allowed to come from neighboring annular
regions. To see why, suppose two anchors \(x\) and \(y\) lie in the same cone,
with \(\wts(s,x)=\rho\le R=\wts(s,y)\).  If their radial separation
\(R-\rho\) is a constant fraction of the current scale, then
\[
\wts(x,y)
=
(R-\rho)
+ \text{small angular error}.
\]
Thus the edge \(xy\) behaves almost like pure radial progress.  But if \(x\) and
\(y\) come from neighboring annular regions, their radial
separation may be much smaller than the scale \(R\).  Then the same edge may
spend length moving sideways across the cone while making almost no radial
progress.  Repeating this many times would consume the entire stretch budget. Check \Cref{fig:overview-nonadjacent-anchors} for an illustration.
\begin{figure}
	\centering
	\begin{tikzpicture}[
    scale=1.5,
    >=Latex,
    every node/.style={font=\small},
    cone/.style={fill=gray!6, draw=black!35, line width=.5pt},
    ray/.style={draw=black!45, dashed, line width=.55pt},
    annulus boundary/.style={draw=black!28, line width=.45pt},
    bad band/.style={fill=red!8, draw=none},
    good band/.style={fill=blue!7, draw=none},
    skipped band/.style={fill=gray!13, draw=none},
    bad edge/.style={line width=1.25pt, red!75!black},
    good edge/.style={line width=1.25pt, blue!70!black},
    guide/.style={draw=black!30, line width=.45pt},
    measure/.style={<->, thin},
    anchornode/.style={circle, draw=black, fill=white, line width=.7pt,
        minimum size=1.5mm, inner sep=0pt},
    source/.style={circle, fill=black, inner sep=1.4pt}
]

\def\angA{7}
\def\angB{37}
\def\Rmax{4.55}

\begin{scope}
    \coordinate (sL) at (0,0);

    \path[cone]
        (sL) -- (\angA:\Rmax)
        arc[start angle=\angA, end angle=\angB, radius=\Rmax]
        -- cycle;

    \path[bad band]
        (\angA:3.08) arc[start angle=\angA, end angle=\angB, radius=3.08]
        -- (\angB:3.43) arc[start angle=\angB, end angle=\angA, radius=3.43]
        -- cycle;
    \path[bad band]
        (\angA:3.43) arc[start angle=\angA, end angle=\angB, radius=3.43]
        -- (\angB:3.78) arc[start angle=\angB, end angle=\angA, radius=3.78]
        -- cycle;

    \foreach \r in {1.55,2.25,2.85,3.08,3.43,3.78,4.35} {
        \draw[annulus boundary]
        (\angA:\r) arc[start angle=\angA, end angle=\angB, radius=\r];
    }
    \draw[ray] (sL) -- (\angA:\Rmax+.25);
    \draw[ray] (sL) -- (\angB:\Rmax+.25);

    \coordinate (xL) at (34:3.23);
    \coordinate (yL) at (10:3.54);
    \draw[guide] (sL) -- (xL);
    \draw[guide] (sL) -- (yL);
    \draw[bad edge] (xL) -- node[pos=.52, above, sloped] {$xy$} (yL);

    \node[source, label=below left:{$s$}] at (sL) {};
    \node[anchornode, label=above:{$x$}] at (xL) {};
    \node[anchornode, label=right:{$y$}] at (yL) {};

    \draw[measure] (\angA:3) -- node[below, sloped] {$R-\rho$} (\angA:3.6);
    \draw[measure] (10:4.05) arc[start angle=10, end angle=34, radius=4.05];
    \node at (22:4.35) {$\theta R$};

    \node[align=center] at (1.95,-.72)
        {neighboring levels:\\small radial progress};
\end{scope}

\begin{scope}[shift={(5.95,0)}]
    \coordinate (sR) at (0,0);

    \path[cone]
        (sR) -- (\angA:\Rmax)
        arc[start angle=\angA, end angle=\angB, radius=\Rmax]
        -- cycle;

    \path[good band]
        (\angA:2.05) arc[start angle=\angA, end angle=\angB, radius=2.05]
        -- (\angB:2.40) arc[start angle=\angB, end angle=\angA, radius=2.40]
        -- cycle;
    \path[skipped band]
        (\angA:2.40) arc[start angle=\angA, end angle=\angB, radius=2.40]
        -- (\angB:3.40) arc[start angle=\angB, end angle=\angA, radius=3.40]
        -- cycle;
    \path[good band]
        (\angA:3.40) arc[start angle=\angA, end angle=\angB, radius=3.40]
        -- (\angB:4.20) arc[start angle=\angB, end angle=\angA, radius=4.20]
        -- cycle;

    \foreach \r in {1.55,2.05,2.40,2.85,3.40,4.20,4.35} {
        \draw[annulus boundary]
        (\angA:\r) arc[start angle=\angA, end angle=\angB, radius=\r];
    }
    \draw[ray] (sR) -- (\angA:\Rmax+.25);
    \draw[ray] (sR) -- (\angB:\Rmax+.25);

    \coordinate (xR) at (34:2.23);
    \coordinate (yR) at (10:3.92);
    \draw[guide] (sR) -- (xR);
    \draw[guide] (sR) -- (yR);
    \draw[good edge] (xR) -- node[pos=.52, above, sloped] {$xy$} (yR);

    \node[source, label=below left:{$s$}] at (sR) {};
    \node[anchornode, label=above:{$x$}] at (xR) {};
    \node[anchornode, label=below right:{$y$}] at (yR) {};

    \draw[measure] (\angA:2.1) -- node[below, sloped] {$R-\rho$} (\angA:3.92);
    \draw[measure] (10:4.05) arc[start angle=10, end angle=34, radius=4.05];
    \node at (22:4.35) {$\theta R$};

    \node[align=center] at (1.95,-.72)
        {non-adjacent levels:\\mostly radial progress};
\end{scope}

\end{tikzpicture}
	\caption{If we connect two anchor points in two neighboring annulli, then the radial path may zig-zag and accumulate too much error.}
	\label{fig:overview-nonadjacent-anchors}
\end{figure}

For this reason the radial path uses only a maximal subsequence of
\emph{non-adjacent} anchor levels.  In each cone, we list all annular levels that
contain an anchor and greedily keep levels with a gap of at least one level
between consecutive kept anchors.  The kept anchors form the radial zig--zag
path.  Every skipped level is adjacent to a kept level, so it is still close
enough in scale to be served by a local shortcut to the corresponding kept
anchor.

This is the conceptual role of the annular decomposition.  It is not merely a
bookkeeping device for the proof; it is what allows the shortcut path in each
cone to look radial at every scale.  The construction buys one outward-moving
path per cone, as the lower bound suggests, but it samples that path only at
well-separated distance scales so that each shortcut edge pays mostly for radial
progress rather than for uncontrolled angular motion.

\paragraph{Charging radial paths to MST.}
The next question is how to pay for the radial paths.  The lower-bound example
already hints at the right comparison: a radial shortcut inside a cone should be
charged to the amount by which the MST travels angularly through that cone.  In
the circle instance this is exactly what happens.  The MST runs around the
circle, crossing every cone once, while the shallow tree buys one spoke per
cone.  Thus the shortcut cost is naturally compared against the MST length
inside the same angular sector.

Let us first ignore all boundary effects and look at the ideal calculation.
Fix a cone \(C\) of angle
\[
\theta \approx \eps
\]
and suppose that the construction buys one radial shortcut in \(C\) out to
distance \(R\) from the source.  This shortcut has length about \(R\).  If the
MST crosses the same cone at radius about \(R\), then it must travel from one
side of the cone to the other.  The Euclidean distance between the two boundary
rays at radius \(R\) is at least 
\[
2R\sin(\theta/2)
=
(1-o_\eps(1))\theta R .
\]
Thus a single MST crossing through this cone has length at least roughly
\(\theta R\), while the radial shortcut we want to buy has length roughly \(R\).
Equivalently,
\[
\text{radial shortcut cost}
\;\approx\;
R
\;\approx\;
\frac{1}{\theta}\cdot \theta R
\;\lesssim\;
\frac{1}{\eps}\cdot
\text{MST crossing cost}.
\]
This is the basic reason one should expect lightness on the order of
\(1/\eps\).  If every radial shortcut could be matched to a nearly disjoint MST
crossing in the same cone and at the same scale, then summation of the above estimate
would give
\[
\sum_{\text{radial shortcuts }P}\wts(P)
\;\lesssim\;
\frac{1}{\theta}
\sum_{\text{matched crossings }Q}\wts(Q)
\;\lesssim\;
\frac{1}{\eps}\wts(\mst).
\]
The rest of the construction is designed to make this naive charging picture
true up to lower-order losses. Afterwards, we will explain further how to bypass the $1/\eps$ bound. 

For a general MST, however, the dream statement behind this calculation is not
automatic:  We would like to say that whenever a cone \(C\) has a terminal anchor (an anchor is called \emph{terminal} if it is the last one on a radial path)
in an annular region \(A\), the MST component containing that anchor crosses
\(C\cap A\) from one side of the cone to the other.  If this were always true,
then the above \(R\) versus \(\theta R\) calculation would immediately give the
right \(1/\eps\) scale.

This dream statement is false for disjoint annuli.  The reason is that cutting
the MST by annular boundaries can shatter it into many small components.  A
terminal anchor (which is the last anchor on a radial path) may lie in a tiny piece of \(\mst\cap A\) that enters and exits
through the radial boundaries of the annulus, or that barely touches the annulus
near one of its boundaries.  Such a piece need not cross the cone at all.  It may
contain the anchor, but it carries no useful angular motion against which the
radial shortcut can be charged.

This is the point at which the construction departs from a literal annular
partition.  Instead of using disjoint annuli, we use overlapping annular regions.
Each region has a central part and two thin buffer zones.  A component of
\(\mst\) restricted to such a region is processed only if it reaches the central
part; components that live entirely inside the buffers are ignored at that scale.
The buffers are chosen sparsely, so the MST length counted in overlaps is only a
small extra fraction of \(\wts(\mst)\).

The overlap has two conceptual roles.  First, it prevents terminals near an
annular boundary from being trapped in insignificant boundary fragments.  If a
terminal lies in a buffer, then it also lies in one of the neighboring
overlapping regions, and the path in the MST from that terminal toward the root
must leave the buffer on one side or the other.  Thus the terminal is assigned to
a region in which its component reaches the non-buffer part.  In this sense every
terminal is seen at a scale where the relevant MST component is not just a
boundary artifact.

Second, the overlap gives every processed component a definite amount of radial
thickness.  A component that touches the central part of an overlapping annulus
and also connects to the rest of the MST must cross one of the buffer zones.  So
it contains length proportional to the buffer width. This lower bound is what allows us to treat
small exceptional shortcut costs as lower-order: even if a component owns only a
few terminal anchors, the total cost of the corresponding radial paths can be
charged to the fact that the component itself has nontrivial radial size.

After this modification, the charging picture becomes much closer to the circle
example.  A terminal anchor is owned by a connected component of the MST inside
one overlapping annular region.  If this component owns terminal anchors in many
cones, then every non-extreme cone between them is crossed by the component:
there is an anchor on one angular side, another anchor on the other angular
side, and the tree path between the two must pass through the cone. Those
non-extreme cones are precisely the ones whose radial paths we keep as
\emph{active}. Refer to \Cref{fig:overview-active-cone-crossing} for an illustration. 

\begin{figure}
	\centering
	\begin{tikzpicture}[
    scale=1.5,
    >=Latex,
    every node/.style={font=\small},
    annulus/.style={fill=gray!5, draw=none},
    buffer/.style={fill=gray!14, draw=none},
    active cone/.style={fill=blue!8, draw=none},
    boundary/.style={draw=black!35, line width=.5pt},
    cone boundary/.style={draw=black!38, dashed, line width=.55pt},
    mst/.style={line width=1.05pt, black!70},
    crossing/.style={line width=1.45pt, blue!70!black},
    radial/.style={line width=1.25pt, red!75!black},
    anchornode/.style={circle, draw=black, fill=white, line width=.7pt,
        minimum size=1.5mm, inner sep=0pt},
    activeanchornode/.style={circle, draw=blue!70!black, fill=white, line width=.9pt,
        minimum size=1.5mm, inner sep=0pt},
    inactiveanchornode/.style={circle, draw=black!45, fill=gray!14, line width=.6pt,
        minimum size=1.5mm, inner sep=0pt},
    crossingpoint/.style={circle, fill=blue!70!black, inner sep=1.35pt},
    source/.style={circle, fill=black, inner sep=1.4pt}
]

\def\angMin{15}
\def\angMax{85}
\def\inner{2.25}
\def\bin{2.75}
\def\bout{4.65}
\def\outer{5.15}
\def\actA{43}
\def\actB{57}

\path[annulus]
    (\angMin:\inner) arc[start angle=\angMin, end angle=\angMax, radius=\inner]
    -- (\angMax:\outer) arc[start angle=\angMax, end angle=\angMin, radius=\outer]
    -- cycle;

\path[buffer]
    (\angMin:\inner) arc[start angle=\angMin, end angle=\angMax, radius=\inner]
    -- (\angMax:\bin) arc[start angle=\angMax, end angle=\angMin, radius=\bin]
    -- cycle;
\path[buffer]
    (\angMin:\bout) arc[start angle=\angMin, end angle=\angMax, radius=\bout]
    -- (\angMax:\outer) arc[start angle=\angMax, end angle=\angMin, radius=\outer]
    -- cycle;

\path[active cone]
    (\actA:\inner) arc[start angle=\actA, end angle=\actB, radius=\inner]
    -- (\actB:\outer) arc[start angle=\actB, end angle=\actA, radius=\outer]
    -- cycle;

\foreach \r in {\inner,\bin,\bout,\outer} {
    \draw[boundary]
    (\angMin:\r) arc[start angle=\angMin, end angle=\angMax, radius=\r];
}
\foreach \a in {15,29,43,57,71,85} {
    \draw[cone boundary] (\a:\inner-.15) -- (\a:\outer+.20);
}

\node at (88:3.80) {$A_h$};
\node[gray!65] at (22:2.48) {buffer};
\node[gray!65] at (78:4.92) {buffer};
\node[blue!70!black] at (50:5.48) {$C_r$};

\coordinate (aone) at (20:3.75);
\coordinate (aL)   at (34:4.05);
\coordinate (u)    at (43:3.55);
\coordinate (z1)   at (47:3.15);
\coordinate (z2)   at (53:4.25);
\coordinate (v)    at (57:3.80);
\coordinate (aR)   at (66:4.35);
\coordinate (am)   at (80:3.55);
\coordinate (br)   at (52:3.79);
\coordinate (aMid) at (50:3.3);
\coordinate (s)    at (0,0);

\draw[radial]  (s) -- (aMid);
\node[red!75!black] at (65:1.75) {$P(C_r)$};

\draw[mst]
    (aone) .. controls (23:4.35) and (29:3.55) .. (aL)
    .. controls (38:4.45) and (39:3.15) .. (u);
\draw[crossing]
    (u) .. controls (z1) and (z2) .. (v);
\draw[mst]
    (v) .. controls (61:3.15) and (62:4.95) .. (aR)
    .. controls (72:4.45) and (77:3.15) .. (am);
\draw[mst] (br) -- (aMid);

\node[black!70] at (82:3.10) {$T$};

\node[source, label=below left:{$s$}] at (s) {};
\node[inactiveanchornode] at (aone) {};
\node[anchornode, label=right:{$a_{r-1}$}] at (aL) {};
\node[activeanchornode, label=left:{$a_r$}] at (aMid) {};
\node[anchornode, label=above right:{$a_{r+1}$}] at (aR) {};
\node[inactiveanchornode] at (am) {};

\node[crossingpoint, label=below:{$u$}] at (u) {};
\node[crossingpoint, label=above: {$v$}] at (v) {};

\end{tikzpicture}
	\caption{In this picture, the gray curves are $\mst$ edges, and $a_{r-1}, a_r, a_{r+1}$ are terminal anchors on a connected component of $\mst$ in region $A_h$, and $P(C_r)$ is an active radial path from $s$ to $a_r$. As the tree path $\mst[u, v]$ crosses the cone $C_r$, we can charge the cost of path $P(C_r)$ in the SLT to tree path $\mst[u, v]$.}
	\label{fig:overview-active-cone-crossing}
\end{figure}

For an active cone, the earlier calculation can now be applied.  The active
radial path reaches some radius \(R\), so its cost is essentially \(R\).  The
owning MST component crosses the cone at the same scale, and this crossing has
angular width about \(\theta R\).  Hence
\[
\wts(\text{active radial path})
\;\lesssim\;
\frac{1}{\theta}\,
\wts(\text{corresponding MST crossing})
\;\approx\;
\frac{1}{\eps}\,
\wts(\text{corresponding MST crossing}).
\]
Because the processed annular regions overlap only sparsely, the total MST
length available for these crossings is still only \((1+o_\eps(1))\wts(\mst)\).
Therefore the active part of the shortcut system has total weight
\[
\sum_{\text{active paths }P}\wts(P)
\;\lesssim\;
\frac{1+o_\eps(1)}{\eps}\wts(\mst).
\]

The remaining cases are deliberately thrown away from the main charging
argument.  If a component owns only a few terminal anchors, all of its radial
paths are declared inactive.  If it owns many terminal anchors, only the two
extreme radial paths are declared inactive, since the component need not cross
the cones at its angular ends.  The inactive paths are affordable because each
processed component has the radial-thickness lower bound supplied by the
overlapping annuli.  They contribute only lower-order weight compared to
\(\wts(\mst)/\eps\).

Thus the overlapping annuli are not a technical nuisance but the mechanism that
makes the lower-bound intuition usable.  They let us ignore boundary fragments,
assign every terminal to a meaningful scale, and isolate the radial paths whose
cost can truly be charged to MST crossings through cones.  Once this is achieved,
the simple ratio
\[
\frac{\text{radial scale }R}{\text{angular crossing scale }\eps R}
\approx
\frac{1}{\eps}
\]
explains why the construction has lightness at most $\frac{1+o_\eps(1)}{\eps}$. 

\paragraph{Bypassing $1/\epsilon$ lightness.}
The charging scheme above explains why the lightness should be bounded by around $1/\epsilon$, 
but it only gives a coefficient close to \(1\).  Indeed, in the ideal
situation, a radial path of length \(R\) is charged to an MST crossing of length
about \(\theta R\), where \(\theta\approx\eps\).  This gives
\[
R
\approx
\frac{1}{\theta}\cdot \theta R
\approx
\frac{1}{\eps}\cdot \text{MST crossing length}.
\]
So the naive crossing charge has no room to beat \(1/\eps\).

The lower-bound example tells us what this calculation is missing.  On the unit
circle, the MST does not only pay for going around the circle.  It also pays one
extra unit of length to connect the source to the circle.  Thus the shallow tree
pays about \(2\pi/\eps\), but the MST pays about
\(
2\pi+1,
\)
not merely \(2\pi\).  This is why the lower-bound coefficient is
\(
\frac{2\pi}{2\pi+1}
\)
rather than \(1\).  To get a coefficient below \(1\) in the upper bound, we need
to recover an analogue of this extra source-to-farthest-point cost for an
arbitrary MST.

Let
\(
L=\max_{t\in S}\wts(s,t)
\)
be the distance from the source to the farthest terminal.  Any MST contains a
path from \(s\) to such a farthest terminal, and along this path the distance
from \(s\) increases from \(0\) to \(L\).  Thus the MST has one unit of radial
progress at scale \(L\), in the same way that the circle example has the single
spoke from the source to the circle.  The difficulty is that, in a general
instance, this radial progress need not be carried by edges disjoint from the
cone crossings.  A single slanted piece of the MST can simultaneously move
across a cone and move outward.  Therefore we cannot simply add an extra \(L\)
to the crossing cost.

The way to extract the missing contribution is to use the {\em Pythagorean theorem}.
For each active cone crossing, look at coordinates adapted to the cone: one
coordinate is the transverse (angular) direction across the cone, and the other is the
radial direction away from the source.  The naive charge uses only the
transverse width of the crossing.  If the same MST path also makes radial
progress \(A\), while its transverse fluctuation is \(B\), then the Euclidean
length of the relevant subpath is at least
\[
\sqrt{A^2+B^2}.
\]
After subtracting the transverse part already used by the naive crossing charge,
there remains an additional contribution of size
\[
\sqrt{A^2+B^2}-B .
\]
This is the quantitative substitute for the extra source-to-cycle edge in the
lower-bound example.

Summing this estimate over all active crossings gives the following heuristic
calculation.  Let \(W\) be the total transverse width of the active cone
crossings.  This is the quantity that pays for the radial shortcut paths:
\[
\sum_{\text{active radial paths }P}\wts(P)
\approx
\frac{W}{\eps}.
\]
The total transverse fluctuation $W$ over all cones is at most the circumference of
the circle of radius \(L\), namely \(2\pi L\).  On the other hand, the
active crossings, together with the unused part of the MST, must account for
radial progress from \(0\) to \(L\). 
Hence the Pythagorean estimate yields an extra MST contribution of
$\sqrt{W^2+L^2}-W$.
Thus the MST is not merely paying $W$, it is paying at least $\sqrt{W^2+L^2}$.
Since $W\le 2\pi L$, we obtain $$\frac{W}{\wts(\mst)}
\le
\frac{W}{\sqrt{W^2+L^2}}
\le
\frac{2\pi L}{\sqrt{(2\pi L)^2+L^2}}
=
\frac{2\pi}{\sqrt{4\pi^2+1}}
=
0.987570492\ldots.$$

This is the source of the constant in \Cref{thm:plane-slt}.  The active radial paths have
total weight approximately
\(
\frac{W}{\eps},
\)
but the MST weight is larger than \(W\) by the Pythagorean radial-progress
excess.  Consequently the active part contributes at most
\[
\left(
\frac{2\pi}{\sqrt{4\pi^2+1}}
+o_\epsilon(1)
\right)
\frac{\wts(\mst)}{\eps}.
\]
The inactive shortcuts and the overlap losses are lower-order with the
$\epsilon$-dependent parameter choice, so the same asymptotic constant survives
in the final bound.

\subsection{Lower Bounds}
We next highlight the key ideas behind our lower bounds.
The starting point is the aforementioned lower bound of Elkin and Solomon~\cite{ElkinS15},
which already captures the fundamental obstruction in the Euclidean plane.
Place the source $s$ at the center of a unit circle, and distribute terminals evenly
along the circle boundary, with consecutive terminals at distance slightly larger than
$\eps$.
Since every boundary point is exactly distance $1$ from the source, routing through
any other terminal already exceeds the $(1+\eps)$ stretch budget.
Consequently, every $(1+\eps,\beta)$-SLT is forced to connect the source directly to
every boundary point.
The MST consists essentially of the boundary cycle together with a
single radial spoke, yielding the lower bound
$\beta \ge \left(\frac{2\pi}{2\pi+1}+o_\eps(1)\right)\frac1\eps$.

This basic construction also suggests a natural guiding principle for proving lower
bounds for SLTs: \EMPH{force as many terminals as possible to connect directly to the root, while
keeping the MST as light as possible}.
Realizing this principle, however, turns out to be rather challenging.
Each of our subsequent constructions overcomes a different obstacle, introducing a new
geometric idea that eventually leads to the following leading constants in the lightness lower bounds: $1$ (3D), $12/11$ (4D), $4/3$ (high dimensions), and finally
back to $2$ in high-dimensional $\ell_1$-normed spaces.

Our first new lower bound shows that moving from 2 to 3 dimensions already requires crossing the natural threshold of 1 in the leading constant.
Instead of a one-dimensional circle, we construct a nearly square
$N\times N$ lattice on a thin patch of the unit sphere, with
$N=\Theta(\eps^{-3/4})$ and with adjacent lattice points at distance
$\eps+O(\eps^{3/2})$.
All distinct lattice points are separated by more than $\eps$, while each lies at distance exactly $1$ from the root $s$. Thus, as in the planar circle example, every $(1+\eps,\beta)$-SLT is forced to contain the edge from $s$ to each lattice point. Hence its
weight is at least $(N+1)^2$.
On the other hand, the lattice admits a spanning tree of weight at
most $1+N(N+2)(\eps+O(\eps^{3/2}))$, hence every such SLT has lightness $\beta$ at least
$$\frac{(N+1)^2}{1+N(N+2)(\eps+O(\eps^{3/2}))} =
\left(1-o_\eps(1)\right)\frac1\eps,$$
since $N=\Theta(\eps^{-3/4})$.
The 3-dimensional construction, however, has an inherent limitation.
Every additional lattice point forced to connect directly to the root also contributes essentially one more lattice edge to the MST. \EMPH{Thus, this construction cannot produce a leading constant exceeding 1}.

This naturally raises the question of whether the constant 1 can already be surpassed in 3 dimensions. Our intuition is that any such improvement would require fundamentally different geometric ideas. \EMPH{Motivated by this, we move to 4 dimensions, where we overcome the constant-1 barrier}. Instead of arranging the terminals as isolated lattice points, our 4-dimensional construction groups them into a chain of regular tetrahedra on the unit sphere.
Each tetrahedron contributes four vertices that are all forced to connect directly to the root, yet the MST pays only a constant amount to connect one tetrahedron to the next and to span the tetrahedron itself using an additional auxiliary point near the centroid of each tetrahedron.
Thus, a single local geometric gadget now generates several forced root edges while
incurring only a comparatively small increase in the MST weight.
This breaks the one-for-one correspondence inherent in the 3-dimensional lattice construction, producing the improved lower bound $(12/11-o_\eps(1))/\eps$.

The 4-dimensional construction naturally extends to arbitrary dimensions by
replacing tetrahedra with regular simplices. As before, every simplex vertex is
forced to connect directly to the root, while the MST connects all simplex
vertices through a single auxiliary point. \EMPH{Higher-dimensional simplices yield increasingly efficient local gadgets, reducing the MST cost per forced root
edge}. Consequently, the achievable ratio converges to $4/3$, yielding the lower bound $(4/3-o_\eps(1))/\eps$ in sufficiently high dimensions.

Finally, we show that \EMPH{this entire Euclidean approach breaks down in general
normed spaces}. The $\ell_1$ lower bound abandons the simplex construction
altogether and instead uses a completely different family of local gadgets,
exploiting geometric properties unique to the $\ell_1$ metric. Each gadget
consists of a star centered at one auxiliary point together with $M$ attached terminal leafs. The terminals all lie at distance $1$ from the root, while each lies at distance $(1+o_\eps(1))\eps/2$ from their auxiliary point. At the same time, the
geometry guarantees that every attached terminal satisfies the forced-spoke
criterion: routing to it through any other point already violates the
$1+\eps$ stretch constraint. Hence every $(1+\eps,\beta)$-SLT is forced to
pay one unit for each of the $M$ terminals.
The MST, however, can connect the same $M$ terminals through their common
auxiliary point, paying only $(1+o_\eps(1))M\eps/2$ for the gadget, plus one
additional $\eps$-scale connection to link the auxiliary points together. Thus,
each gadget contributes $M$ forced root edges but only
$(1+o_\eps(1))\eps(M/2+1)$ to the MST. As the dimension grows, the additive
overhead becomes negligible, the ratio tends to $2/\eps$, and we obtain the
lower bound $(2-o_\eps(1))/\eps$. This shows that our improvement below the longstanding $2/\eps$ barrier is inherently Euclidean and does not extend to
high-dimensional $\ell_1$-normed spaces.
\section{Improved SLT in Euclidean Spaces of Arbitrary Dimension}
\label{sec:highdim}
In this section, we prove \Cref{thm:high-dim-slt}; namely, we are given a set $S$ of $n$ terminal points in Euclidean space of \EMPH{arbitrary dimension}, a root $s \in S$, and a parameter $\eps > 0$, and the goal is to build an SLT on  $S$ with the required bounds on the root-stretch and lightness. Let $\mst=\mst(S)$ denote the minimum spanning tree of point set $S$, and write $\wts(x,y)=\|x-y\|_2$ for Euclidean distance.

Our algorithm proceeds via the following iterative pruning framework.

\paragraph{Iterative pruning framework.} Initialize $T$ to be an $\mst$ of $S$, rooted at $s$. For every vertex $v\in S\setminus \{s\}$, let $T_v$ be the subtree of $T$ rooted at $v$, let $\p(v)$ denote the parent of $v$ in $T$, and let $\anc(T_v)$ be a vertex $u\in V(T_v)$ that minimizes the Euclidean distance $\wts(s,u)$.  Define notation $\widehat{\wts}(T_v)=\wts(T_v)+\wts(v,\p(v))$. Note that if we delete $T_v$ from the tree $T$, then the weight of the tree decreases by $\widehat{\wts}(T_v)$, that is, $\widehat{\wts}(T_v)=\wts(T)-\wts(T-T_v)$. For any weighted graph $H$, we use $\wts(H)$ to denote the weight of all edges in $H$.

We design a local algorithm to construct graph $G$, which iteratively chooses a subtree $T_{v}$ of $T$, rooted at a vertex $v\in V(T)$, adds an edge $e_v$ between the root $s$ and some vertex $u\in V(T_v)$ and some additional (shortcut) edges induced by $V(T_v)$, such that for any vertex $t\in V(T_v)$, there is an $st$-path in graph $G$ of weight at most $(1+\eps)\wts(s,t)$ using the edge $e_v$ and the subgraph of $G$ induced by $V(T_v)$. Let \(G(T_v)\) denote the graph consisting of the added edges for \(T_v\). After that, we will prune $T_v$ from $T$ and argue that the total edge weight we add to the graph $G$, which is $\wts(G(T_v))$, is at most $\frac{1}{\lambda\epsilon}\widehat{\wts}(T_v)$ for a fixed parameter $\lambda = 0.6$. Finally, the SLT of $S$ is a shortest-path tree of $G$ rooted at $s$. This tree preserves all root-distance bounds and has weight at most \(\wts(G)\). This is sufficient for establishing \Cref{thm:high-dim-slt}. Next, we describe how to choose a subtree $T_v$ and design the graph $G(T_v)$ in each iteration.

\paragraph{Choosing a subtree $T_{v^\star}$.} 

We choose a vertex $v$ based on the following criteria; see \Cref{fig:high-dim-pruning} for an illustration.

\begin{figure}[t]
	\centering
		\begin{tikzpicture}[
		scale=1.05,
		every node/.style={font=\small},
		Hnode/.style={circle, fill=blue!70!black, inner sep=1.7pt},
		Unode/.style={circle, fill=black!55, inner sep=1.5pt},
		special/.style={circle, draw=black, fill=white, inner sep=2.1pt},
		Hedge/.style={line width=1.1pt, blue!70!black},
		Uedge/.style={line width=.85pt, black!55, dashed},
		pedge/.style={line width=1.15pt, red!75!black},
		]
		
		\coordinate (r)   at (0,2.75);
		\coordinate (a)   at (-0.55,2.10);
		\coordinate (b)   at (0.55,2.10);
		\coordinate (c)   at (-1.15,1.45);
		\coordinate (d)   at (-0.10,1.45);
		\coordinate (e)   at (1.05,1.45);
		\coordinate (pu)  at (0.00,0.78);
		
		\draw[Uedge] (r) -- (a) -- (c);
		\draw[Uedge] (a) -- (d);
		\draw[Uedge] (r) -- (b) -- (e);
		\draw[Uedge] (d) -- (pu);
		
		\node[Unode] at (r) {};
		\node[Unode] at (a) {};
		\node[Unode] at (b) {};
		\node[Unode] at (c) {};
		\node[Unode] at (d) {};
		\node[Unode] at (e) {};
		
		\node[black!60, align=center] at (2.75,2.05)
		{upper tree\\containing $\p(v^\star)$};
		
		\coordinate (u) at (0,0);
		
		\draw[pedge] (pu) -- (u);
		
		\coordinate (v1) at (-0.90,-0.80);
		\coordinate (v2) at (0.00,-0.95);
		\coordinate (v3) at (0.85,-0.75);
		\coordinate (v4) at (-1.35,-1.65);
		\coordinate (v5) at (-0.45,-1.70);
		\coordinate (v6) at (0.45,-1.78);
		\coordinate (v7) at (1.25,-1.55);
		
		\draw[Hedge] (u) -- (v1);
		\draw[Hedge] (u) -- (v2);
		\draw[Hedge] (u) -- (v3);
		\draw[Hedge] (v1) -- (v4);
		\draw[Hedge] (v1) -- (v5);
		\draw[Hedge] (v2) -- (v6);
		\draw[Hedge] (v3) -- (v7);
		
		\node[Hnode] at (v1) {};
		\node[Hnode] at (v2) {};
		\node[Hnode] at (v3) {};
		\node[Hnode] at (v4) {};
		\node[Hnode] at (v5) {};
		\node[Hnode] at (v6) {};
		\node[Hnode] at (v7) {};
		
		\node[Unode, label=right:{$\p(v^\star)$}] at (pu) {};		
		\node[special, label=right:{$v^\star$}] at (u) {};

		\draw[blue!65!black, dashed, rounded corners=8pt]
		(-1.65,-1.95) rectangle (1.55,0.4);
		
		\node[blue!70!black] at (0,-2.20) {$T_{v^\star}$};
		
	\end{tikzpicture}
	\caption{One pruning step in the high-dimensional construction.  The selected subtree \(T_{v^\star}\) is separated from the remaining upper part of the current MST by the parent edge \((v^\star,\p(v^\star))\).}
	\label{fig:high-dim-pruning}
\end{figure}

\begin{lemma}\label{lem:subtree}
    If $|S|\geq 2$ and $0<\eps\leq \frac{1}{\lambda}$, then there exists a vertex $v\in V(T)$, $v\neq s$, satisfying 
\begin{equation}\label{eq:cutoff}
   \widehat{\wts}(T_{v}) \geq \lambda\eps\cdot  \wts(s,\anc(T_{v})) .
\end{equation}
\end{lemma}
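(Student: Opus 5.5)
The plan is to exhibit the required vertex directly: take a child $v$ of the root $s$ in the current tree $T$. Such a child exists because $|S|\ge 2$ and $T$ is a spanning tree of $S$ rooted at $s$. For this $v$ the subtree $T_v$ contains the edge $(v,\p(v))=(v,s)$ in its hatted weight. So $\widehat{\wts}(T_v)\ge \wts(s,v)$, and more generally $\widehat{\wts}(T_v)$ is at least the weight of the tree path from $s$ to any vertex of $T_v$.

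The key step is to set $a=\anc(T_v)\in V(T_v)$ and bound $\widehat{\wts}(T_v)$ from below by $\wts(s,a)$. The tree path from $s$ to $a$ consists of the edge $(s,v)$ followed by the path from $v$ to $a$ inside $T_v$. Every edge of this path is counted in $\widehat{\wts}(T_v)=\wts(T_v)+\wts(v,s)$. By the triangle inequality in the Euclidean metric, the length of this path is at least $\wts(s,a)$. Hence
\[
\widehat{\wts}(T_v)\ \ge\ \dist_T(s,a)\ \ge\ \wts(s,a)\ =\ \wts(s,\anc(T_v)).
\]

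Finally, use the hypothesis $\eps\le 1/\lambda$, i.e.\ $\lambda\eps\le 1$, to conclude that $\widehat{\wts}(T_v)\ge \wts(s,\anc(T_v))\ge \lambda\eps\cdot\wts(s,\anc(T_v))$, which is \eqref{eq:cutoff}. There is no real obstacle. The only care needed is that the lemma is applied to the current, possibly already pruned, tree $T$ rather than the original MST. The argument uses only that $T$ is a tree on a vertex set containing $s$ and at least one other vertex, with edges weighted by Euclidean lengths, so it holds at every iteration as long as at least one non-root vertex remains.
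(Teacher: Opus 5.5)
Your proof is correct and takes the paper's approach: choose a child $v$ of $s$, then use $\lambda\eps\le 1$. The only small difference is the middle step. The paper bounds $\wts(s,\anc(T_v))\le \wts(s,v)\le \widehat{\wts}(T_v)$ using the minimality of $\anc(T_v)$ and the parent edge alone, whereas you use the tree path $s\to v\to\anc(T_v)$ and the triangle inequality, and both work equally well.
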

\begin{proof}
We show that every child of $s$ satisfies \Cref{eq:cutoff}. Let $v$ be a child of $s$. 
then $\wts(s,\anc(T_{v})) \leq \wts(s,v)$. Since $\p(v)=s$, then $\widehat{\wts}(T_v)\geq \wts(v,\p(v))=\wts(s,v)$. Consequently, $\wts(s,\anc(T_{v})) \leq \wts(s,v)\leq \frac{1}{\lambda\eps} \cdot \widehat{\wts}(T_{v})$ for all $0<\eps\leq \frac{1}{\lambda}$.
\end{proof}
In a post-order traversal of $T$, we choose a first vertex $v^\star$ that satisfies \Cref{eq:cutoff}.
In particular, the children of $v^\star$ (if any) do not satisfy \Cref{eq:cutoff}.

\paragraph{Constructing a low root-stretch and light subgraph for $V(T_{v^\star})\cup \{s\}$.}
We construct an SLT for the vertices in $V(T_{v^\star})\cup \{s\}$, using $T_{v^\star}$, one edge between $s$ and some vertex in $V(T_{v^\star})$, and possibly some additional edges induced by $V(T_{v^\star})$. We distinguish between two cases depending on the relative position of the vertices $s$, $v^\star$ and $\anc(T_{v^\star})$ in the Euclidean space. In all cases, we need to construc a graph $G(T_{v^\star})$ such that $G(T_{v^\star})$ has root-stretch at most $1+\eps$, and its weight is bounded by $\wts(G(T_{v^\star}))\leq \brac{\frac{1}{\lambda\eps} + O(1)} \cdot \widehat{\wts} (T_{v^\star})$. We distinguish between several cases for which we will employ different strategies.

Construct a $\brac{10^{-3}\eps \rho}$-net of the subtree $T_{v^\star}$,
called $N_{v^\star}$. Consider two cases below.
\begin{enumerate}
	\item If, for every vertex $w\in N_{v^\star}$, we have
	$$\wts(s, \anc(T_{v^\star})) + \wts(\anc(T_{v^\star}), w) \leq (1+\epsilon)\wts(s, w) - 0.005\epsilon\rho ,$$
	then define the graph $G(T_{v^\star})$ to be the union of $T_{v^\star}$, the edge $(s,\anc(T_{v^\star}))$, and the star centered at $\anc(T_{v^\star})$ with leaves in $N_{v^\star}$.
	
	\item Otherwise, there exists a vertex $w\in N_{v^\star}$ such that 
	$$\wts(s, \anc(T_{v^\star})) + \wts(\anc(T_{v^\star}), w) > (1+\epsilon)\wts(s, w) - 0.005\epsilon\rho .$$
	Then define the graph $G(T_{v^\star})$ to be the union of $T_{v^\star}$, the edge $(s, v^\star)$, and the star centered at $v^\star$ with leaves in $N_{v^\star}$.
\end{enumerate} 

This completes the construction of $G(T_{v^\star})$. It remains to analyze the root stretch and the lightness of $G(T_{v^\star})$. Define $\rho = \wts(s, \anc(T_{v^\star}))$. Before going to the analysis, we need the following upper bound on distances $\wts(s, u)$ for all $u\in V(T_{v^\star})$.
\begin{lemma}\label{lem:dist-bnd}
	For every $u\in V(T_{v^\star})$, we have
	$$\frac{1 - \lambda\epsilon}{1+\lambda\epsilon}\cdot\rho\leq \wts(s, u)\leq \frac{1 + \lambda\epsilon}{1-\lambda\epsilon}\cdot\rho .$$
\end{lemma}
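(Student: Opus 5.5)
The plan is to reproduce the argument sketched in the technical overview, working only from the minimality of $v^\star$ in the post-order traversal and the triangle inequality. Write $a=\anc(T_{v^\star})$, so $\rho=\wts(s,a)$. Since $v^\star$ is the first vertex in post-order satisfying \Cref{eq:cutoff}, every child $x$ of $v^\star$ violates it:
\[
\widehat{\wts}(T_x)<\lambda\eps\cdot\wts(s,\anc(T_x)).
\]

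\textbf{Step 1: bound $\dist_T(v^\star,a)$.} If $a=v^\star$, this distance is $0$. Otherwise, $a$ lies in $T_x$ for some child $x$ of $v^\star$. The tree path from $v^\star$ to $a$ consists of the edge $(v^\star,x)$ followed by a path inside $T_x$, so its length is at most $\widehat{\wts}(T_x)$. Since $T_x\subseteq T_{v^\star}$ and $a\in T_x$, we may take $\anc(T_x)=a$, or at least $\wts(s,\anc(T_x))=\rho$. Hence $\dist_T(v^\star,a)<\lambda\eps\rho$.

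\textbf{Step 2: bound $\dist_T(v^\star,u)$ for an arbitrary $u$.} For $u\neq v^\star$, let $y$ be the child of $v^\star$ with $u\in T_y$. The same path argument gives
\[
\dist_T(v^\star,u)\le\widehat{\wts}(T_y)<\lambda\eps\cdot\wts(s,\anc(T_y))\le\lambda\eps\cdot\wts(s,u),
\]
where the last inequality holds because $\anc(T_y)$ minimizes the distance to $s$ over $T_y\ni u$. For $u=v^\star$ the bound is trivial.

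\textbf{Step 3: combine with the triangle inequality.} The Euclidean distance is at most the tree-path length, so
\[
\wts(s,u)\le\rho+\dist_T(a,v^\star)+\dist_T(v^\star,u)\le(1+\lambda\eps)\rho+\lambda\eps\,\wts(s,u).
\]
Rearranging, using $\lambda\eps<1$ (which holds for small $\eps$), gives the upper bound
\[
\wts(s,u)\le\frac{1+\lambda\eps}{1-\lambda\eps}\,\rho.
\]
In the other direction,
\[
\wts(s,u)\ge\rho-\dist_T(a,v^\star)-\dist_T(v^\star,u)\ge(1-\lambda\eps)\rho-\lambda\eps\,\wts(s,u),
\]
which rearranges to the lower bound $\wts(s,u)\ge\frac{1-\lambda\eps}{1+\lambda\eps}\,\rho$.

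The lower bound is also immediate from $\wts(s,u)\ge\rho$, which follows from the definition of $a$, and this is in fact stronger. The only real subtlety is Step 1. There we must check that the tree path from $v^\star$ into a child subtree is bounded by $\widehat{\wts}$ of that child, which holds because $\widehat{\wts}$ includes the parent edge. We must also identify $\anc(T_x)$ with a vertex at distance exactly $\rho$ from $s$, which holds since $T_x\subseteq T_{v^\star}$.
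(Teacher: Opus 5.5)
Your proof is correct and follows essentially the same route as the paper: you bound $\dist_T(v^\star,a)<\lambda\eps\rho$ and $\dist_T(v^\star,u)\le\lambda\eps\,\wts(s,u)$ using that the children of $v^\star$ fail \Cref{eq:cutoff}, and then apply the triangle inequality in both directions (the paper merely treats $u=v^\star$ as a separate case). Your side remark is also right: since $a=\anc(T_{v^\star})$ minimizes $\wts(s,\cdot)$ over $V(T_{v^\star})$, the lower bound already follows from $\wts(s,u)\ge\rho$, which is stronger than the stated bound and makes the second triangle-inequality computation unnecessary.
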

\begin{proof}

    If $u = v^\star$, consider the path from $\anc(T_{v^*})$ to $v^*$. If $v^*$ and $\anc(T_{v^*})$ are the same, both inequalities trivially hold. Otherwise, let $w$ be the child of $v^\star$ such that $T_w$ containing $\anc(T_{v^\star})$. Since $v^\star$ is the first vertex in post-order traversal of $T$ satisfying \Cref{eq:cutoff}, $w$ does not satisfies \Cref{eq:cutoff}, meaning $\widehat{\wts}(T_w) < \lambda\eps \cdot \wts(s, \anc(T_w)) = \lambda\eps\rho$. Since the tree $T_w$ together with the edge $(w, v^\star)$ contains a path from $v^\star$ to $\anc(T_w)$ (or $\anc(T_{v^\star})$), 
    \begin{equation}
        \label{eq:dist-to-anc}
        \wts(v^\star, \anc(T_{v^\star})) \leq \widehat{\wts}(T_w) < \lambda\eps\rho.
    \end{equation}
    Using triangle inequality, we get: 
    \begin{equation*}
        \wts(s, v^\star) \leq \wts(s, \anc(T_{v^\star})) + \wts(\anc(T_{v^\star}), v^\star) \leq \rho + \lambda\eps\rho = (1 + \lambda\eps)\rho \leq \frac{1 + \lambda\eps}{1 - \lambda\eps}\rho, 
    \end{equation*}
    as claimed. For the left hand side inequality, we have
    \begin{equation*}
        \wts(s, v^\star) \geq \wts(s, \anc(T_{v^\star})) - \wts(\anc(T_{v^\star}), v^\star) \geq \rho - \lambda\eps\rho = (1 - \lambda\eps)\rho \geq \frac{1 - \lambda\eps}{1 + \lambda\eps}\rho.
    \end{equation*}
    
	We focus on the case when $u\neq v^\star$. Assume $u$ lies in a subtree $T_v$ where $v$ is a child of $v^\star$. Then, similar to the proof for $w$, we have
	\begin{equation}\label{eq:vstaru}
	    \wts(v^\star, u)\leq \widehat{\wts}(T_v)\leq \lambda \epsilon \cdot \wts(s, \anc(T_v))\leq \lambda\epsilon\cdot \wts(s, u) .
	\end{equation}
	Then, the triangle inequality yields
	$$\begin{aligned}
		\wts(s, u) &\leq \wts(s, \anc(T_{v^\star})) + \wts(\anc(T_{v^\star}), v^\star) + \wts(v^\star, u)\\
		&\leq \rho + \lambda \epsilon\rho + \lambda\epsilon\cdot \wts(s, u).
	\end{aligned}$$
    The last inequality follows from \Cref{eq:dist-to-anc} and \Cref{eq:vstaru}. 
	This is equivalent to $\wts(s, u)\leq \frac{1 + \lambda\epsilon}{1-\lambda\epsilon}\cdot\rho$. Similarly, we have
	$$\begin{aligned}
		\wts(s, u) &\geq \wts(s, \anc(T_{v^\star})) - \wts(\anc(T_{v^\star}), v^\star) - \wts(v^\star, u)\\
		&\geq \rho - \lambda \epsilon\rho - \lambda\epsilon\cdot \wts(s, u) .
	\end{aligned}$$
	This is equivalent to $\wts(s, u)\geq \frac{1 - \lambda\epsilon}{1+\lambda\epsilon}\cdot\rho$.
\end{proof}

\begin{corollary}
    \label{cor:v*-point-dist}
    For every $u \in V(T_{v^\star})$, $\wts(v^\star, u) \leq \frac{1 + \lambda\eps}{1 - \lambda\eps} \cdot \lambda\eps\rho$.
\end{corollary}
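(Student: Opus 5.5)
The corollary follows by splitting into two cases, according to whether $u$ is $v^\star$ itself or lies below one of its children. Both cases reuse the inequalities already derived in the proof of \Cref{lem:dist-bnd}.

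If $u=v^\star$, then $\wts(v^\star,u)=0$ and the bound holds trivially.

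Otherwise, let $v$ be the child of $v^\star$ whose subtree $T_v$ contains $u$. Since $v^\star$ is the first vertex in post-order satisfying \Cref{eq:cutoff}, the child $v$ fails it. Hence \Cref{eq:vstaru} applies. The tree path from $v^\star$ to $u$ lies inside $T_v$ together with the edge $(v,v^\star)$, and $\anc(T_v)$ minimizes the distance to $s$ over $V(T_v)$. Together these give
$$\wts(v^\star,u)\le \widehat{\wts}(T_v)<\lambda\eps\cdot\wts(s,\anc(T_v))\le \lambda\eps\cdot \wts(s,u).$$
We then substitute the upper bound $\wts(s,u)\le \frac{1+\lambda\eps}{1-\lambda\eps}\rho$ from \Cref{lem:dist-bnd}, which yields $\wts(v^\star,u)\le \frac{1+\lambda\eps}{1-\lambda\eps}\cdot\lambda\eps\rho$, as claimed.

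There is no real obstacle here. The only care point is to bound $\wts(v^\star,u)$ by $\lambda\eps\cdot\wts(s,u)$ rather than by $\lambda\eps\rho$ directly. We cannot compare $\anc(T_v)$ with $\rho$ from above, since $\anc(T_v)$ may be farther from $s$ than $\anc(T_{v^\star})$. This is exactly why the factor $\frac{1+\lambda\eps}{1-\lambda\eps}$ from \Cref{lem:dist-bnd} appears. We also use that the positivity $1-\lambda\eps>0$ holds for $\eps<1/\lambda$.
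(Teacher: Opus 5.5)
Your proof is correct and follows essentially the same argument as the paper. The case $u=v^\star$ is trivial; otherwise you bound $\wts(v^\star,u)\le\widehat{\wts}(T_v)<\lambda\eps\cdot\wts(s,\anc(T_v))$ using the failing child subtree, and then invoke \Cref{lem:dist-bnd}. The only cosmetic difference is that the paper applies \Cref{lem:dist-bnd} directly to $\anc(T_v)\in V(T_{v^\star})$, whereas you first use $\wts(s,\anc(T_v))\le\wts(s,u)$ and apply the lemma to $u$.
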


\begin{proof}
    If $u=v^\star$, the inequality is trivial. Otherwise, let $w$ be the child of $v^\star$ such that $u$ is in $T_w$. Since the path from $u$ to $v^\star$ is in $T_{w} \cup \{(w, v^\star)\}$,
    \begin{equation}
        \label{eq:wv*1}
        \wts(v^\star, u) \leq \widehat{\wts}(T_{w}) \leq \wts(s, \anc(T_{w})) \cdot \lambda\eps.
    \end{equation}
    The last inequality holds because the tree $T_{w}$ does not satisfy \Cref{eq:cutoff}. By \Cref{lem:dist-bnd}, $\wts(s, \anc(T_{w})) \leq \frac{1 + \lambda\eps}{1 - \lambda\eps}\cdot \rho$. Together with \Cref{eq:wv*1}, we obtain
    \begin{equation}
        \label{eq:wv*2}
        \wts(v^\star, u) \leq  \wts(s, \anc(T_{w}))  \cdot \lambda\eps \leq \frac{1 + \lambda\eps}{1 - \lambda\eps}\cdot \lambda\eps\rho \qquad, 
    \end{equation}
    as desired.
\end{proof}

\paragraph{Root-stretch analysis in Case 1.} 
We analyze the root stretch in $G(T_{v^\star})$. 
Assume that $\anc(T_{v^\star})$ is in the subtree $T_v$ with $\p(v)=v^\star$ ($\anc(T_v) = \anc(T_{v^\star})$). Since the subtree $T_v$ does not satisfy \Cref{eq:cutoff}, then $\widehat{\wts}(T_{v}) < \lambda \eps \cdot \wts(s,\anc(T_v))$. 
For every vertex $u\in V(T_v)\cup \{v^\star\}$, we have 
\begin{align*}
 \dist_{G(T_{v^\star})}(s,u) & \leq \wts(s,\anc(T_{v^\star}))+\dist_T(\anc(T_{v^\star}),u) \\
    &\leq \wts(s,\anc(T_{v^\star})) + \widehat{\wts}(T_{v})\quad\text{(since $\anc(T_{v^\star})$ and $u$ are in $T_v \cup \{(v, v^\star)\}$)}\\
    &\leq \left(1+\lambda\eps\right) \cdot \wts(s,\anc(T_{v^\star})),
\end{align*}
so the root stretch of every vertex $u\in V(T_v)\cup \{v^\star\}$ in $G(T_{v^\star})$ is bounded by 
\[ 
    \frac{\dist_{G(T_{v^\star})}(s,u)}{\wts(s,u)} 
    \leq \frac{(1+\lambda\eps) \cdot \wts(s,\anc(T_v))}{ \wts(s,\anc(T_{v^\star}))}
    = 1+\lambda\eps <1+\eps.
\]

For a vertex $u\in V(T_{v^\star})\setminus V(T_v)$, let $w\in N_{v^\star}$ be the net vertex closest to $u$ in $G(T_{v^\star})$. Then, we have
$$\begin{aligned}
	\dist_{G(T_{v^\star})}(s, u) &\leq \underbrace{\wts(s, \anc(T_{v^\star})) + \wts(\anc(T_{v^\star}), w)}_{\leq (1 + \eps)\wts(s, w) -0.005\eps\rho} + \underbrace{\dist_{G(T_{v^\star})}(w, u)}_{\leq 10^{-3}\eps\rho}\\
	&\leq (1+\epsilon)\wts(s, w) - 4\times 10^{-3}\epsilon\rho\\
    &\leq (1 + \eps)(\wts(s, u) + \underbrace{\wts(u, w)}_{\leq 10^{-3}\eps\rho}) - 4\times 10^{-3}\epsilon\rho \qquad \text{(by triangle inequality)}\\
	&\leq (1+\epsilon)\brac{\wts(s, u) + 10^{-3}\epsilon\rho} - 4\cdot 10^{-3}\epsilon\rho\\
	&\leq (1+\epsilon)\wts(s, u) .
\end{aligned}$$

\paragraph{Root-stretch analysis in Case 2.}
In this case, the key idea is to upper bound the distance between $s$ and $v^\star$ using elementary Euclidean geometry, rather than relying only on the triangle inequality.

For notational convenience, let $a=\anc(T_{v^\star})$. Intuitively, if the quantity $\wts(s,a)+\wts(a,w)$ is close to $(1+\eps)\wts(s,w)$, and both $\wts(v^\star, a), \wts(v^\star, w)$ not much larger than $\eps \rho / 2$, then $a, v^\star, w$ would be nearly collinear.
The point $v^\star$ is close to both $a$ and $w$, and its distance to the line $aw$ can be controlled by the excess in $\wts(v^\star,a)+\wts(v^\star,w)$ over the edge length $\wts(a, w)$. This allows us to obtain a sharper upper bound on $\wts(s,v^\star)$, leaving enough slack to connect the remaining vertices. See \Cref{fig:high-dim-collinear} for an illustration.

\begin{figure}[t]
	\centering
	\begin{tikzpicture}[
	scale=1.15,
	every node/.style={font=\small},
	vertex/.style={
		circle,
		draw=black,
		fill=white,
		line width=.75pt,
		minimum size=2.2mm,
		inner sep=0pt
	},
	mainedge/.style={
		line width=1.05pt,
		black
	},
	dashededge/.style={
		line width=.9pt,
		black!55,
		dashed
	},
	collinear/.style={
		line width=1.05pt,
		blue!70!black
	}
	]
	
	\coordinate (s) at (0,3.0);
	\coordinate (a) at (-2.4,0);
	\coordinate (w) at (2.4,0);
	\coordinate (m) at (0,0);
	
	\coordinate (vstar) at (0.55,-0.4);
	
	\draw[mainedge] (s) -- (a) -- (w) -- cycle;
	
	\draw[collinear] (a) -- (vstar) -- (w);
	
	\draw[dashededge] (s) -- (m);
	\draw[dashededge] (m) -- (vstar);
	
	
	\node[vertex, label=above:{$s$}] at (s) {};
	\node[vertex, label=below left:{$a$}] at (a) {};
	\node[vertex, label=below right:{$w$}] at (w) {};
	\node[vertex, label=above right:{$m$}] at (m) {};
	\node[vertex, label=below:{$v^\star$}] at (vstar) {};
	
	\node[black!55, right=2pt] at ($(s)!0.6!(m)$) {$\wts(s,m)$};
	\node[black!55, above=2pt] at ($(m)!0.55!(w)$) {$\wts(m,w)$};
	
\end{tikzpicture}
	\caption{The geometric configuration in the proof of Lemma~\ref{lem:colinear}. 
		The point \(m\) is the midpoint of \(aw\), and the points
		\(a,v^\star,w\) are approximately collinear.}
	\label{fig:high-dim-collinear}
\end{figure}

\begin{lemma}\label{lem:colinear}
	$\wts(s, v^\star)\leq \brac{1 + \epsilon (1.424\lambda - 0.4925) }\rho$.
\end{lemma}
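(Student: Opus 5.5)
We follow the route sketched in the overview: bound $\wts(s,v^\star)\le\wts(s,m)+\wts(m,v^\star)$, where $m$ is the midpoint of the segment $aw$, $a=\anc(T_{v^\star})$, and $w\in N_{v^\star}$ is the witness of Case~2. Normalize $\rho=1$ and write $x=\wts(a,w)$, $d=\wts(s,w)$.

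\textbf{Step 1: the radial constraint on $w$.} Since $a$ is the closest vertex of $T_{v^\star}$ to $s$, we have $d\ge 1$. The Case~2 inequality $1+x>(1+\eps)d-0.005\eps$ then gives
\[
d<1+\frac{x-0.995\eps}{1+\eps}=:1+\delta .
\]
In particular $x>0.995\eps$.

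\textbf{Step 2: the distance from $s$ to $m$.} Apollonius' theorem in triangle $saw$ gives
\[
\wts(s,m)^2=\tfrac12(1+d^2)-\tfrac{x^2}{4}.
\]
With $d\le 1+\delta$ this is at most $1+\delta+\tfrac{\delta^2}{2}-\tfrac{x^2}{4}$. Since $\delta\le x$, the square terms combine to something nonpositive, so $\wts(s,m)\le 1+\delta/2$, matching the overview.

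\textbf{Step 3: the distance from $m$ to $v^\star$.} By \eqref{eq:dist-to-anc} and Corollary~\ref{cor:v*-point-dist}, both $\wts(v^\star,a)$ and $\wts(v^\star,w)$ are at most $D:=\frac{1+\lambda\eps}{1-\lambda\eps}\lambda\eps$. Apollonius' theorem in triangle $av^\star w$ gives
\[
\wts(v^\star,m)^2=\tfrac12\bigl(\wts(v^\star,a)^2+\wts(v^\star,w)^2\bigr)-\tfrac{x^2}{4}\le D^2-\tfrac{x^2}{4}.
\]
For $\wts(v^\star,a)$ we can use the sharper bound $\lambda\eps$, but using $D$ for both is enough since $D=\lambda\eps(1+O(\eps))$.

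\textbf{Step 4: optimize over $x$.} Combining Steps~2 and~3,
\[
\wts(s,v^\star)\le 1+\frac{x-0.995\eps}{2(1+\eps)}+\sqrt{D^2-\frac{x^2}{4}} .
\]
The admissible range is $x\in[0.995\eps,\,2D]$, where the upper end comes from $x\le \wts(a,v^\star)+\wts(v^\star,w)$. Substitute $x=t\eps$ and $D=\lambda\eps(1+O(\eps))$. The right-hand side becomes $1+\eps f(t)+O(\eps^2)$ with
\[
f(t)=\frac{t-0.995}{2}+\sqrt{\lambda^2-\frac{t^2}{4}} .
\]
Maximizing the concave function $t/2+\sqrt{\lambda^2-t^2/4}$ gives the value $\sqrt2\,\lambda\approx1.4142\lambda$ at $t=\sqrt2\lambda$. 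This point is feasible for $\lambda=0.6$, since $\sqrt2\cdot0.6\approx0.85<0.995$ would put it at the lower endpoint. If the unconstrained maximizer falls outside $[0.995,2\lambda]$, we evaluate $f$ at the endpoint instead, which only helps. Hence $f\le\sqrt2\lambda-0.4975$.

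\textbf{Main obstacle: absorbing the error terms.} What remains is to absorb the $O(\eps^2)$ terms into the stated slack. These come from the factor $1/(1+\eps)$, from $D$ versus $\lambda\eps$, and from $\delta^2$. The slack is $1.424\lambda$ versus $\sqrt2\lambda$ (about $0.0098\lambda\eps$), together with $0.4925$ versus $0.4975$ (about $0.005\eps$). Both are linear in $\eps$, so for sufficiently small $\eps$ the $O(\eps^2)$ errors fit. I expect this bookkeeping to be the fiddly part: the square root near $x=2D$ is not Lipschitz, so there I would bound $\sqrt{D^2-x^2/4}\le\sqrt{\lambda^2\eps^2-x^2/4}+O(\eps^2)/\sqrt{\cdot}$ carefully. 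Alternatively, I would use $\sqrt{u+h}\le\sqrt u+\sqrt h$ with $h=D^2-\lambda^2\eps^2=O(\eps^3)$, which costs only $O(\eps^{3/2})=o(\eps)$ and is still absorbed for small $\eps$.
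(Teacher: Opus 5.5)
Your proposal is correct and is essentially the paper's proof. It uses the same midpoint $m$ of $aw$, Apollonius' theorem in triangles $saw$ and $av^\star w$, the triangle inequality through $m$, and elimination of $x$; the paper's Cauchy--Schwarz step plays the role of your calculus maximization.

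The error bookkeeping you anticipate is unnecessary. Maximize $\frac{x}{2}+\sqrt{D^2-x^2/4}\le\sqrt{2}\,D$ with $D$ itself rather than substituting $D\approx\lambda\eps$ under the root, and drop the factor $\frac{1}{1+\eps}$ using $x>0.995\eps$. This reduces everything to $\sqrt{2}\,\lambda\frac{1+\lambda\eps}{1-\lambda\eps}\le 1.424\lambda+0.005$, which holds for $\lambda=0.6$ and $\eps\le 0.01$.

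Your feasibility remark is backwards: $\sqrt{2}\lambda\approx 0.85<0.995$ means the unconstrained maximizer lies outside the admissible range. This is harmless, since the unconstrained maximum is still an upper bound.
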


To bound $\wts(s, v^\star)$, we use the triangle inequality. Let $m$ be the midpoint of $aw$, $\wts(s, v^\star) \leq \wts(s, m) + \wts(m, v^\star)$. We then bound $\wts(s, m)$ and $\wts(m, v^\star)$ by applying Apollonius' theorem. Let $x$ be the ratio $\frac{\wts(a, w)}{\rho}$. We show that: 

\begin{claim}
    \label{clm:sm}
    $\wts(s, m) \leq \left(1 + \frac{x - 0.995\eps}{2(1 + \eps)}\right)\cdot \rho$.
\end{claim}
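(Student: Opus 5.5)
We apply Apollonius' theorem to the triangle $saw$ with median $sm$:
\[
\wts(s,m)^2=\tfrac12\wts(s,a)^2+\tfrac12\wts(s,w)^2-\tfrac14\wts(a,w)^2 .
\]
We then bound each term by $\rho$ and $x$. We have $\wts(s,a)=\rho$ and $\wts(a,w)=x\rho$ by definition.

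For $\wts(s,w)$ we use the Case~2 hypothesis, which gives
\[
\rho+x\rho>(1+\eps)\wts(s,w)-0.005\eps\rho .
\]
Hence $\wts(s,w)<\frac{1+x+0.005\eps}{1+\eps}\rho=\left(1+\frac{x-0.995\eps}{1+\eps}\right)\rho$. Write $\delta:=\frac{x-0.995\eps}{1+\eps}$, so that $\wts(s,w)\le(1+\delta)\rho$. Since $a=\anc(T_{v^\star})$ is closest to $s$, we also have $\wts(s,w)\ge\rho$, hence $\delta\ge 0$, i.e.\ $x\ge 0.995\eps$.

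Plugging into Apollonius gives
\[
\wts(s,m)^2\le \tfrac12\rho^2+\tfrac12(1+\delta)^2\rho^2-\tfrac14x^2\rho^2=\left(1+\delta+\tfrac{\delta^2}{2}-\tfrac{x^2}{4}\right)\rho^2 .
\]
The target is $(1+\delta/2)^2\rho^2=(1+\delta+\delta^2/4)\rho^2$, so it suffices to show $\delta^2/4\le x^2/4$, i.e.\ $\delta\le x$. This holds because $0\le\delta=\frac{x-0.995\eps}{1+\eps}\le x$ whenever $x\ge0$. Taking square roots yields the claim.

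The only delicate point is keeping the signs straight. We need $\delta\ge0$, which comes from the minimality of $\rho$, so that $\delta^2\le x^2$ follows from $\delta\le x$. We must also record that the strict inequality from Case~2 only strengthens the upper bound on $\wts(s,w)$. No estimate involving $v^\star$ is needed for this claim; those enter only in the companion bound on $\wts(m,v^\star)$.
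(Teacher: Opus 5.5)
Your proof is correct and follows the paper's argument exactly: bound $\wts(s,w)\le(1+q)\rho$ using the Case~2 condition, apply Apollonius' theorem in triangle $saw$, and reduce the claim to $q^2\le x^2$. You also note explicitly that $q\ge 0$, which follows from $\wts(s,w)\ge\rho$ by minimality of $a$; this is what actually justifies that last step, which the paper supports only with ``since $q<x$''.
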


\begin{proof}
    By the condition of Case~2, 
    \begin{equation*}
    \wts(s, a) + \wts(a, w) > (1+\epsilon)\wts(s, w) - 0.005\epsilon\rho.
    \end{equation*}
    Since $\wts(s, a) = \rho$, we have
    \begin{equation}\label{eq:case2+}
        \wts(s, w) \leq \frac{x + (1 + 0.005\eps)}{1 + \eps}\cdot \rho = \left(1 + \frac{x - 0.995\eps}{1 + \eps}\right)\rho.
    \end{equation}

    Applying Apollonius' theorem, we get
    \begin{equation}
    \label{eq:sm-bound}
        \wts(s, m)^2
	=
	\frac{\wts(s, a)^2+\wts(s, w)^2}{2}-\frac{\wts(a, w)^2}{4}
	=
	\frac{\rho^2+\wts(s, w)^2}{2}-\frac{(x\rho)^2}{4}.
    \end{equation}

    The combination of \Cref{eq:case2+} and \Cref{eq:sm-bound} yields
    \begin{equation*}
        \wts(s, m)^2 \leq  \left(\frac{1}{2} + \frac{1}{2} \cdot \left(1 + \frac{x - 0.995\eps}{1 + \eps}\right)^2 - \frac{x^2}{4}\right)\rho^2 .
    \end{equation*}
    Setting $q = \frac{x - 0.995\eps}{1 + \eps}$, we have
    \begin{equation*}
        \begin{split}
            \wts(s, m)^2 &\leq \left(\frac{1}{2} + \frac{(1 + q)^2}{2} - \frac{x^2}{4}\right)\rho^2 \\
            &= \left(1 + q + \frac{q^2}{2} - \frac{q^2}{4}\right)\rho^2 \quad\text{(since $q < x$)}\\
            &= \left(1 + \frac{q}{2}\right)^2\rho^2 ,
        \end{split}
    \end{equation*}
    implying $\wts(s, m) \leq \left(1 + \frac{q}{2}\right)\rho = \left(1 + \frac{x - 0.995\eps}{2(1 + \eps)}\right)\rho$. 
\end{proof}

We then bound $\wts(v^{\star}, m)$.
\begin{claim}
    \label{clm:v*m}
    $\wts(v^\star, m) \leq
	\sqrt{\frac{\brac{1+\lambda^2\epsilon^2}\epsilon^2\lambda^2}{(1 - \lambda\eps)^2}-\frac{x^2}{4}}\cdot \rho$.
\end{claim}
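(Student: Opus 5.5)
Apply Apollonius' theorem to the triangle $a v^\star w$, with $m$ the midpoint of the side $aw$:
\[
\wts(v^\star,m)^2=\frac{\wts(v^\star,a)^2+\wts(v^\star,w)^2}{2}-\frac{\wts(a,w)^2}{4}
=\frac{\wts(v^\star,a)^2+\wts(v^\star,w)^2}{2}-\frac{x^2\rho^2}{4}.
\]
It then suffices to bound the average of the two squared side lengths by $\frac{(1+\lambda^2\eps^2)\lambda^2\eps^2}{(1-\lambda\eps)^2}\rho^2$.

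For the first term, I would invoke \Cref{eq:dist-to-anc} from the proof of \Cref{lem:dist-bnd}. It gives $\wts(v^\star,a)\le\lambda\eps\rho$, hence $\wts(v^\star,a)^2\le\lambda^2\eps^2\rho^2$.

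For the second term, $w\in N_{v^\star}\subseteq V(T_{v^\star})$, so \Cref{cor:v*-point-dist} gives
\[
\wts(v^\star,w)\le\frac{1+\lambda\eps}{1-\lambda\eps}\,\lambda\eps\rho,
\qquad\text{so}\qquad
\wts(v^\star,w)^2\le\frac{(1+\lambda\eps)^2}{(1-\lambda\eps)^2}\lambda^2\eps^2\rho^2 .
\]

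Averaging the two bounds yields
\[
\frac{\lambda^2\eps^2\rho^2}{2(1-\lambda\eps)^2}\Bigl((1-\lambda\eps)^2+(1+\lambda\eps)^2\Bigr).
\]
The bracket simplifies exactly: $(1-\lambda\eps)^2+(1+\lambda\eps)^2=2(1+\lambda^2\eps^2)$. The average is therefore exactly $\frac{(1+\lambda^2\eps^2)\lambda^2\eps^2}{(1-\lambda\eps)^2}\rho^2$, which is the claimed constant.

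Substituting into Apollonius' identity and taking square roots proves the claim. The radicand is nonnegative, since it equals $\wts(v^\star,m)^2/\rho^2\ge 0$.

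No step is a real obstacle; the proof rests on one identity and two bounds already established. The only point needing care is the first term. The bound $\wts(v^\star,a)\le\lambda\eps\rho$ comes from the fact that the child subtree containing $a$ fails \Cref{eq:cutoff} with $\anc$ equal to $a$. The looser corollary bound would not yield the symmetric $(1\mp\lambda\eps)^2$ sum. When $a=v^\star$ this bound holds trivially.
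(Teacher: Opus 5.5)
Your proposal is correct and follows the paper's proof: Apollonius' theorem in triangle $a v^\star w$, the bound $\wts(v^\star,a)\le\lambda\eps\rho$ from \Cref{eq:dist-to-anc}, and \Cref{cor:v*-point-dist} for $\wts(v^\star,w)$. You also make explicit the identity $(1-\lambda\eps)^2+(1+\lambda\eps)^2=2(1+\lambda^2\eps^2)$ that the paper's final equality uses without comment.
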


\begin{proof}
    Applying Apollonius' theorem for the triangle $\Delta(a, w, v^\star)$, we get
    \begin{equation}
        \label{eq:awv*}
        \wts(v^\star, m)^2 = \frac{\wts(a, v^\star)^2+\wts(w, v^\star)^2}{2}-\frac{\wts(a, w)^2}{4} = \frac{\wts(a, v^\star)^2+\wts(w, v^\star)^2}{2} - \frac{x^2\rho^2}{4}.
    \end{equation}

    By \Cref{cor:v*-point-dist}, $\wts(v^\star, w) \leq \frac{1 + \lambda\eps}{1 - \lambda\eps} \cdot \lambda\eps\rho$. On the other hand, since $a = \anc(T_{v^\star})$, by \Cref{eq:dist-to-anc}, $\wts(a, v^\star) \leq \lambda\eps\rho$. Combined with \Cref{eq:awv*}, we obtain
    \begin{equation*}
        \wts(v^\star, m)^2 \leq \frac{(\lambda\eps\rho)^2 + \left(\frac{1 + \lambda\eps}{1 - \lambda\eps}\right)^2(\lambda\eps\rho)^2}{2} - \frac{x^2\rho^2}{4} = \rho^2 \cdot \left(\frac{(1 + \lambda^2\eps^2)\lambda^2\eps^2}{(1 - \lambda\eps)^2} - \frac{x^2}{4}\right).
    \end{equation*}
    Taking square root of both sides, our claim follows. 
\end{proof}

We now prove \Cref{lem:colinear}.

\begin{proof}[Proof of \Cref{lem:colinear}]
	From \Cref{clm:sm}, \Cref{clm:v*m} and the triangle inequality,
	\begin{equation}\label{eq:svstar}
	    \wts(s, v^\star)
	\le
	\wts(s, m) + \wts(m, v^\star)
	\leq
	\brac{1+\frac{x-0.995\eps}{2(1+\eps)}
	+\sqrt{\frac{\brac{1+\lambda^2\epsilon^2}\epsilon^2\lambda^2}{(1 - \lambda\eps)^2}-\frac{x^2}{4}}}\cdot \rho.
	\end{equation}
	
	It remains to upper bound the expression independently of $x$.
	Define
    \begin{equation}
        \label{eq:phi}
        \begin{split}
            \phi(x) &=
	\frac{x-0.995\eps}{2(1+\eps)}
	+\sqrt{\frac{\brac{1+\lambda^2\epsilon^2}\epsilon^2\lambda^2}{(1 - \lambda\eps)^2}-\frac{x^2}{4}}\\
        &= -\frac{0.995\eps}{2(1+\eps)}
		+
		\frac{1}{1+\eps}\cdot \frac{x}{2}
		+
		\sqrt{
			\frac{(1+\lambda^2\eps^2)\eps^2\lambda^2}
			{(1 - \lambda\eps)^2}
			-\frac{x^2}{4}
		}.
        \end{split}
    \end{equation}
	By the Cauchy-Schwarz inequality,
	\begin{equation}
        \label{eq:cauchy-schwarz}
	    \begin{split}
	        &\frac{1}{1+\eps}\cdot \frac{x}{2}
		+
		\sqrt{
			\frac{(1+\lambda^2\eps^2)\eps^2\lambda^2}
			{(1 - \lambda\eps)^2}
			-\frac{x^2}{4}
		}
		\\
		&\qquad\le
		\sqrt{
			\frac{1}{(1+\eps)^2}+1
		}
		\cdot
		\sqrt{
			\frac{x^2}{4}
			+
			\frac{(1+\lambda^2\eps^2)\eps^2\lambda^2}
			{(1 - \lambda\eps)^2}
			-\frac{x^2}{4}
		}
		\\
		&\qquad=
		\sqrt{
			1+\frac{1}{(1+\eps)^2}
		}
		\cdot
		\sqrt{
			\frac{(1+\lambda^2\eps^2)\eps^2\lambda^2}
			{(1 - \lambda\eps)^2}
		}\\
        &\qquad=\sqrt{
		1+\frac{1}{(1+\eps)^2}
	}
	\cdot
	\frac{\eps\lambda\sqrt{1+\lambda^2\eps^2}}{1-\lambda\eps}.
	    \end{split}
	\end{equation}
	The last equality used the assumption that $0 < \lambda\eps < 1$. Therefore, 
	\[
	\phi(x)
	\le
	\eps\left(
	\frac{\lambda\sqrt{1+\lambda^2\eps^2}}{1-\lambda\eps}
	\sqrt{1+\frac{1}{(1+\eps)^2}}
	-
	\frac{0.995}{2(1+\eps)}
	\right).
	\]
	This upper bound is independent of $x$. Now assume $\eps<0.01$ and note that $\lambda=\frac35<\frac23$. Then, $\lambda\eps<\frac{1}{150}$, which implies $\frac{1}{1-\lambda\eps} < \frac{150}{149}$ and $\sqrt{1+\lambda^2\eps^2} < \sqrt{1+\frac{1}{22500}}$. Consequently,
    \begin{equation*}
        \begin{split}
            \phi(x) &\le \eps\left(\frac{\lambda\sqrt{1+\lambda^2\eps^2}}{1-\lambda\eps}\sqrt{1+\frac{1}{(1+\eps)^2}}-\frac{0.995}{2(1+\eps)}\right)\\
            &\leq \eps\left(\frac{150}{149}\sqrt{1+\frac{1}{22500}}\underbrace{\sqrt{1 + \frac{1}{(1 +\eps)^2}}}_{\leq \sqrt{2}}\,\lambda-\underbrace{\frac{0.995}{2(1+\eps)}}_{\leq\frac{0.995}{2.02}}\right)\\
		&<\eps\left(1.424\lambda-0.4925\right).
        \end{split}
    \end{equation*}
	Then, \Cref{eq:svstar} gives 
    \begin{equation*}
        \wts(s, v^\star)\leq \brac{1 + \epsilon (1.424\lambda - 0.4925) }\rho ,  
    \end{equation*}
    as claimed.
\end{proof}

For any vertex $u$ in $T_{v^\star}$, by the construction of $G(T_{v^\star})$, let $z\in N_{v^\star}$ be the closest net vertex to $u$, then by \Cref{lem:colinear} and \Cref{lem:dist-bnd} we have 
$$\begin{aligned}
	\dist_{G(T_{v^\star})}(s, u) &\leq \wts(s, v^\star) + \wts(v^\star, z) + \dist_{T_{v^\star}}(z, u)\\
	&\leq \brac{1 + \epsilon (1.424\lambda - 0.4925) }\rho + \wts(v^\star, z)+ 10^{-3}\epsilon\rho\\
	&\leq \brac{1 + \epsilon (1.424\lambda - 0.4925) }\rho + \underbrace{\frac{1 + \lambda\epsilon}{1 - \lambda\epsilon}}_{\leq 1 + 3\lambda\eps}\cdot \lambda\eps\rho + 10^{-3}\epsilon\rho \qquad \text{(by \Cref{cor:v*-point-dist})}\\
    &\leq \brac{1 + \epsilon (1.424\lambda - 0.4925) }\rho + (1 + 3\lambda\eps)\lambda\eps\rho + 10^{-3}\epsilon\rho \qquad \\
	&\leq \brac{1 + \epsilon(2.424\lambda + 3\lambda^2\eps - 0.4915)}\rho \leq (1 + \epsilon)\rho.
\end{aligned}$$
The last inequality holds when $2.424\lambda + 3\lambda^2\eps - 0.4915 \leq 1$ as we take $\lambda = 0.6$ and $\eps < 0.01$. Since $\wts(s, u) \geq \rho$ for every $u \in T_{v^\star}$, $\dist_{G(T_{v^\star})}(s, u) \leq (1  +\eps)\wts(s, u)$.

\paragraph{Weight analysis in Case 1.}
Since $N_{v^\star}$ is a $(10^{-3}\eps\cdot \wts(s,\anc(T_{v^\star})))$-net of the subtree $T_{v^\star}$, we can bound its size as
\[
    |N_{v^\star}|
    \leq \ceil{\frac{\widehat{\wts}(T_{v^\star})}{10^{-3}\epsilon \rho}} .
\]

On the other hand, by \Cref{cor:v*-point-dist} and \Cref{eq:dist-to-anc}, we know that the weight of any edge between $\anc(T_{v^\star})$ and $u\in N_{v^\star}$ is at most
$$\wts(\anc(T_{v^\star}), u)\leq \wts(v^\star, u) + \wts(v^\star, \anc(T_{v^\star}))\leq \frac{1 + \lambda\eps}{1 - \lambda\epsilon}\cdot \lambda\epsilon\rho + \lambda\eps\rho < 4\lambda\eps\rho .$$

Overall, the total weight of $G(T_{v^\star})$ is 
\begin{align*}
    \wts(G(T_{v^\star}))
    & \leq \wts(T_{v^\star})+ \wts(s,\anc(T_{v^\star}))+|N_{v^\star}|\cdot 4\lambda\eps\rho\\
    &\leq  \wts(T_{v^\star})+ \rho + 4000\cdot\widehat{\wts}(T_{v^\star})\\
    &\leq  \brac{\frac{1}{\lambda\epsilon} + 4001}\cdot\widehat{\wts}(T_{v^\star}) = \brac{\frac{1}{\lambda\epsilon} + O(1)}\cdot \widehat{\wts}(T_{v^\star})
\end{align*}

\paragraph{Weight analysis in Case 2.}
By \Cref{cor:v*-point-dist}, the weight of every edge between $v^\star$ and $u\in N_{v^\star}$ is at most
$$\wts(v^\star, u) \leq \frac{1 + \lambda\epsilon}{1 - \lambda\epsilon}\cdot \lambda\epsilon\rho .$$

Overall, the total weight of $G(T_{v^\star})$ is 
\begin{align*}
	\wts(G(T_{v^\star}))
	& \leq \wts(T_{v^\star})+ \wts(s,v^\star)+|N_{v^\star}|\cdot \frac{1 + \lambda\epsilon}{1 - \lambda\epsilon}\cdot\lambda\epsilon\rho\\
	&\leq  \wts(T_{v^\star})+ \underbrace{\wts(s, a)}_{=\rho} + \underbrace{\wts(a, v^\star)}_{\leq \lambda\eps\rho} + 2000\cdot\widehat{\wts}(T_{v^\star})\\
    &\leq \wts(T_{v^\star})+ (1 + \lambda\eps)\rho + 2000\cdot\widehat{\wts}(T_{v^\star})\\
	&\leq  \brac{\frac{1}{\lambda\epsilon} + 2001}\cdot\widehat{\wts}(T_{v^\star}) = \brac{\frac{1}{\lambda\epsilon} + O(1)}\cdot \widehat{\wts}(T_{v^\star}) .
\end{align*}

In both cases, we have shown that each subgraph $G(T_{v^\star})$ has root stretch at most $1+\eps$ and weight at most $\brac{\frac{1}{\lambda\eps} + O(1)}\cdot \widehat{\wts}(T_{v^\star})$.
As $\lambda = 0.6$, this completes the proof of \Cref{thm:high-dim-slt}.

\section{Improved SLT in Euclidean Plane}
In this section, we prove \Cref{thm:plane-slt}; namely, we are given a set $S\subset \mathbb{R}^2$ of $n$ terminal points in the \EMPH{Euclidean plane}, a root $s \in S$, and a parameter $\eps > 0$, and the goal is to build an SLT on  $S$ with the required bounds on the root-stretch and lightness.

We first need some notations. For any pair of points $u, v$ on $\mst$ (not necessarily from the point set $S$), let $\mst[u, v]$ be the tree path between $u, v$ on $\mst$. For an edge $e=xy$ and a region $R\subseteq\mathbb{R}^2$, $e\cap R$ denotes the portion of the segment $xy$ lying in $R$, and $\wts(e\cap R)$ denotes its Euclidean length. For the upper-bound construction in this section, after scaling we assume that the minimum distance between two distinct points in $S$ is $2$; this normalization does not affect stretch or lightness.
We use the following $\epsilon$-dependent auxiliary parameters:
\[
    \alpha=\epsilon^{1/3},
    \qquad
    m_\epsilon=\left\lceil \epsilon^{-1/12}\right\rceil,
    \qquad
    \beta=\frac1{m_\epsilon},
    \qquad
    \gamma=\epsilon^{1/4}.
\]
Besides, define
\[
    M_\epsilon=\left\lceil \frac{2\pi}{(1-4\gamma)\epsilon}\right\rceil,
    \qquad
    \theta=\frac{2\pi}{M_\epsilon}.
\]

\subsection{Algorithm Description}
Our algorithm consists of several phases.
\paragraph{Annular decomposition.}
Partition the exterior of the unit disk centered at $s$ into thin annuli.
For every $i\ge 0$ and $0\le j<1/\beta$, let
\[
A_i^{(j)}=
\left\{x\in\mathbb{R}^2:
(1+\alpha)^i(1+j\beta\alpha)
\le \wts(s,x)<
(1+\alpha)^i(1+(j+1)\beta\alpha)
\right\} .
\]

\begin{lemma}\label{lem:sparse-buffer}
	There is an index $j^\star$, $0\le j^\star<1/\beta$, such that
	\[
	\sum_{i\ge 0}\sum_{e\in E(\mst)} \wts\bigl(e\cap A_i^{(j^\star)}\bigr)
	\le \beta\,\wts(\mst).
	\]
\end{lemma}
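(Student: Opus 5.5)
This follows from an averaging (pigeonhole) argument over the index $j$. The plan is to show that the sets $A_i^{(j)}$, over all $i\ge 0$ and $0\le j<1/\beta$, are pairwise disjoint subsets of $\mathbb{R}^2$. Then
\[
\sum_{j=0}^{1/\beta-1}\sum_{i\ge 0}\sum_{e\in E(\mst)} \wts\bigl(e\cap A_i^{(j)}\bigr)\le \sum_{e\in E(\mst)}\wts(e)=\wts(\mst),
\]
and since there are exactly $1/\beta=m_\epsilon$ values of $j$, some $j^\star$ has inner double sum at most $\beta\,\wts(\mst)$.

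The first step is to verify disjointness. For a fixed $i$, the radial intervals $\bigl[(1+\alpha)^i(1+j\beta\alpha),(1+\alpha)^i(1+(j+1)\beta\alpha)\bigr)$ for $j=0,\dots,m_\epsilon-1$ are consecutive and half-open. Their union is exactly $\bigl[(1+\alpha)^i,(1+\alpha)^{i}(1+\alpha)\bigr)=\bigl[(1+\alpha)^i,(1+\alpha)^{i+1}\bigr)$, because $m_\epsilon\beta=1$. These ranges for different $i$ are themselves consecutive half-open intervals. Hence every point $x$ with $\wts(s,x)\ge 1$ lies in exactly one $A_i^{(j)}$, and points inside the unit disk lie in none. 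This is where the choice $\beta=1/m_\epsilon$, an exact reciprocal of an integer, matters.

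The second step uses additivity of length. For each edge $e$, the pieces $e\cap A_i^{(j)}$ are disjoint measurable subsets of the segment $e$. Each is a finite union of subsegments, since a segment meets a circle in at most two points. So $\sum_{i,j}\wts(e\cap A_i^{(j)})\le\wts(e)$. Summing over the finitely many MST edges and exchanging the order of summation, which is valid because all terms are nonnegative, gives the displayed inequality. Averaging over the $m_\epsilon$ values of $j$ then completes the argument.

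There is no genuine obstacle here. The only point needing care is the exact tiling of each $[(1+\alpha)^i,(1+\alpha)^{i+1})$ by the $m_\epsilon$ subintervals, so that no length is counted twice. Boundary circles have zero length and could not affect the sums even if the tiling overlapped on them.
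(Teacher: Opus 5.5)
Your proposal is correct and follows essentially the same route as the paper: for each fixed $i$, the sets $A_i^{(j)}$ tile the shell $[(1+\alpha)^i,(1+\alpha)^{i+1})$ because $\beta=1/m_\epsilon$. Summing over $j$ therefore counts each portion of the MST outside the unit disk exactly once, and averaging over the $1/\beta$ values of $j$ yields $j^\star$. Your write-up just makes explicit the disjointness and measurability details that the paper's proof states in one line.
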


\begin{proof}
	For fixed $i$, the annuli $A_i^{(0)},\ldots,A_i^{(1/\beta-1)}$ partition the shell
	$\{x:(1+\alpha)^i\le \wts(s,x)<(1+\alpha)^{i+1}\}$.
	Summing over $j$ therefore counts each portion of $\mst$ outside the unit disk exactly once, and portions inside the unit disk are not counted.
	Averaging over the $1/\beta$ possible values of $j$ gives the claim.
\end{proof}

Fix the index $j^\star$ from \Cref{lem:sparse-buffer}.  For $i\ge 0$, define
\[
L_i=(1+\alpha)^i(1+j^\star\beta\alpha),
\qquad
R_i=(1+\alpha)^{i+1}(1+(j^\star+1)\beta\alpha),
\]
and let
\[
A_i=\{x:L_i\le \wts(s,x)<R_i\}.
\]
Thus consecutive regions overlap exactly on the selected sparse annuli:
$A_i\cap A_{i+1}=A_{i+1}^{(j^\star)}$.
We call
\[
B_i=A_i^{(j^\star)}\cup A_{i+1}^{(j^\star)}
\]
the buffer of $A_i$.
By \Cref{lem:sparse-buffer}, the weight of the $\mst$ clipped in the regions $A_i$ is bounded by
\begin{equation}\label{eq:annulus-overlap}
	\sum_{i\ge0}\wts(\mst\cap A_i)
	\le (1+\beta)\wts(\mst).
\end{equation}

\paragraph{Net points.}
Fix a region $A_h$, where $h\ge 0$. By definition, $L_h$ is a lower bound on the distance from $s$ to every point of $A_h$.
Subdivide every edge of $\mst$ at each intersection with $\partial A_h$.
The subdivision points are Steiner points used only in the analysis.
Let $U_h$ be the set of Steiner points on $\partial A_h$, let $S_h=S\cap A_h$, and let $F_h$ be the forest obtained by restricting the subdivided $\mst$ to $A_h$. If a tree in $F_h$ does not contain any points in the non-buffer region $A_h\setminus B_h$, then remove it from $F_h$.

\begin{lemma}\label{lem:tree-contain}
	Every point $t\in S\setminus\{s\}$ is contained in $T\cap S$ for at least one tree $T\in F_h$ for some level $h$.
\end{lemma}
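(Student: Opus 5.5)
The plan is to follow the $\mst$ path from $t$ toward the root $s$ and look at where it first leaves the radial band containing $t$. The tree of $F_h$ (or $F_{h+1}$) containing that initial stretch of the path will then contain a non-buffer point.

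\textbf{Locating $t$.} Since the minimum interpoint distance is $2$, every $t\in S\setminus\{s\}$ has $\wts(s,t)\ge 2$. The regions $A_i$ cover $\{x:\wts(s,x)\ge L_0\}$, because $R_i\ge L_{i+1}$ and consecutive regions overlap. Also $L_0\le 1+\alpha<2$. Hence $t\in A_h$ for some $h\ge 0$.

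\textbf{Geometry of the bands.} For small $\eps$ we have $\beta<1/2$. Each $A_h$ is then the concatenation of three radial bands: the inner buffer $A_h^{(j^\star)}$, a nonempty central part $A_h\setminus B_h$, and the outer buffer $A_{h+1}^{(j^\star)}=A_h\cap A_{h+1}$. The inner buffer of $A_0$ lies at radius below $1+(j^\star+1)\beta\alpha\le 1+\alpha<2$, so it contains no terminal. For $h\ge1$, the inner buffer of $A_h$ equals the outer buffer of $A_{h-1}$.

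\textbf{Case analysis.} If $t\in A_h\setminus B_h$, the tree of $F_h$ containing $t$ is never removed, and we are done. Otherwise, after possibly replacing $h$ by $h-1$, we may assume $t$ lies in the band $Z=A_h\cap A_{h+1}$. This band has central parts $A_h\setminus B_h$ immediately inside it and $A_{h+1}\setminus B_{h+1}$ immediately outside it.

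\textbf{Following the path.} Parametrize the path $\mst[t,s]$ continuously. Along it, $\wts(s,\cdot)$ varies continuously from $\wts(s,t)$ down to $0$. Since $s\notin Z$, the path must leave $Z$; let $q$ be the first point where it reaches $\partial Z$.
\begin{itemize}
\item If $q$ is on the inner circle, then a short initial subpath ending slightly past $q$ stays inside $A_h$ and reaches $A_h\setminus B_h$.
\item If $q$ is on the outer circle, the symmetric statement holds with $A_{h+1}$ and $A_{h+1}\setminus B_{h+1}$.
\end{itemize}
This subpath is connected and lies in the restriction of the subdivided $\mst$ to the relevant region. The Steiner points on $\partial A_h$ ensure that it is a genuine subpath of the forest. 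So the tree of $F_h$ (respectively $F_{h+1}$) containing $t$ contains a non-buffer point and survives the removal step.

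\textbf{Main obstacle.} The only delicate step is the boundary bookkeeping. The annuli are half-open, so we must make sure the exit point $q$ and the points just beyond it are correctly attributed to the region containing $t$, and that "just past $q$" really lands in the central part. This needs the central band to have positive width; I would check it explicitly from $\beta<1/2$ and the definitions of $L_i,R_i$. One should also handle the degenerate situation where the path runs along a boundary circle, by taking $q$ as the supremum of path times that remain in the closure of $Z$.
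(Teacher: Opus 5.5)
Your proposal is correct and takes essentially the same route as the paper: follow the $\mst$ path from $t$ toward $s$. Depending on whether it first leaves the overlap band through the inner or the outer circle, the component of $F_h$ or $F_{h+1}$ containing $t$ reaches a non-buffer part and is therefore retained. Your additional observations make explicit what the paper only gestures at with its ``one-sided interpretation when $i=0$'': that $t$ lies in some $A_h$ because $\wts(s,t)\ge 2>L_0$, and that the inner buffer of $A_0$ contains no terminal. The boundary bookkeeping you flag is exactly what the paper dismisses via the half-open convention.
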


\begin{proof}
	If $t$ lies in the non-buffer part of some region $A_h$, then by definition the connected component of $\mst$ restricted to $A_h$ belongs to $F_h$.
	
	It remains to consider the case in which $t$ lies in a selected buffer annulus, say $A_i^{(j^\star)}$.
	This buffer is the overlap between the two neighboring regions $A_{i-1}$ and $A_i$ (with the evident one-sided interpretation when $i=0$).
	Let $P$ be the path in $\mst$ from $t$ to the root $s$.
	Since $s$ is not in the radial interval of $A_i^{(j^\star)}$, the path $P$ must leave this buffer interval.
	If the first exit from the buffer is through its outer boundary, then the component of $F_i$ containing $t$ contains a point of $A_i\setminus B_i$.
	If the first exit is through its inner boundary, then the component of $F_{i-1}$ containing $t$ contains a point of $A_{i-1}\setminus B_{i-1}$.
	In either case, the relevant component intersects the non-buffer part of the corresponding region, hence it is retained in that forest and contains $t$.
	Boundary coincidences can be handled by the same half-open convention used for the annuli, or by an infinitesimal perturbation.
\end{proof}

For any component $T\in F_h$, let $S(T):=T\cap S$, and let $\net(T)\subseteq S(T)$ be a $\gamma\epsilon L_h$-net in the tree metric of $T$. 
Then $\net(T)$ is also a $\gamma\epsilon L_h$-net of $S(T)$, and the standard packing bound for points separated in a tree gives
\[
|\net(T)|
\le 1+\frac{\wts(T)}{\gamma\epsilon L_h} .
\]

\begin{corollary}
	\label{lem:sum-components-revised}
	All the retained tree components satisfy
	\[
	\sum_{h\ge0}\sum_{T\in F_h}\wts(T)
	\le
	\sum_{h\ge0}\wts(\mst\cap A_h)
	\le
	(1+\beta)\wts(\mst).
	\]
\end{corollary}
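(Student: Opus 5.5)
The plan is to prove the two inequalities separately; the second is just \Cref{eq:annulus-overlap}.

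\emph{First inequality.} Fix a level $h$. The first step is to check that the trees of $F_h$ are pairwise edge-disjoint. Each tree $T\in F_h$ is a connected component of the subdivided $\mst$ restricted to $A_h$, possibly after discarding components that stay inside the buffer $B_h$. So the trees of $F_h$ form a subfamily of the connected components of $\mst\cap A_h$. Distinct components share no edge portions, and the Steiner subdivision points on $\partial A_h$ add no length. Hence every subsegment of every $\mst$ edge inside $A_h$ is counted in at most one tree of $F_h$, and
\[
\sum_{T\in F_h}\wts(T)\le \sum_{e\in E(\mst)}\wts(e\cap A_h)=\wts(\mst\cap A_h).
\]
The inequality can be strict, because of the removed buffer-only components. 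Summing over $h\ge0$ gives the first inequality.

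\emph{Second inequality.} This is exactly \Cref{eq:annulus-overlap}. For completeness, I would recall why it holds. Each point at distance at least $1$ from $s$ lies in the half-open shell $\{(1+\alpha)^i\le \wts(s,x)<(1+\alpha)^{i+1}\}$ for exactly one $i$, and shifting by $j^\star\beta\alpha$ preserves this partition structure. The regions $A_h$ cover these shifted shells, and consecutive regions overlap exactly on $A_{h+1}^{(j^\star)}$. Non-consecutive regions are disjoint, since $R_h\le L_{h+2}$. Therefore every point of $\mst$ lies in at most two of the regions, and it lies in two only when it is in some selected annulus $A_i^{(j^\star)}$. Consequently
\[
\sum_h \wts(\mst\cap A_h)\le \wts(\mst)+\sum_i\sum_e\wts\bigl(e\cap A_i^{(j^\star)}\bigr)\le (1+\beta)\wts(\mst),
\]
where the last step uses \Cref{lem:sparse-buffer}.

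\emph{Main obstacle.} The only delicate point is the boundary convention. Subdivision points on $\partial A_h$, and tree components meeting only at such points, must not be double counted. This is handled by the half-open annuli: boundary points carry zero length.
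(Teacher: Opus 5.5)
Your proof is correct and follows the paper's argument. Edge-disjointness of the retained components within each $F_h$, together with discarding the buffer-only components, gives the first inequality level by level. The second inequality is exactly \eqref{eq:annulus-overlap}; your justification (each point lies in at most two consecutive regions, and in two only inside a selected annulus $A_i^{(j^\star)}$, then apply \Cref{lem:sparse-buffer}) correctly fills in what the paper states without proof.
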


\begin{proof}
	For each fixed level $h$, the components of $F_h$ are edge-disjoint subtrees
	of the subdivided MST restricted to $A_h$.  Removing components can only
	decrease the total weight, and hence
	\[
	\sum_{T\in F_h}\wts(T)
	\le \wts(\mst\cap A_h).
	\]
	Summing over $h$ and then using \eqref{eq:annulus-overlap} gives the claim.
\end{proof}

\paragraph{Shortcuts within cones.}
We now construct a graph $G=(S,E)$.
Initialize $G$ with the edges of $\mst$.
Partition the plane into cones of angle $(1-4\gamma)\epsilon$ with apex $s$. For each cone $C$ and each annulus $A_h$, if $C\cap A_h\cap S\neq \emptyset$, let $\anc(C, A_h)\in S$ be the closest point to $s$, and call $h$ the level of $\anc(C, A_h)$.

\begin{framed}
	\noindent\textbf{Shortcut scheme for $G$.}
	For each cone $C$, perform the following steps.
	\begin{enumerate}[(1)]
		\item\label{alg:shortcut1}
		Let
		\(
		h_1<h_2<\cdots<h_\ell
		\)
		be all levels $h$ for which $C$ contains at least one anchor.
		If this list is empty, do nothing for $C$.
		Otherwise choose the non-adjacent subsequence greedily from left to right: set $i_1=h_1$, and after $i_r$ has been chosen, let $i_{r+1}$ be the smallest $h_j$ with $h_j\ge i_r+2$, if such an index exists.
		Then $i_1<i_2<\cdots<i_k$ is a maximal subsequence of non-adjacent levels, and every anchor-bearing level $h_j$ is either $i_r$ or $i_r+1$ for a unique $r$.
		
		For each selected level $i_r$, let $a_{i_r}$ be an anchor occurrence in $C$ from level $i_r$. Add the edges
		\[
		(s,a_{i_1}),
		(a_{i_1},a_{i_2}),
		\ldots,
		(a_{i_{k-1}},a_{i_k})
		\]
		to $G$. The last anchor point $a_{i_k}$ will be called the \EMPH{terminal anchor} of cone $C$.
		Their concatenation is called the \emph{radial path} of the cone $C$. See \Cref{fig:cone-path} for an illustration.
		
		\item\label{alg:shortcut2}
			For each selected level $i_r$, for each level $h\in \{i_r, i_r+1\}$, and every component $T\in F_h$, add the edge $(v,a_{i_r})$ for every $v\in\net(T)\cap C$.
	\end{enumerate}
\end{framed}

\begin{figure}[t]
	\centering
		\begin{tikzpicture}[
		scale=1.0,
		every node/.style={font=\small},
		cone/.style={fill=gray!4, draw=black!30, line width=.5pt},
		annulus boundary/.style={draw=black!30, line width=.45pt},
		selected annulus/.style={fill=blue!7, draw=none},
		skipped annulus/.style={fill=gray!10, draw=none},
		shortcut/.style={line width=1.15pt, red!75!black},
		ray/.style={draw=black!40, dashed, line width=.55pt},
		selanchor/.style={
			circle,
			draw=blue!70!black,
			fill=white,
			line width=.7pt,
			minimum size=3.4mm,
			inner sep=0pt
		},
		skippedanchor/.style={
			circle,
			draw=black!45,
			fill=gray!12,
			minimum size=2.7mm,
			inner sep=0pt
		},
		source/.style={
			circle,
			draw=black,
			fill=white,
			line width=.8pt,
			minimum size=4.2mm,
			inner sep=0pt
		}
		]
		
		\def\angA{5}
		\def\angB{42}
		\def\R{7.3}
		
		\def\rzero{1.0}
		\def\rone{2.2}
		\def\rtwo{3.4}
		\def\rthree{4.6}
		\def\rfour{5.8}
		\def\rfive{7.0}
		
		\fill[cone]
		(0,0) -- (\angA:\R)
		arc[start angle=\angA, end angle=\angB, radius=\R]
		-- cycle;
		
		\draw[ray] (0,0) -- (\angA:\R+.35);
		\draw[ray] (0,0) -- (\angB:\R+.35);
		
		\node[black!65] at (46:5.45) {$C$};
		
		\path[selected annulus]
		(\angA:\rzero)
		arc[start angle=\angA, end angle=\angB, radius=\rzero]
		-- (\angB:\rone)
		arc[start angle=\angB, end angle=\angA, radius=\rone]
		-- cycle;
		
		\path[skipped annulus]
		(\angA:\rone)
		arc[start angle=\angA, end angle=\angB, radius=\rone]
		-- (\angB:\rtwo)
		arc[start angle=\angB, end angle=\angA, radius=\rtwo]
		-- cycle;
		
		\path[selected annulus]
		(\angA:\rtwo)
		arc[start angle=\angA, end angle=\angB, radius=\rtwo]
		-- (\angB:\rthree)
		arc[start angle=\angB, end angle=\angA, radius=\rthree]
		-- cycle;
		
		\path[skipped annulus]
		(\angA:\rthree)
		arc[start angle=\angA, end angle=\angB, radius=\rthree]
		-- (\angB:\rfour)
		arc[start angle=\angB, end angle=\angA, radius=\rfour]
		-- cycle;
		
		\path[selected annulus]
		(\angA:\rfour)
		arc[start angle=\angA, end angle=\angB, radius=\rfour]
		-- (\angB:\rfive)
		arc[start angle=\angB, end angle=\angA, radius=\rfive]
		-- cycle;
		
		\foreach \r in {\rzero,\rone,\rtwo,\rthree,\rfour,\rfive} {
			\draw[annulus boundary]
			(\angA:\r) arc[start angle=\angA, end angle=\angB, radius=\r];
		}
		
		\coordinate (s)  at (0,0);
		
		\coordinate (a1) at (18:1.65); 
		\coordinate (b1) at (34:2.85); 
		\coordinate (a2) at (19:4.05); 
		\coordinate (b2) at (34:5.20); 
		\coordinate (a3) at (24:6.45); 
		
		\draw[shortcut] (s) -- (a1);
		\draw[shortcut] (a1) -- (a2);
		\draw[shortcut] (a2) -- (a3);
		
		\node[red!75!black, align=center] at (-10:3.75)
		{radial path added in Step~(1)};
		
		
		\begin{pgfonlayer}{foreground}
			\node[source, label=below left:{$s$}] at (s) {};
			
			\node[selanchor] at (a1) {};
			\node[selanchor] at (a2) {};
			\node[selanchor] at (a3) {};
			
			\node[skippedanchor] at (b1) {};
			\node[skippedanchor] at (b2) {};
			
			\node[below right=1pt and 1pt] at (a1) {$a_{i_1}$};
			\node[below right=1pt and 1pt] at (a2) {$a_{i_2}$};
			\node[above right=1pt and 1pt] at (a3) {$a_{i_3}$};
			
			\node[above=5pt] at (b1) {\scriptsize anchor in $i_1+1$};
			\node[above=5pt] at (b2) {\scriptsize anchor in $i_2+1$};
		\end{pgfonlayer}
		
	\end{tikzpicture}
	\caption{Shortcut scheme Step~(1). Within a fixed cone, we consider the anchor-bearing annular levels and select a maximal subsequence of non-adjacent levels \(i_1,i_2,\ldots\). For each selected level \(i_g\), let \(a_{i_g}\) be the corresponding anchor. Step~(1) adds the radial shortcut path through these anchors, namely the edges from \(s\) to \(a_{i_1}\) and from \(a_{i_g}\) to \(a_{i_{g+1}}\) for consecutive selected levels. Adjacent anchor-bearing levels are skipped and will be handled separately.}
	\label{fig:cone-path}
\end{figure}

\subsection{Stretch Analysis}
\label{subsec:stretch-analysis}

We prove that the graph constructed above is a \((1+\epsilon)\)-shallow graph from the root. We use the following convention in the shortcut scheme.  In
Step~\ref{alg:shortcut1}, for a selected level $i_r$ in a cone $C$, the point
$a_{i_r}$ is the anchor $\anc(C,A_{i_r})$, namely the closest terminal to $s$ in
$C\cap A_{i_r}$.  This convention is needed because the graph $G$ has vertex
set $S$.

We use the parameter setting above.  In particular, for all sufficiently small
$\epsilon$,
\[
\alpha=\epsilon^{1/3},
\qquad 0<\beta\le \frac12,
\qquad 0<\gamma<\frac{1}{10},
\qquad
\alpha\le \frac{\gamma}{80},
\]
and the cone angle satisfies
\[
\theta\le (1-4\gamma)\epsilon\le \epsilon .
\]

\begin{lemma}
	\label{lem:radial-prefix-bound}
	Fix a cone $C$, and let
	\[
	a_{i_1},a_{i_2},\ldots,a_{i_k}
	\]
	be the selected anchors on the radial path of $C$.  For every $1\le r\le k$,
	\[
	\wts(s, a_{i_1})+
	\sum_{g=1}^{r-1}\wts(a_{i_g},a_{i_{g+1}})
	\le
	\wts(s, a_{i_r})+\frac{20\theta^2}{\alpha^2}\wts(s, a_{i_r}).
	\]
\end{lemma}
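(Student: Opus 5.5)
The plan is to bound each edge $\wts(a_{i_g},a_{i_{g+1}})$ by the radial progress $\wts(s,a_{i_{g+1}})-\wts(s,a_{i_g})$ plus a small quadratic angular error, and then telescope. Write $\rho_g=\wts(s,a_{i_g})$. Both anchors lie in the cone $C$ of angle $\theta$, so the angle $\psi_g$ at $s$ between them is at most $\theta$. The law of cosines gives
\[
\wts(a_{i_g},a_{i_{g+1}})^2=(\rho_{g+1}-\rho_g)^2+2\rho_g\rho_{g+1}(1-\cos\psi_g)\le(\rho_{g+1}-\rho_g)^2+\rho_g\rho_{g+1}\theta^2 .
\]
Since $\sqrt{A^2+B}\le A+B/(2A)$ for $A>0$, we get
\[
\wts(a_{i_g},a_{i_{g+1}})\le(\rho_{g+1}-\rho_g)+\frac{\rho_g\rho_{g+1}\theta^2}{2(\rho_{g+1}-\rho_g)} .
\]

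The key input is the non-adjacency of selected levels, $i_{g+1}\ge i_g+2$. An anchor at level $i_g$ lies in $A_{i_g}$, so $\rho_g<R_{i_g}=(1+\alpha)^{i_g+1}(1+(j^\star+1)\beta\alpha)$. An anchor at level $i_{g+1}$ satisfies $\rho_{g+1}\ge L_{i_{g+1}}\ge(1+\alpha)^{i_g+2}(1+j^\star\beta\alpha)$. Therefore
\[
\frac{\rho_{g+1}}{\rho_g}\ge(1+\alpha)\frac{1+j^\star\beta\alpha}{1+(j^\star+1)\beta\alpha}\ge(1+\alpha)(1-\beta\alpha)\ge1+\alpha/3 ,
\]
using $\beta\le1/2$ and $\alpha$ small. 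This is the step that needs the most care, because the regions $A_i$ overlap. It is exactly why the scheme skips a level: with adjacent levels, the radial gap could be zero. Given this gap, $\rho_{g+1}-\rho_g\ge\frac{\alpha}{3+\alpha}\rho_{g+1}\ge\frac{\alpha}{4}\rho_{g+1}$. Substituting, the error term for step $g$ is at most $\frac{2\theta^2}{\alpha}\rho_g$.

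Summing from $g=1$ to $r-1$, the radial differences telescope together with $\rho_1=\wts(s,a_{i_1})$ to give exactly $\rho_r$. The error terms sum to at most $\frac{2\theta^2}{\alpha}\sum_{g<r}\rho_g$. The $\rho_g$ grow geometrically with ratio at least $1+\alpha/3$, so $\sum_{g<r}\rho_g\le\rho_r\sum_{t\ge1}(1+\alpha/3)^{-t}=\frac{3}{\alpha}\rho_r$. Hence the total error is at most $\frac{6\theta^2}{\alpha^2}\rho_r\le\frac{20\theta^2}{\alpha^2}\wts(s,a_{i_r})$, which is the claimed bound with room to spare.

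The base case $r=1$ is trivial. Beyond the gap estimate, the only other point to check is the elementary inequality $2(1-\cos\psi)\le\psi^2$.
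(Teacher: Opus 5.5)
Your proof is correct and follows the same route as the paper. The paper also bounds each step by the law of cosines, uses the non-adjacency gap $L_{i_g+2}/R_{i_g}\ge 1+\Theta(\alpha)$, telescopes the radial differences, and finishes with a geometric sum over the anchor radii.

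The only differences are bookkeeping. You keep the factor $1/2$ from $\sqrt{A^2+B}\le A+B/(2A)$ and charge each error to $\rho_g$ rather than $\rho_{g+1}$, which gives the sharper constant $6$ instead of $20$. This costs you the assumption $\alpha\le 1/3$ rather than the paper's $\alpha\le 1$, which is harmless since $\alpha=\epsilon^{1/3}$.
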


\begin{proof}
	Consider two consecutive selected anchors $x=a_{i_g}$ and $y=a_{i_{g+1}}$.
	Their levels differ by at least two, so $i_{g+1}\ge i_g+2$.  Since
	$x\in A_{i_g}$ and $y\in A_{i_{g+1}}$, we have
	\[
	\wts(s, x)<R_{i_g},
	\qquad
	\wts(s, y)\ge L_{i_g+2}.
	\]
	Therefore
	\[
	\frac{\wts(s, y)}{\wts(s, x)}
	>
	\frac{L_{i_g+2}}{R_{i_g}}
	=
	\frac{(1+\alpha)(1+j^\star\beta\alpha)}
	{1+(j^\star+1)\beta\alpha}.
	\]
	Moreover,
	\[
	\frac{(1+\alpha)(1+j^\star\beta\alpha)}
	{1+(j^\star+1)\beta\alpha}
	=1+
	\frac{\alpha(1-\beta)+j^\star\beta\alpha^2}
	{1+(j^\star+1)\beta\alpha}
	\ge 1+\frac{\alpha}{4},
	\]
	where we used $0<\beta\le 1/2$, $j^\star+1\le 1/\beta$, and $\alpha\le 1$.
	Thus
	\begin{equation}
		\label{eq:two-level-radial-gap}
		\wts(s, y)-\wts(s, x)
		\ge \frac{\alpha}{4}\wts(s, x).
	\end{equation}
	
	Let $\varphi$ be the angle between the rays $\overrightarrow{sx}$ and $\overrightarrow{sy}$.  Since $x$ and $y$
	lie in the same cone, $\varphi\le \theta$, by the law of cosines and the inequality
	$1-\cos\varphi\le \varphi^2/2$,
	\[
	\wts(x,y)^2
	= (\wts(s, y)-\wts(s, x))^2+2\wts(s, x)\wts(s, y)(1-\cos\varphi)
	\le (\wts(s, y)-\wts(s, x))^2+\wts(s, x)\wts(s, y)\theta^2.
	\]
	Hence
	\[
	\wts(x,y)-(\wts(s, y)-\wts(s, x))
	\le
	\frac{\wts(s, x)\wts(s, y)\theta^2}{\wts(s, y)-\wts(s, x)}
	\le
	\frac{4\theta^2}{\alpha}\wts(s, y),
	\]
	where the last inequality follows from \eqref{eq:two-level-radial-gap}. This gives
	\begin{equation}
		\label{eq:one-radial-step}
		\wts(s, x)+\wts(x,y)
		\le
		\wts(s, y)+\frac{4\theta^2}{\alpha}\wts(s, y).
	\end{equation}
	
	Summing \eqref{eq:one-radial-step} over $g=1,\ldots,r-1$ yields
	\[
	\wts(s, a_{i_1})+
	\sum_{g=1}^{r-1}\wts(a_{i_g},a_{i_{g+1}})
	\le
	\wts(s, a_{i_r})+\frac{4\theta^2}{\alpha}
	\sum_{g=2}^{r}\wts(s, a_{i_g}).
	\]
	The same radial-gap bound \eqref{eq:two-level-radial-gap} implies that the
	selected-anchor radii increase by a factor at least $1+\alpha/4$ at every step.
	Therefore
	\[
	\sum_{g=2}^{r}\wts(s, a_{i_g})
	\le
	\left(1+\frac{4}{\alpha}\right)\wts(s, a_{i_r})
	\le
	\frac{5}{\alpha}\wts(s, a_{i_r}),
	\]
	because $\alpha\le 1$.  Substituting this into the previous inequality proves
	the lemma.
\end{proof}

\begin{lemma}
	\label{lem:stretch}
	For every terminal $t\in S$, we have
	\[
	\dist_G(s,t)
	\le
	(1+\epsilon)\wts(s,t).
	\]
\end{lemma}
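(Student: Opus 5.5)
Fix $t\in S\setminus\{s\}$; the case $t=s$ is trivial. By \Cref{lem:tree-contain}, $t$ lies in some retained component $T\in F_h$ for some level $h$. Let $C$ be the cone containing $t$. Since $t\in C\cap A_h\cap S$, the anchor $\anc(C,A_h)$ exists, so $h$ is an anchor-bearing level of $C$. By Step~\ref{alg:shortcut1} there is a unique selected level $i_r$ with $h\in\{i_r,i_r+1\}$.

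Next take $v\in\net(T)$ with $\dist_T(v,t)\le\gamma\eps L_h$. I expect to need $v\in C$ so that Step~\ref{alg:shortcut2} supplies the edge $(v,a_{i_r})$. If $v$ lies in a neighboring cone, I would handle it by taking the anchor of that cone instead; a cleaner fix is to assume the net is built per cone, i.e.\ on $S(T)\cap C$. I will proceed assuming $(v,a_{i_r})\in G$. The candidate path is then the radial path from $s$ to $a_{i_r}$, followed by the edge $(a_{i_r},v)$, followed by the $\mst$ path from $v$ to $t$ inside $T$, which has length at most $\gamma\eps L_h$. By \Cref{lem:radial-prefix-bound}, the first part has length at most $\wts(s,a_{i_r})(1+20\theta^2/\alpha^2)$. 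Since $\theta\le\eps$ and $\alpha=\eps^{1/3}$, the error term is $O(\eps^{4/3})\wts(s,a_{i_r})=o(\gamma\eps)\,\wts(s,a_{i_r})$.

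The core estimate is $\wts(s,a_{i_r})+\wts(a_{i_r},v)$ versus $\wts(s,v)$. Write $\rho_a=\wts(s,a_{i_r})$ and $\rho_v=\wts(s,v)$. Both points lie in $C$, so their angle $\varphi$ satisfies $\varphi\le\theta\le(1-4\gamma)\eps$. Both radii lie within $A_{i_r}\cup A_{i_r+1}$, so $\rho_a,\rho_v\in[L_{i_r},R_{i_r+1})$, and their ratio is $1+O(\alpha)$. I split into two cases.

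\textbf{Case $\rho_a\le\rho_v$.} The law of cosines gives $\wts(a,v)\le(\rho_v-\rho_a)+\sqrt{\rho_a\rho_v}\,\varphi$. Hence $\rho_a+\wts(a,v)\le\rho_v+(1+O(\alpha))\theta\rho_v\le\rho_v(1+(1-4\gamma)\eps+O(\alpha\eps))$.

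\textbf{Case $\rho_a>\rho_v$.} Here $a_{i_r}$ is the closest point of $C\cap A_{i_r}$, so $v$ cannot lie in $A_{i_r}$. Thus $h=i_r+1$ and $v\in A_{i_r+1}\setminus A_{i_r}$. The only way $\rho_v<\rho_a$ is then impossible unless the two overlap, which holds because $A_{i_r}\cap A_{i_r+1}$ is the buffer. In that case $v$ is in the overlap and hence in $A_{i_r}$, a contradiction. So this case cannot occur.

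Adding everything, $\dist_G(s,t)\le\rho_v(1+(1-4\gamma)\eps+O(\alpha\eps)+O(\eps^{4/3}))+\gamma\eps L_h$. Using $\rho_v\le\wts(s,t)+\gamma\eps L_h$ and $L_h\le\wts(s,t)$, the total additive error is at most $(1-4\gamma)\eps+3\gamma\eps+O(\alpha\eps)$ times $\wts(s,t)$. Since $\alpha\le\gamma/80$, this is at most $\eps\,\wts(s,t)$.

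The step I expect to be the main obstacle is making the constants honest. The $O(\alpha)$ factor coming from $\rho_v/\rho_a$ multiplies $\theta$, and it must be absorbed by the $4\gamma$ slack in the cone angle, which is where $\alpha\le\gamma/80$ is used. The other delicate point is the cone-membership of the net point $v$.
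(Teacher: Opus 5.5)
Your argument is essentially the paper's proof. It uses the same net point and the same selected level $i_r$ covering $h$. It has the same three-way case analysis showing $\wts(s,a_{i_r})\le\wts(s,v)$, which is the paper's \eqref{eq:b-not-farther-than-u}, and the same radial-prefix bound. For $\wts(a_{i_r},v)$ you use the law of cosines where the paper uses an arc-plus-radial broken path; either works.

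There is one loose end, which you leave as an assumption. It closes exactly as the paper closes it: take $C$ to be the cone containing the net point $v$, not $t$. Since $v\in\net(T)\subseteq S\cap A_h$, level $h$ is anchor-bearing in that cone, so Step (2) adds $(v,a_{i_r})$ directly. None of your later estimates use the cone of $t$, so nothing else changes. A per-cone net, which would alter the algorithm, is not needed.

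The $(1+O(\alpha))$ factor you flagged as the main obstacle also does not arise. Once $\rho_a\le\rho_v$, you have $\sqrt{\rho_a\rho_v}\,\varphi\le\theta\rho_v$. The only lower-order term is then the radial-prefix error $20\theta^2/\alpha^2\le 20\alpha\eps\le\gamma\eps/4$, which the $4\gamma$ slack in the cone angle absorbs easily.
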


\begin{proof}
	The statement is trivial for $t=s$, so assume $t\ne s$.
	By \Cref{lem:tree-contain}, choose a level $h$ and a component
	$T\in F_h$ such that $t\in S(T)$.
		
	Let $u\in\net(T)$ be a net point minimizing the tree distance to $t$.  By the
	net property,
	\begin{equation}
		\label{eq:net-close}
		\dist_T(u,t)
		\le
		\gamma\epsilon L_h
		\le
		\gamma\epsilon\wts(s, t).
	\end{equation}
	Let $C$ be the cone containing $u$.  Since $u\in S(T)\subseteq A_h$, level $h$
	is anchor-bearing in the cone $C$.  By the greedy choice of the non-adjacent
	subsequence in Step~\ref{alg:shortcut1}, there is a selected level $i_r$ such
	that
	\[
	h\in\{i_r,i_r+1\}.
	\]
	Let
	\(
	b=a_{i_r}=\anc(C,A_{i_r}).
	\)
	By Step~\ref{alg:shortcut2}, the edge $(u,b)$ belongs to $G$. We first argue that
	\begin{equation}
		\label{eq:b-not-farther-than-u}
		\wts(s, b)\le \wts(s, u).
	\end{equation}
	If $h=i_r$, then both $b$ and $u$ lie in $C\cap A_{i_r}$, and $b$ is the
	closest terminal to $s$ in this set.  If $h=i_r+1$ and $u\in A_{i_r}$, the same
	argument applies.  Finally, if $h=i_r+1$ and $u\notin A_{i_r}$, then
	$\wts(s, u)\ge R_{i_r}$ whereas $b\in A_{i_r}$ implies $\wts(s, b)<R_{i_r}$.
	This proves \eqref{eq:b-not-farther-than-u}.
	
	Let $P$ be the path obtained by following the radial path of $C$ from $s$ to
	$b$, then the shortcut edge $(b,u)$, and finally the tree path in $T$ from $u$
	to $t$.  By \Cref{lem:radial-prefix-bound}, the radial prefix from $s$ to $b$
	has length at most
	\[
	\wts(s, b)+\frac{20\theta^2}{\alpha^2}\wts(s, b).
	\]
	Using \eqref{eq:net-close}, we have
	\[
	\wts(s, u)
	\le
	\wts(s, t)+\wts(u,t)
	\le
	\wts(s, t)+\dist_T(u,t)
	\le
	(1+\gamma\epsilon)\wts(s, t)
	\le 2\wts(s, t).
	\]
	Together with \eqref{eq:b-not-farther-than-u}, $\theta\le\epsilon$, and
	$\alpha=\epsilon^{1/3}$, this gives
	\[
	\frac{20\theta^2}{\alpha^2}\wts(s, b)
	\le
	20\epsilon^{4/3}\wts(s, u)
	\le
	40\alpha\epsilon\wts(s, t)
	\le
	\frac{\gamma}{2}\epsilon\wts(s, t),
	\]
	where the last inequality uses $\alpha\le\gamma/80$.
	Therefore the radial prefix has length at most
	\begin{equation}
		\label{eq:radial-prefix-final}
		\wts(s, b)+\frac{\gamma}{2}\epsilon\wts(s, t).
	\end{equation}
	
	Since $b$ and $u$ lie in the same cone and $\wts(s, b)\le\wts(s, u)$, we can connect
	$b$ to $u$ by first moving along the circle centered at $s$ of radius $\wts(s, b)$
	and then moving radially outward.  The straight segment is no longer than this
	broken path, so
	\begin{equation}
		\label{eq:bu-bound}
		\wts(b,u)
		\le
		\wts(s, u)-\wts(s, b)+\theta\wts(s, b).
	\end{equation}
	Combining \eqref{eq:net-close}, \eqref{eq:radial-prefix-final}, and
	\eqref{eq:bu-bound}, we obtain
	\[
	\begin{aligned}
		\wts(P)
		&\le
		\wts(s, b)+\frac{\gamma}{2}\epsilon\wts(s, t)
		+\wts(s, u)-\wts(s, b)+\theta\wts(s, b)
		+\dist_T(u,t) \\
		&\le
		\wts(s, u)+\theta\wts(s, b)
		+\frac{\gamma}{2}\epsilon\wts(s, t)
		+\gamma\epsilon\wts(s, t).
	\end{aligned}
	\]
	Using again $\wts(s, u)\le(1+\gamma\epsilon)\wts(s, t)$ and
	$\wts(s, b)\le\wts(s, u)$, we get
	\[
	\begin{aligned}
		\wts(P)
		&\le
		\wts(s, t)+\gamma\epsilon\wts(s, t)
		+(1-4\gamma)\epsilon(1+\gamma\epsilon)\wts(s, t)
		+\frac{\gamma}{2}\epsilon\wts(s, t)
		+\gamma\epsilon\wts(s, t) \\
		&\le
		\wts(s, t)+
		\left(2\gamma+(1-4\gamma)(1+\gamma)+\frac{\gamma}{2}\right)
		\epsilon\wts(s, t) \\
		&=
		\wts(s, t)+
		\left(1-\frac{\gamma}{2}-4\gamma^2\right)
		\epsilon\wts(s, t) \\
		&\le
		(1+\epsilon)\wts(s, t).
	\end{aligned}
	\]
	Thus $\dist_G(s,t)\le \wts(P)\le(1+\epsilon)\wts(s,t)$, as claimed.
\end{proof}

\subsection{Weight Analysis}
\label{subsec:weight-analysis}

We prove that the shortcut edges added by the revised construction have total
weight at most
\[
    \left(\frac{2\pi}{\sqrt{4\pi^2+1}}+o_{\epsilon}(1)\right)
    \frac{1}{\epsilon}\,\wts(\mst).
\]
We count duplicate shortcut edges only once, since $G$ is a graph with an edge
set rather than a multiset of edges.

Recall the parameter choice from the beginning of the section:
\[
    \alpha=\epsilon^{1/3},
    \qquad
    \beta=\frac{1}{\lceil \epsilon^{-1/12}\rceil},
    \qquad
    \gamma=\epsilon^{1/4},
\]
and the cone angle $\theta$ satisfies
\[
    (1-5\gamma)\epsilon\le \theta\le (1-4\gamma)\epsilon.
\]
In particular,
\[
    \beta=o_\epsilon(1),\qquad \gamma=o_\epsilon(1),\qquad
    \frac{\theta}{\epsilon}=1+o_{\epsilon}(1),
\]
and the error terms that will appear below satisfy
\[
    \frac{\epsilon^{1/6}}{\beta}=o_{\epsilon}(1),
    \qquad
    \frac{\epsilon^{1/3}}{\gamma}=o_{\epsilon}(1).
\]
For brevity put
\[
    c_\star=\frac{2\pi}{\sqrt{4\pi^2+1}}.
\]

\paragraph{Active and inactive paths.}
We next classify all shortcut edges as \EMPH{active} or \EMPH{inactive}.  
The only active edges will be radial-path edges from Step~\ref{alg:shortcut1}; 
every edge added in Step~\ref{alg:shortcut2} will be inactive.

For a cone $C$ whose radial path is nonempty, let $P(C)$ denote the whole radial
path added in Step~\ref{alg:shortcut1}.  Let $\anc(C)$ be the last selected anchor
on this path; we call $\anc(C)$ the \EMPH{terminal anchor} of $P(C)$.  Thus there
is at most one terminal anchor for each cone.

By \Cref{lem:tree-contain}, every terminal anchor belongs to $S(T)=T\cap S$
for at least one retained component $T\in F_h$ at some level $h$.  If there is
more than one possible choice, for instance because the anchor lies in an
overlap annulus, choose one such component once and for all.  We then say that
both the terminal anchor $\anc(C)$ and the radial path $P(C)$ are \emph{owned} by
$T$.  With this convention, if $\anc(C)$ is owned by a component of $F_h$, then
$\anc(C)\in A_h$.

We use a lifted angular order inside each retained component.  The topological
realization of a component $T\in F_h$ is a tree and is disjoint from the root
$s$.  Hence the polar-angle map $T\to \mathbb R/2\pi\mathbb Z$ admits a
continuous lift $\varphi_T:T\to\mathbb R$, unique up to adding an
integer multiple of $2\pi$.  We fix one such lift for each $T$.  To avoid
irrelevant ties, assume the cone partition has been infinitesimally rotated so
that no terminal lies on a cone boundary; equivalently, use the corresponding
half-open convention.  Since only finitely many terminal directions are involved,
this perturbation does not affect any estimate.

At the beginning, declare every radial path active.  We then process each
retained component $T\in F_h$ and possibly deactivate some of the radial paths
owned by $T$.  Let
\[
    A(T)=\{\anc(C) : P(C)\text{ is owned by }T\}
\]
be the set of terminal anchors owned by $T$, and list these anchors in increasing
lifted angular order:
\[
    a_1,a_2,\ldots,a_m,
    \qquad
    \varphi_T(a_1)<\varphi_T(a_2)<\cdots<
    \varphi_T(a_m).
\]
Let $C_r$ be the geometric cone whose radial path has terminal anchor $a_r$.

We now distinguish two cases.
\begin{enumerate}[(a),leftmargin=*]
    \item \textbf{Few terminal anchors.}
    If $m\le \epsilon^{-1/2}$, then we deactivate every radial path
    $P(C_r)$, $1\le r\le m$.  These paths are called
    \EMPH{type-(a) inactive}.

    \item \textbf{Many terminal anchors.}
    If $m>\epsilon^{-1/2}$, then only the two endpoint anchors in the lifted
    order are discarded: we deactivate $P(C_1)$ and $P(C_m)$.  These two paths
    are called \EMPH{type-(b) inactive}.  Every remaining path
    $P(C_r)$, $1<r<m$, stays active.
\end{enumerate}
This is the only source of active paths.  Thus an active radial path is exactly
a path $P(C_r)$ belonging to a component $T$ with many terminal anchors, where
its terminal anchor is not one of the two endpoints in the lifted angular order
for that component.

For each active path $P(C_r)$, with $1<r<m$, we will charge it to a crossing
subpath of $T$ inside the same geometric cone $C_r$.  The next lemma is the
formal version of this crossing statement.

\begin{lemma}
\label{lem:cone-crossing}
Let $T\in F_h$ be a retained component with many terminal anchors, listed in the
lifted angular order above, and let $P(C_r)$ be active with $1<r<m$.  Then there
are points $u,v\in T$ lying on the two boundary rays of the geometric cone
$C_r$ such that the tree path $T[u,v]$ is contained in $T\cap C_r$.  Moreover,
$u,v\in A_h$, and hence $\wts(s,u),\wts(s,v)\ge L_h$.
\end{lemma}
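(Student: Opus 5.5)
The plan is a continuity (intermediate value) argument along the tree path $T[a_{r-1},a_{r+1}]$, using the lifted angle $\varphi_T$. Terminal anchors are at most one per cone, and no terminal lies on a cone boundary. Hence $a_{r-1},a_r,a_{r+1}$ lie in three distinct geometric cones, and their lifted angles are strictly increasing. Let $C_r$ correspond, under the lift, to the angular interval $(\psi^-,\psi^+)$ of width $\theta$ that contains $\varphi_T(a_r)$.

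The first step is to show $\varphi_T(a_{r-1})<\psi^-$ and $\varphi_T(a_{r+1})>\psi^+$. The cone partition lifts to a partition of $\mathbb R$ into intervals of length $\theta$. Each anchor's lifted angle lies in the interior of the interval of its own cone. Because $a_{r-1}$ and $a_r$ are in different geometric cones, $\varphi_T(a_{r-1})$ cannot lie in $[\psi^-,\psi^+]$ unless it lies in the same lifted interval, which is impossible. The remaining subtle case is that $a_{r\pm1}$ lies in a lifted interval of the same geometric cone $C_r$ but shifted by a multiple of $2\pi$. That is harmless: it only means the winding is larger, and the interval $[\psi^-,\psi^+]$ still sits strictly between the two lifted values by monotonicity of the ordering.

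Next, I walk along the tree path $\pi=T[a_{r-1},a_{r+1}]$, which lies in $T$ because $T$ is connected. Parametrize it continuously and compose with $\varphi_T$, which is continuous on $T$. Let $u$ be the last point of $\pi$, before the first visit to $a_r$'s angle, at which $\varphi_T=\psi^-$. Let $v$ be the first point after $u$ at which $\varphi_T=\psi^+$. Both exist by the intermediate value theorem, since $\varphi_T$ goes from below $\psi^-$ to above $\psi^+$. More cleanly: choose a maximal subinterval of the parameter on which $\varphi_T\in[\psi^-,\psi^+]$ and whose endpoints take the values $\psi^-$ and $\psi^+$ respectively. Such a subinterval exists by considering the last time $\varphi_T=\psi^-$ before the first time $\varphi_T=\psi^+$. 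On this subpath the lifted angle stays in $[\psi^-,\psi^+]$, so its projection stays in the closed cone $C_r$. Thus $T[u,v]\subseteq T\cap C_r$, and $u,v$ lie on the two boundary rays.

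Finally, $u,v\in T\subseteq A_h$, since components of $F_h$ are restrictions of the subdivided MST to $A_h$; hence $\wts(s,u),\wts(s,v)\ge L_h$. I expect the main obstacle to be the lift bookkeeping, namely showing that adjacent anchors in the $\varphi_T$-order genuinely bracket the lifted copy of $C_r$ containing $a_r$ (the multiple-winding issue). The strict ordering plus disjointness of lifted cone intervals should settle it. The closed-versus-half-open boundary convention is handled by the infinitesimal rotation.
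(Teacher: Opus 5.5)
Your proposal is correct and is essentially the paper's proof. You use the one-terminal-anchor-per-cone property and the strict lifted order to get $\varphi_T(a_{r-1})<\psi^-$ and $\varphi_T(a_{r+1})>\psi^+$. You then take the last time the lifted angle equals $\psi^-$ before the first time it equals $\psi^+$ along $T[a_{r-1},a_{r+1}]$, and finish with $T\subseteq A_h$. Discard your first definition of $u$ (via the first visit to $\varphi_T(a_r)$): after that visit the path may dip back below $\psi^-$ before reaching $\psi^+$. Only your ``more cleanly'' choice guarantees $T[u,v]\subseteq C_r$.
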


\begin{proof}
Write $C_r$ as the geometric cone with lower angular boundary $\lambda$ and
upper angular boundary $\lambda+\theta$, modulo $2\pi$.  Let
\[
    I=[\lambda+2\pi q,\lambda+\theta+2\pi q]
\]
be the lifted copy of this cone that contains
$\varphi_T(a_r)$.  Because there is only one terminal anchor for each
geometric cone, no other owned anchor lies in the interior of this same lifted
copy.  Since $a_r$ is not an endpoint in the lifted order, the neighboring owned
anchors $a_{r-1}$ and $a_{r+1}$ satisfy
\[
    \varphi_T(a_{r-1})<\lambda+2\pi q
    \quad\text{and}\quad
    \varphi_T(a_{r+1})>\lambda+\theta+2\pi q .
\]

Parametrize the unique tree path in $T$ from $a_{r-1}$ to $a_{r+1}$ by a
continuous curve $\Gamma:[0,1]\to T$.  The lifted angle
$\varphi_T(\Gamma(t))$ is continuous, starts below the lower endpoint
of $I$, and ends above the upper endpoint of $I$.  Let $t_+$ be the first time
at which this lifted angle reaches $\lambda+\theta+2\pi q$, and let $t_-$ be
the last time before $t_+$ at which it equals $\lambda+2\pi q$.  By the choices
of $t_-$ and $t_+$, the subpath $\Gamma([t_-,t_+])$ has lifted angle contained
in $I$; geometrically, it is contained in the cone $C_r$ and has endpoints on
the two boundary rays of $C_r$.  These endpoints are the desired points $u$ and
$v$.

Finally, $T$ is a component of the forest $F_h$ obtained by restricting the
subdivided MST to $A_h$.  Hence the whole subpath, including $u$ and $v$, lies
in $A_h$.
\end{proof}

All remaining shortcut edges are inactive by definition.  More precisely,
every edge added in Step~\ref{alg:shortcut2} is classified as follows.
\begin{enumerate}[(a),start=3,leftmargin=*]
    \item \textbf{Net point shortcuts.}
    Suppose the selected level $i_r$ in cone $C$ covers level
    $h\in\{i_r,i_r+1\}$, and let $T\in F_h$.  Then every shortcut
    $(v,a_{i_r})$ with $v\in\net(T)\cap C$ is called
    \EMPH{type-(c) inactive}.

\end{enumerate}
\begin{lemma}
\label{lem:active}
The total weight of all active radial paths is at most
\[
    \left(c_\star+o_{\epsilon}(1)\right)
    \frac{1}{\epsilon}\,\wts(\mst).
\]
\end{lemma}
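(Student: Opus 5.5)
The plan is to make the heuristic Pythagorean charging from the overview rigorous in two stages.

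\textbf{Stage 1: bound each active path by its crossing.} Fix an active path $P(C_r)$, owned by $T\in F_h$. Let $R=\wts(s,\anc(C_r))$. By \Cref{lem:radial-prefix-bound}, $\wts(P(C_r))\le(1+20\theta^2/\alpha^2)R=(1+o_\epsilon(1))R$. Since $\anc(C_r)\in A_h$, we have $R<R_h\le(1+O(\alpha))L_h$. By \Cref{lem:cone-crossing}, there is a subpath $T[u,v]\subseteq T\cap C_r$ joining the two boundary rays of $C_r$, with $\wts(s,u),\wts(s,v)\ge L_h$. Hence its transverse width is at least $2L_h\sin(\theta/2)=(1-o_\epsilon(1))\theta L_h$. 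Define $\omega_r=\theta L_h$, so that
\[
\wts(P(C_r))\le(1+o_\epsilon(1))\,\omega_r/\theta .
\]
The crossings for distinct active paths are edge-disjoint. Within one level they lie in distinct cones, since each cone has one terminal anchor. Across levels they are disjoint up to the buffer overlap, which \Cref{lem:sum-components-revised} bounds by a factor $1+\beta$. Setting $W=\sum_r\omega_r$ over all active paths, the naive charge gives $\sum\wts(P)\le(1+o_\epsilon(1))W/\epsilon$.

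\textbf{Stage 2: prove $W\le(c_\star+o_\epsilon(1))\wts(\mst)$.} Each cone carries at most one active path, whose crossing lies at radius at most $R_h\le(1+O(\alpha))L$, where $L=\max_t\wts(s,t)$. So $W\le M_\epsilon\theta(1+o(1))L=(2\pi+o(1))L$.

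Next, extract the radial progress. Decompose every MST segment, inside each cone, into a radial part and a transverse part. For an edge piece of length $\ell$ with radial displacement $d\rho$ and angular displacement $\rho\,d\varphi$, we have $\ell=\int\sqrt{d\rho^2+\rho^2d\varphi^2}$. Let $X$ be the total transverse content and $Y$ the total radial content of the MST. Minkowski's inequality for the integral of a Euclidean norm gives
\[
\wts(\mst)\ge\sqrt{X^2+Y^2}.
\]
The MST path from $s$ to a farthest terminal contributes radial content at least $L$, so $Y\ge L$. Each crossing $T[u,v]$ sweeps the full angle $\theta$ at radius at least $L_h$, hence contributes transverse content at least $\omega_r$. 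Since crossings are essentially disjoint, $X\ge(1-o(1))W$.

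Because $t\mapsto t/\sqrt{t^2+L^2}$ is increasing, combining these gives
\[
\frac{W}{\wts(\mst)}\le\frac{(1+o(1))W}{\sqrt{W^2+L^2}}\le\frac{2\pi}{\sqrt{4\pi^2+1}}+o(1).
\]
The overlap factor $(1+\beta)$ and $\theta/\epsilon=1+o(1)$ are absorbed into the $o_\epsilon(1)$ term.

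\textbf{Main obstacle.} The hard step is justifying $\wts(\mst)\ge\sqrt{X^2+Y^2}$ with the crossings' transverse content counted correctly. Crossings from different levels can overlap in the buffer annuli, so I would restrict $X$ to the transverse content inside $\bigcup_h(A_h\cap C_r)$ and pay the factor $(1+\beta)$ there. A second issue is that a crossing's transverse content at radius $\rho\ge L_h$ is $\int\rho\,d\varphi\ge\theta L_h$; this holds because the lifted angle sweeps $I$ entirely. I expect the rest to be routine bookkeeping of the $O(\alpha)$ and $\theta^2/\alpha^2$ errors.
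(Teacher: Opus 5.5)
Your proof is correct, and its key step (Stage~2) takes a genuinely different and shorter route than the paper's.

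\textbf{How the paper does it.} The paper works cone by cone, in Cartesian coordinates attached to a boundary ray $\ell_P$:
\begin{itemize}
\item It splits each crossing at the radius-extremal points $x_P,y_P$ to get $\wts(\mst[u_P,v_P])\ge H_P+\sqrt{A_P^2+B_P^2}-B_P$.
\item It sums the vectors $(A_P,B_P)$ using the triangle inequality in $\mathbb{R}^2$.
\item It then needs a separate covering argument: the radii along the root-to-farthest path are covered by the intervals $[\wts(s,x_P),\wts(s,y_P)]$ together with the leftover $X=\mst\setminus\bigcup_P\mst[u_P,v_P]$.
\item Finally, a two-case optimization yields $(1+\beta)\wts(\mst)\ge W+(\sqrt{4\pi^2+(1-\pi\epsilon)^2}-2\pi)L$.
\end{itemize}

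\textbf{How your route differs.} You apply Minkowski's inequality once, to the polar decomposition $|z'|^2=\rho'^2+\rho^2\varphi'^2$ over the entire MST. This gives $\wts(\mst)\ge\sqrt{W^2+L^2}$ directly. Since $\sqrt{W^2+L^2}-W$ is decreasing in $W$, for $W\le 2\pi L$ this already implies a strengthening of the paper's inequality. It has no $\pi\epsilon L$ loss from $\cos\theta$ and no $(1+\beta)$ factor. You also measure a crossing by $\int\rho\,|d\varphi|\ge\theta L_h$, using that the lifted angle sweeps all of $I$; this replaces the paper's $H_P=\wts(s,v_P)\sin\theta$.

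\textbf{Trade-off.} The paper's argument stays with straight-line distances between finitely many projected points per crossing. Yours relies on the standard arclength formula in polar coordinates, valid off the null set where the MST touches $s$. In exchange it is considerably shorter and loses nothing in the constant.

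\textbf{Your main obstacle is not an obstacle.} Each cone has at most one radial path, so distinct active paths lie in distinct cones. Their crossings can therefore meet only along cone boundary rays, which carry no transverse content since $d\varphi=0$ there. Hence $X\ge W$ holds exactly, with no buffer bookkeeping and no $(1+\beta)$ factor. Had you only known disjointness outside the buffers, you would get the additive bound $X\ge W-\beta\,\wts(\mst)$, not the multiplicative $X\ge(1-o_\epsilon(1))W$ you wrote. That additive bound would still close the argument (giving $W\le(c_\star+\beta)\wts(\mst)$), but it is not needed.
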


\begin{proof}
Fix an active radial path $P=P(C)$, and let $a=\anc(C)$ be its terminal anchor.
Suppose $a$ is owned by a component $T\subseteq F_h$.  By
\Cref{lem:cone-crossing}, choose a crossing path
$\mst[u_P,v_P]\subseteq T\cap C$ with endpoints on the two boundary rays of
$C$.  Let $x_P$ and $y_P$ be points of $\mst[u_P,v_P]$ minimizing and maximizing
$\wts(s,z), z\in \mst[u_P,v_P]$, respectively.  Let $p_P,q_P,r_P$ be the orthogonal
projections of $v_P,x_P,y_P$, respectively, onto the boundary ray of $C$
containing $u_P$.  Let $e_P$ and $f_P$ be the projections of $x_P$ and $y_P$
onto the line through $v_P$ perpendicular to that boundary ray.
We first prove the Euclidean excess estimate used in the charging argument.
Let $\ell_P$ be the boundary ray of $C$ containing $u_P$.  Let $\mathbf t$ be
the unit vector along $\ell_P$, directed away from $s$, and let $\mathbf n$ be
the unit vector perpendicular to $\ell_P$ pointing into the cone $C$. See \Cref{fig:cone-project} for an illustration.

\begin{figure}[t]
	\centering
	\begin{tikzpicture}[
	scale=1.0,
	>=Latex,
	point/.style={circle,fill=black,inner sep=1.3pt},
	proj/.style={densely dashed,thin},
	treepath/.style={very thick},
	measure/.style={<->,thin},
	every node/.style={font=\small}
	]
	
	\coordinate (s)        at (-4,0);
	\coordinate (ellend)   at (9.4,0);
	\coordinate (upperend) at (9.4,4.38); 
	
	\coordinate (u) at (2.45,0);       
	\coordinate (v) at (6.30,3.37);    
	\coordinate (x) at (1.80,0.55);    
	\coordinate (y) at (8.10,1.50);    
	
	\coordinate (p) at (6.30,0);       
	\coordinate (q) at (1.80,0);       
	\coordinate (r) at (8.10,0);       
	
	\coordinate (e) at (6.30,0.55);    
	\coordinate (f) at (6.30,1.50);    
	
	\fill[gray!10] (s) -- (ellend) -- (upperend) -- cycle;
	\draw[thick] (s) -- (ellend) node[below right] {$\ell_P$};
	\draw[thick] (s) -- (upperend);
	\node[point,label=below left:$s$] at (s) {};
	\draw (-3.1,0) arc[start angle=0,end angle=19,radius=0.9];
	\node at (-2.58,0.22) {$\theta$};
	
	\draw[->] (-1,2) -- +(0.70,0) node[right] {$\mathbf t$};
	\draw[->] (-1,2) -- +(0,0.70) node[above] {$\mathbf n$};
	
	\draw[thin] (6.30,-0.18) -- (6.30,3.55)
	node[above] {$m_P\perp\ell_P$};
	
	\draw[treepath]
	(u) .. controls (2.15,0.05) and (1.80,0.25) ..
	(x) .. controls (3.35,1.35) and (6.40,0.78) ..
	(y) .. controls (8,1.72) and (7.40,3.05) ..
	(v);
	\node at (4.15,1.5) {$T[u_P,v_P]\subseteq C$};
	
	
	\draw[proj] (v) -- (p);
	\draw[proj] (x) -- (q);
	\draw[proj] (y) -- (r);
	\draw[proj] (x) -- (e);
	\draw[proj] (y) -- (f);
	
	\draw[measure] (1.80,-0.6) -- node[below] {$A_P$} (8.10,-0.6);
	\draw[measure] (6.03,0.55) -- node[above left] {$B_P$} (6.03,1.50);
	
	\node[point,label=below right:$u_P$] at (u) {};
	\node[point,label=left:$v_P$] at (v) {};
	\node[point,label=above left:$x_P$] at (x) {};
	\node[point,label=above right:$y_P$] at (y) {};
	
	\node[point,label=below:$p_P$] at (p) {};
	\node[point,label=below left:$q_P$] at (q) {};
	\node[point,label=below:$r_P$] at (r) {};
	\node[point,label=below right:$e_P$] at (e) {};
	\node[point,label=above right:$f_P$] at (f) {};
	
	\draw (1.80,0.12) -- (1.92,0.12) -- (1.92,0);
	\draw (8.10,0.12) -- (8.22,0.12) -- (8.22,0);
	\draw (6.30,0.12) -- (6.42,0.12) -- (6.42,0);
	\draw (6.30,0.67) -- (6.18,0.67) -- (6.18,0.55);
	\draw (6.30,1.62) -- (6.18,1.62) -- (6.18,1.50);
	
\end{tikzpicture}
	\caption{The points $p_P,q_P,r_P$ are projected onto the cone boundary ray $\ell_P$, while $e_P,f_P$ are projected onto the line through $v_P$ perpendicular to $\ell_P$.}
	\label{fig:cone-project}
		
\end{figure}

For
$z\in C$, define
\[
z=s+\xi(z)\mathbf t+\zeta(z)\mathbf n .
\]
Since $C$ has angle less than $\pi/2$ for sufficiently small $\epsilon>0$, the
orthogonal projection of every point of $C$ onto the line spanned by $\ell_P$
lies on the ray $\ell_P$.

Set parameters:
\[
H_P=\zeta(v_P)=\wts(v_P,p_P),
\qquad
A_P=|\xi(y_P)-\xi(x_P)|=\wts(q_P,r_P),
\qquad
B_P=|\zeta(y_P)-\zeta(x_P)|=\wts(e_P,f_P).
\]
Traverse the tree path $\mst[u_P,v_P]$ from $u_P$ to $v_P$, and let
$z_1,z_2$ be the two points $x_P,y_P$ in the order in which they are first
encountered along this traversal.  The middle subpath satisfies
\[
\wts(\mst[z_1,z_2])
\ge
\wts(x_P,y_P)
=
\sqrt{A_P^2+B_P^2}.
\]
The two outside subpaths have length at least their total change in the
$\zeta$-coordinate, so
\[
\begin{aligned}
	\wts(\mst[u_P,z_1])+\wts(\mst[z_2,v_P])
	&\ge
	|\zeta(z_1)-\zeta(u_P)|
	+
	|\zeta(v_P)-\zeta(z_2)| .
\end{aligned}
\]
Because $u_P\in \ell_P$, we have $\zeta(u_P)=0$, while
$\zeta(v_P)=H_P$.  By the triangle inequality on the real line,
\[
\begin{aligned}
	H_P
	&=
	|\zeta(v_P)-\zeta(u_P)|  \\
	&\le
	|\zeta(z_1)-\zeta(u_P)|
	+
	|\zeta(z_2)-\zeta(z_1)|
	+
	|\zeta(v_P)-\zeta(z_2)|  \\
	&=
	|\zeta(z_1)-\zeta(u_P)|
	+
	B_P
	+
	|\zeta(v_P)-\zeta(z_2)| .
\end{aligned}
\]
Therefore
\[
\wts(\mst[u_P,z_1])+\wts(\mst[z_2,v_P])
\ge
H_P-B_P .
\]
Combining the middle and outside subpaths gives
\begin{equation}
	\label{eq:single-active-charge-revised}
	\wts(\mst[u_P,v_P])
	\ge
	\wts(v_P,p_P)
	+\bigl(\sqrt{\wts(q_P,r_P)^2+\wts(e_P,f_P)^2}
	-\wts(e_P,f_P)\bigr).
\end{equation}

For each active path, define
\[
A_P=\wts(q_P,r_P),
\qquad
B_P=\wts(e_P,f_P).
\]
Since $A_P,B_P\ge0$, the triangle inequality in $\mathbb R^2$ gives
\[
\begin{aligned}
\sqrt{\brac{\sum_{P\text{ active}}\wts(q_P,r_P)}^2+\brac{\sum_{P\text{ active}}\wts(e_P,f_P)}^2}
&=
\left\|
\sum_{P\text{ active}} (A_P,B_P)
\right\|_2
\le
\sum_{P\text{ active}} \|(A_P,B_P)\|_2\\
&=
\sum_{P\text{ active}} \sqrt{A_P^2+B_P^2}.
\end{aligned}
\]
Subtracting $\sum_{P\text{ active}}\wts(e_P,f_P)$ from both sides yields
\[
\begin{aligned}
	\label{eq:exact-active-excess-revised}
	&\sum_{P\text{ active}}\bigl(\sqrt{\wts(q_P,r_P)^2+\wts(e_P,f_P)^2}-\wts(e_P,f_P)\bigr)\\
	&\ge
	\sqrt{\brac{\sum_{P\text{ active}}\wts(q_P,r_P)}^2+\brac{\sum_{P\text{ active}}\wts(e_P,f_P)}^2}-\sum_{P\text{ active}}\wts(e_P,f_P) .
\end{aligned}
\]

Let $L=\max_{t\in S}\wts(s,t)$, and let
\[
    X=\mst\setminus\bigcup_{P\text{ active}} \mst[u_P,v_P].
\]
For each active path,
\[
    \wts(q_P,r_P)
    \ge
    \wts(s,y_P)\cos\theta-\wts(s,x_P)
    \ge
    \wts(s,y_P)-\wts(s,x_P)-\frac{\theta^2}{2}L .
\]
There are exactly $2\pi/\theta$ cones under the divisibility convention, so
\[
\begin{aligned}
\label{eq:radial-progress-revised}
    \sum_{P\text{ active}}\wts(q_P,r_P)
    &\ge
    \sum_{P\text{ active}}(\wts(s,y_P)-\wts(s,x_P))-
    \pi\theta L\\
    &\ge
    \sum_{P\text{ active}}(\wts(s,y_P)-\wts(s,x_P))-\pi\epsilon L .
\end{aligned}
\]

Let $Q^{\mathrm{root}}$ be the path in $\mst$ from $s$ to a farthest terminal.
As one traverses $Q^{\mathrm{root}}$, the radial-distance function starts at
$0$ and reaches $L$.  Whenever a point of $Q^{\mathrm{root}}$ lies outside the
active crossing subpaths, its radial variation is paid for by the length of
$X$; whenever it lies inside an active crossing subpath $\mst[u_P,v_P]$, all
radii attained there lie in the interval $[\wts(s,x_P),\wts(s,y_P)]$ by the
definitions of $x_P$ and $y_P$.  Thus these intervals, together with the
portions of $Q^{\mathrm{root}}$ contained in $X$, cover every radius from $0$ to
$L$.  Since the radial-distance function is continuous along $\mst$, we have
\begin{equation}
\label{eq:cover-radius-revised}
    \sum_{P\text{ active}}(\wts(s,y_P)-\wts(s,x_P))+\wts(X)
    \ge L .
\end{equation}
Combining \eqref{eq:radial-progress-revised} and
\eqref{eq:cover-radius-revised},
\begin{equation}
\label{eq:q-plus-x-revised}
    \sum_{P\text{ active}}\wts(q_P,r_P)+\wts(X)
    \ge
    (1-\pi\epsilon)L .
\end{equation}
Also, since each active crossing lies in a cone of aperture $\theta$,
\[
    \wts(e_P,f_P)\le L\sin\theta
\]
for every active path.  Therefore
\begin{equation}
\label{eq:ef-sum-revised}
    \sum_{P\text{ active}}\wts(e_P,f_P)\le \frac{2\pi}{\theta}L\sin\theta\le 2\pi L .
\end{equation}

Using \eqref{eq:exact-active-excess-revised}, \eqref{eq:q-plus-x-revised}, and
\eqref{eq:ef-sum-revised}, we next argue that
\begin{equation}
	\label{eq:excess-plus-x-revised}
	\sum_{P\text{ active}}
	\left(\sqrt{\wts(q_P,r_P)^2+\wts(e_P,f_P)^2}-\wts(e_P,f_P)\right)
	+\wts(X)
	\ge
	\left(\sqrt{4\pi^2+(1-\pi\epsilon)^2}-2\pi\right)L .
\end{equation}
By \eqref{eq:q-plus-x-revised} and \eqref{eq:ef-sum-revised}, it says that
\[
\sum_{P\text{ active}}\wts(q_P,r_P)+\wts(X)\ge (1-\pi\epsilon)L.
\]
Since the function
\[
x\longmapsto \sqrt{\brac{\sum_{P\text{ active}}\wts(q_P,r_P)}^2+x^2}-x
\]
is nonincreasing for $x\ge0$, we have
\[
\sqrt{\brac{\sum_{P\text{ active}}\wts(q_P,r_P)}^2+x^2}-x
\ge
\sqrt{\brac{\sum_{P\text{ active}}\wts(q_P,r_P)}^2+4\pi^2L^2}-2\pi L.
\]
for any $x\leq 2\pi L$. Therefore, we have
\[
\begin{aligned}    
\sum_{P\text{ active}}
	&\left(\sqrt{\wts(q_P,r_P)^2+\wts(e_P,f_P)^2}-\wts(e_P,f_P)\right)
	+\wts(X)\\
    &\geq \sqrt{\brac{\sum_{P\text{ active}}\wts(q_P,r_P)}^2+4\pi^2L^2}-2\pi L + \wts(X) .
\end{aligned}
\]
It remains to lower-bound the last expression under the constraint
$\sum_{P\text{ active}}\wts(q_P,r_P)+\wts(X)\ge (1-\pi\eps)L$.  If $\sum_{P\text{ active}}\wts(q_P,r_P)\ge (1-\pi\eps)L$, then
\[
\sqrt{\brac{\sum_{P\text{ active}}\wts(q_P,r_P)}^2+4\pi^2L^2}-2\pi L+\wts(X)
\ge
\sqrt{(1-\pi\eps)^2L^2+4\pi^2L^2}-2\pi L.
\]
If $\sum_{P\text{ active}}\wts(q_P,r_P)<(1-\pi\eps)L$, then $\wts(X)\ge (1-\pi\eps)L-\sum_{P\text{ active}}\wts(q_P,r_P)$, and therefore
\[
\begin{aligned}
	&\sqrt{\brac{\sum_{P\text{ active}}\wts(q_P,r_P)}^2+4\pi^2L^2}-2\pi L+\wts(X)\\
	&\ge
	\sqrt{\brac{\sum_{P\text{ active}}\wts(q_P,r_P)}^2+4\pi^2L^2}-2\pi L+(1-\pi\eps)L-\sum_{P\text{ active}}\wts(q_P,r_P) \\
	&=
	(1-\pi\eps)L-2\pi L+\brac{\sqrt{\brac{\sum_{P\text{ active}}\wts(q_P,r_P)}^2+4\pi^2L^2}-\sum_{P\text{ active}}\wts(q_P,r_P)}.
\end{aligned}
\]
The function
\[
x\longmapsto \sqrt{x^2+4\pi^2L^2}-x
\]
is nonincreasing for $x\ge0$, so for $x<(1-\pi\eps)L$ we have
\[
\sqrt{x^2+4\pi^2L^2}-x
\ge
\sqrt{(1-\pi\eps)^2L^2+4\pi^2L^2}-(1-\pi\eps)L.
\]
Thus again
\[
\sqrt{\brac{\sum_{P\text{ active}}\wts(q_P,r_P)}^2+4\pi^2L^2}-2\pi L+\wts(X)
\ge
\sqrt{(1-\pi\eps)^2L^2+4\pi^2L^2}-2\pi L,
\]
which proves
\eqref{eq:excess-plus-x-revised}.

The active crossing paths overlap only inside the sparse buffer annuli.  Indeed,
inside a fixed retained component they lie in distinct cone interiors; different
components of one forest $F_h$ are edge-disjoint; and two different annular
regions can share edges only in their selected buffer overlap.  Consequently the
quantity $\sum_P\wts(\mst[u_P,v_P])+\wts(X)$ is dominated by the total MST
weight plus the selected-buffer double count, and therefore, by
\eqref{eq:annulus-overlap},
\begin{align}
    (1+\beta)\wts(\mst)
    &\ge
    \sum_{P\text{ active}}\wts(\mst[u_P,v_P])+\wts(X) \notag\\
    &\ge
    W+
    \left(\sqrt{4\pi^2+(1-\pi\epsilon)^2}-2\pi\right)L,
\label{eq:mst-lower-active-revised}
\end{align}
where
\[
    W=\sum_{P\text{ active}}\wts(v_P,p_P).
\]
There is at most one active radial path in each cone.  Moreover every point of
$\mst$ lies in the disk of radius $L$ centered at $s$, because each MST edge is
a segment between terminals in that disk.  Thus
$\wts(v_P,p_P)=\wts(s,v_P)\sin\theta\le L\sin\theta$ for each active path, and
hence $W\le (2\pi/\theta)L\sin\theta\le 2\pi L$.  Therefore
\eqref{eq:mst-lower-active-revised} implies
\begin{equation}
\label{eq:vp-bound-revised}
    W
    \le
    \frac{1+\beta}
    {1+\frac{\sqrt{4\pi^2+(1-\pi\epsilon)^2}-2\pi}{2\pi}}
    \wts(\mst).
\end{equation}

It remains to compare an active radial path with $W$. Let $P=P(C)$ be active, let $a=\anc(C)$ be its terminal anchor, and suppose $a$ is
owned by a component of $F_h$.  Then $a\in A_h$.  The endpoint $v_P$ of the
charged crossing path also lies in $A_h$, by \Cref{lem:cone-crossing}.  Hence
\[
    \wts(s,a)
    \le R_h
    \le (1+3\alpha)L_h
    \le (1+3\alpha)\wts(s,v_P).
\]
Furthermore,
\[
    \wts(v_P,p_P)=\wts(s,v_P)\sin\theta .
\]
By the radial-prefix bound from \Cref{lem:radial-prefix-bound},
\[
    \wts(P)
    \le
    \left(1+\frac{20\theta^2}{\alpha^2}\right)\wts(s,a).
\]
Consequently,
\begin{equation}
\label{eq:active-path-to-width-revised}
    \wts(P)
    \le
    \frac{(1+20\theta^2/\alpha^2)(1+3\alpha)}{\sin\theta}
    \wts(v_P,p_P).
\end{equation}
Combining \eqref{eq:vp-bound-revised} and
\eqref{eq:active-path-to-width-revised},
\[
\sum_{P\text{ active}}\wts(P)
\le
\frac{(1+\beta)(1+20\theta^2/\alpha^2)(1+3\alpha)}
{\sin\theta\left(1+\frac{\sqrt{4\pi^2+(1-\pi\epsilon)^2}-2\pi}{2\pi}\right)}
\wts(\mst).
\]
Multiplying the coefficient by $\epsilon$, and using
$\alpha=\epsilon^{1/3}$, $\beta=o_\epsilon(1)$, $\theta/\epsilon=1+o_{\epsilon}(1)$,
and $\gamma=o_\epsilon(1)$, gives
\[
\begin{aligned}
\epsilon\cdot
\frac{(1+\beta)(1+20\theta^2/\alpha^2)(1+3\alpha)}
{\sin\theta\left(1+\frac{\sqrt{4\pi^2+(1-\pi\epsilon)^2}-2\pi}{2\pi}\right)} = 
\left(1+o_{\epsilon}(1)\right)
\frac{2\pi}{\sqrt{4\pi^2+1}}
= c_\star+o_{\epsilon}(1).
\end{aligned}
\]
Therefore the active radial paths have total weight at most
\[
    \left(c_\star+o_{\epsilon}(1)\right)
    \frac{1}{\epsilon}\,\wts(\mst).
    \qedhere
\]
\end{proof}

\paragraph{Inactive shortcuts.}
We now bound all inactive edges.  We use the following elementary lower bound on
the weight of a connected component.

\begin{lemma}
\label{lem:tree-weight-lb}
Let $T\subseteq F_h$ be a connected component.  Then
\[
    \wts(T)\ge \frac{\beta\alpha}{2}L_h .
\]
\end{lemma}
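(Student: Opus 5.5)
The plan is to use the fact that every retained component $T\in F_h$ both meets the non-buffer part $A_h\setminus B_h$ and contains a point at which it must leave that non-buffer part. The tree therefore crosses one of the two buffer annuli radially, and the buffer width supplies the bound.

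\emph{Step 1: $T$ contains a point at the edge of a buffer.} By the retention rule, $T$ contains a point $z\in A_h\setminus B_h$, so
\[
(1+\alpha)^h(1+(j^\star+1)\beta\alpha)\le \wts(s,z)<(1+\alpha)^{h+1}(1+j^\star\beta\alpha).
\]
Assume $h$ is not a degenerate outermost level. Then $\mst$ contains points outside $A_h$: the root $s$ lies inside the unit disk, hence strictly inside $L_h$. The $\mst$ path from $z$ to $s$ starts in $T$ and eventually exits $A_h$. Since $T$ is the component of $\mst\cap A_h$ containing $z$, the part of this path before its first exit lies in $T$. That part reaches either the inner boundary of $A_h$ (radius $L_h$) or the outer boundary (radius $R_h$). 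In either case it must first traverse the whole selected buffer annulus adjacent to that boundary:
\begin{itemize}
\item inward: $A_h^{(j^\star)}$, of radial width $(1+\alpha)^h\beta\alpha$;
\item outward: $A_{h+1}^{(j^\star)}$, of width $(1+\alpha)^{h+1}\beta\alpha$.
\end{itemize}
The path starts in $A_h\setminus B_h$, so it is outside both buffers at its start.

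\emph{Step 2: length bound.} Euclidean length is at least the change in $\wts(s,\cdot)$, which is $1$-Lipschitz. Hence
\[
\wts(T)\ge (1+\alpha)^h\beta\alpha .
\]
Since $L_h=(1+\alpha)^h(1+j^\star\beta\alpha)\le (1+\alpha)^h(1+\alpha)\le 2(1+\alpha)^h$, using $j^\star\beta<1$ and $\alpha\le1$, this gives $\wts(T)\ge \frac{\beta\alpha}{2}L_h$.

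\emph{Main obstacle: the degenerate case.} The path from $z$ to $s$ might not need to leave through the outer boundary, and $T$ might contain all of $\mst$ beyond $L_h$, so the exit-based argument could be incomplete. I expect the inward direction always to work, because $s$ is at distance $0<L_h$: the root path must cross radius $L_h$, and its prefix inside $A_h$ belongs to $T$.

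So the cleaner approach is to always use the inner buffer. The path from $z$ to $s$ lies in $T$ until it first leaves $A_h$, and it cannot leave through the outer boundary before ... that is not guaranteed, so I would argue as follows. If the first exit is outward, the outer buffer is traversed, which gives the larger bound. If it is inward, the inner buffer is traversed. Either way the bound holds. The only care needed is the half-open boundary convention and Steiner subdivision points on $\partial A_h$, handled as in \Cref{lem:tree-contain}.
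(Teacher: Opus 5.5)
Your proof is correct and matches the paper's argument. A retained component contains a point of $A_h\setminus B_h$. Following the $\mst$ path from that point to $s\notin A_h$ up to its first exit from $A_h$ stays in the component and reaches $\partial A_h$, so the component radially crosses one of the two selected buffer annuli, whose width is at least $(1+\alpha)^h\beta\alpha\ge\frac{\beta\alpha}{2}L_h$. Your ``degenerate case'' concern is vacuous: $s\notin A_h$ for every $h$, and the first-exit dichotomy (inward or outward) already covers every case, as in the paper.
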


\begin{proof}
The component $T$ contains a boundary point of $A_h$: since $s\notin A_h$ and
$\mst$ connects every point of $T$ to $s$, the original MST path from any point
of $T$ to $s$ must leave $A_h$, and after subdivision it leaves through a
boundary vertex of the restricted component.  The retained component also
intersects $A_h\setminus B_h$.  Therefore a path in $T$ from the boundary point
to $A_h\setminus B_h$ crosses either the inner selected buffer annulus
$A_h^{(j^\star)}$ or the outer selected buffer annulus $A_{h+1}^{(j^\star)}$.
The radial width of the inner selected buffer is
\[
    (1+\alpha)^h\beta\alpha
    \ge \frac{\beta\alpha}{2}L_h,
\]
using $L_h=(1+\alpha)^h(1+j^\star\beta\alpha)\le 2(1+\alpha)^h$ for
$\alpha\le1$.  The outer selected buffer is no thinner.  Since Euclidean length
is at least radial progress, $\wts(T)\ge (\beta\alpha/2)L_h$.
\end{proof}

\begin{lemma}
\label{lem:inactive}
The total weight of all inactive shortcut edges is
\[
    O\left(\frac{\wts(\mst)}{\epsilon^{11/12}}\right)
    =o_\epsilon\left(\frac{\wts(\mst)}{\epsilon}\right).
\]
\end{lemma}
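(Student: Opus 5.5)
We bound the three types of inactive edges separately, each by charging to the weight of retained components $T\in F_h$ and then summing with \Cref{lem:sum-components-revised}, which gives $\sum_h\sum_{T\in F_h}\wts(T)\le(1+\beta)\wts(\mst)$. The one tool needed in all three cases is \Cref{lem:tree-weight-lb}: it converts a quantity of order $L_h$ attached to a component into $O(\wts(T)/(\beta\alpha))$. With $\alpha=\epsilon^{1/3}$ and $\beta\approx\epsilon^{1/12}$, this factor is $O(\epsilon^{-5/12})$.

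\textbf{Type-(a) and type-(b) paths.} Let $P(C)$ be an inactive radial path owned by $T\in F_h$. Its terminal anchor lies in $A_h$, so $\wts(s,\anc(C))<R_h\le(1+3\alpha)L_h$. By \Cref{lem:radial-prefix-bound}, and since $20\theta^2/\alpha^2=O(\epsilon^{4/3})$ is tiny, we get $\wts(P(C))\le 2R_h=O(L_h)$.
\begin{itemize}
\item If $T$ has few terminal anchors (type (a)), it owns at most $\epsilon^{-1/2}$ such paths, of total weight $O(\epsilon^{-1/2}L_h)=O(\epsilon^{-1/2}\wts(T)/(\beta\alpha))=O(\epsilon^{-11/12}\wts(T))$.
\item If $T$ has many terminal anchors (type (b)), only two paths are deactivated, giving the even smaller bound $O(\wts(T)/(\beta\alpha))$.
\end{itemize}
Each path is owned by exactly one component, so summing over all $T$ gives $O(\epsilon^{-11/12}\wts(\mst))$ for these paths. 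This is the dominant term, and it explains the exponent $11/12=1/2+1/3+1/12$.

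\textbf{Type-(c) edges.} Fix $h$ and $T\in F_h$. Each $v\in\net(T)$ lies in exactly one cone $C$. The level $h$ is covered by exactly one selected level $i_r\in\{h-1,h\}$ of $C$, so $v$ receives exactly one shortcut, namely $(v,a_{i_r})$. Both endpoints lie in $C$ and have radii in $[L_{h-1},R_h)$. By the broken-path argument of \eqref{eq:bu-bound}, $\wts(v,a_{i_r})\le (R_h-L_{h-1})+\theta R_h=O(\alpha L_h)$; here we use $R_h/L_{h-1}=1+O(\alpha)$ and $\theta\le\epsilon\ll\alpha$.

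The packing bound gives $|\net(T)|\le 1+\wts(T)/(\gamma\epsilon L_h)$. So the type-(c) weight charged to $T$ is
\[
O(\alpha L_h)+O\!\left(\frac{\alpha}{\gamma\epsilon}\right)\wts(T).
\]
The first term is $O(\wts(T)/\beta)$ by \Cref{lem:tree-weight-lb}. The coefficient of the second term is $\alpha/(\gamma\epsilon)=\epsilon^{1/3-1/4-1}=\epsilon^{-11/12}$. Summing over components, the type-(c) weight is $O(\epsilon^{-11/12}\wts(\mst))$, matching the previous bound.

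\textbf{Main obstacle.} The delicate step is the type-(c) edge-length bound. We must check that a net point $v$ at level $h$ and its assigned anchor $a_{i_r}$, possibly from level $h-1$, really are within radial distance $O(\alpha L_h)$. This uses $R_{h}/L_{h-1}=(1+\alpha)^2(1+O(\beta\alpha))$. We must also check that the packing bound for nets measured in the tree metric applies to $T$ with its Steiner subdivision points, which holds because $\net(T)$ is separated in that tree metric. Duplicated edges are counted at most once, so these overcounts only help. Finally, all bounds are $o_\epsilon(\wts(\mst)/\epsilon)$ because $\epsilon^{-11/12}=o(\epsilon^{-1})$.
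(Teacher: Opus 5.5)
Your proposal is correct and follows the paper's proof essentially step for step. It uses the same three-way split into types (a), (b), (c), the same conversion of $O(L_h)$ into $O(\wts(T)/(\beta\alpha))$ via \Cref{lem:tree-weight-lb}, and the same $O(\alpha L_h)\cdot|\net(T)|$ count for type (c); your explicit broken-path estimate using $R_h/L_{h-1}=1+O(\alpha)$ simply spells out what the paper asserts. The differences are cosmetic: you charge the additive $1$ in $|\net(T)|$ separately as $O(\wts(T)/\beta)$ instead of absorbing it. Also, the packing bound for a $\delta$-separated set in a tree should read $1+2\wts(T)/\delta$, which is what the paper uses in this proof. A star with $m\ge 3$ leaves at distance just over $\delta/2$ from its center shows $1+\wts(T)/\delta$ can fail, but this only changes constants.
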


\begin{proof}
We bound the three inactive types separately.

\smallskip\noindent\textbf{Type (a).}
Consider a retained component $T\subseteq F_h$ with at most
$\epsilon^{-1/2}$ owned terminal anchors.  For each inactive radial path owned
by $T$, its terminal anchor lies in $A_h$.  By
\Cref{lem:radial-prefix-bound}, the whole radial path has length at most a
constant times $L_h$, for sufficiently small $\epsilon$.  Hence the total cost
charged to $T$ is at most $O(\epsilon^{-1/2}L_h)$.  By
\Cref{lem:tree-weight-lb}, this is
\[
    O\left(\frac{1}{\beta\alpha\epsilon^{1/2}}\right)\wts(T)
    =O(\epsilon^{-11/12})\wts(T),
\]
because $\alpha=\epsilon^{1/3}$ and $\beta=\Theta(\epsilon^{1/12})$.  Summing
over all retained components and using \eqref{eq:annulus-overlap} gives
$O(\epsilon^{-11/12}\wts(\mst))$.

\smallskip\noindent\textbf{Type (b).}
If $T\subseteq F_h$ owns more than $\epsilon^{-1/2}$ terminal anchors, only
the two extreme radial paths are type-(b) inactive.  Their total length is
$O(L_h)$, and by \Cref{lem:tree-weight-lb} this is
\[
    O\left(\frac{1}{\beta\alpha}\right)\wts(T)
    =O(\epsilon^{-5/12})\wts(T).
\]
After summing over all $T$, this is dominated by the type-(a) bound.

\smallskip\noindent\textbf{Type (c).}
Let $T\in F_h$, and let $v\in\net(T)\cap C$ be a net point for which the
algorithm adds an edge from $v$ to the selected radial anchor $a_{i_r}$ of cone
$C$.  Then $h\in\{i_r,i_r+1\}$.  Thus $v\in A_h$ and
$a_{i_r}\in A_{i_r}$, so the two endpoints lie in the same cone and in the same
or two neighboring annular regions.  Their radial coordinates differ by at most
$O(\alpha L_h)$, and the angular contribution is at most
$\theta R_h=O(\epsilon L_h)$.  Since $\epsilon\le\alpha$, every such edge has
length $O(\alpha L_h)$.

The net-size bound and \Cref{lem:tree-weight-lb} give
\[
    |\net(T)|
    \le
    1+\frac{2\wts(T)}{\gamma\epsilon L_h}
    =O\left(\frac{\wts(T)}{\gamma\epsilon L_h}\right),
\]
where the additive $1$ is absorbed because
$\wts(T)/(\gamma\epsilon L_h)\ge \beta\alpha/(2\gamma\epsilon)\to\infty$.
Each net point belongs to exactly one cone, so this bound also bounds the number
of type-(c) edges caused by this occurrence of $T$.  The type-(c) cost caused by
this component is at most
\[
    O(\alpha L_h)\cdot
    O\left(\frac{\wts(T)}{\gamma\epsilon L_h}\right)
    =O\left(\frac{\alpha}{\gamma\epsilon}\right)\wts(T)
    =O(\epsilon^{-11/12})\wts(T),
\]
using $\gamma=\epsilon^{1/4}$.  Summation over all components in $F_h$, and then
applying \Cref{lem:sum-components-revised} gives
$O(\epsilon^{-11/12}\wts(\mst))$.
\end{proof}

\begin{proof}[Proof of Theorem~\ref{thm:plane-slt}]
The stretch guarantee is exactly \Cref{lem:stretch}.  The MST edges contribute
$\wts(\mst)$.  By \Cref{lem:active,lem:inactive}, all shortcut edges together
have weight at most
\[
    \left(c_\star+o_{\epsilon}(1)\right)
    \frac{1}{\epsilon}\wts(\mst)
    +o_\epsilon\left(\frac{\wts(\mst)}{\epsilon}\right).
\]
Thus
\[
    \wts(G)
    \le
    \wts(\mst)+
    \left(c_\star+o_{\epsilon}(1)\right)
    \frac{1}{\epsilon}\wts(\mst)
    +o_\epsilon\left(\frac{\wts(\mst)}{\epsilon}\right) \le
    \left(c_\star+o_{\epsilon}(1)\right)
    \frac{1}{\epsilon}\wts(\mst).
\]
A shortest-path tree of $G$ rooted at $s$ has weight at most $\wts(G)$ and
preserves all root-distance upper bounds proved in \Cref{lem:stretch}.  This
completes the construction of the desired $(1+\epsilon)$-shallow tree.
\end{proof}
\section{Lower Bounds in Higher Dimensions}

\subsection{Three Dimensional Euclidean Space}
\LowerBoundThree*

Our construction will resemble a section of a square lattice mapped to the surface of the unit sphere, where the geodesic distance of each lattice edge is $(1+o_{\eps}(1))\eps$. The point $s$ is the center of the sphere, and $P$ contains the vertices of the lattice. The number of points is $\Theta(1/\eps^2)$. The weight of the minimum spanning tree is then $\Theta(1/\eps)$. Thus, the radial edge contributes only an $\eps$ fraction of $\wts(\mst)$. The distance between any two lattice points is more than $\eps$, forcing a direct connection from $s$ to every lattice point. We then obtain that the total weight of any $(1+\eps,\beta)$-SLT is $1/\eps^2$. However, there is no square lattice on a sphere. We know that within a small distortion, a square lattice on the sphere is similar to a square lattice in the plane. Thus, we only construct a lattice within distance $\eps^{1/4}$ from an equator. We note that any exponent less than $1/2$ should work.

Formally, consider a 3-dimensional Euclidean space $\mathbb{R}^3$, where the coordinate axes are the $x$-, $y$- and $z$-axes. 
Let $s$ be the origin. We first define a section of a square lattice in the $xy$-plane, and then map the points to the unit sphere. Specifically, we set the distance between any two consecutive lattice points to be $b = \eps + \eps^2$ in $xy$-plane. Let $N = \lfloor\frac{\eps^{1/4}}{b}\rfloor$ ($N = \Theta(\eps^{-3/4})$). For $0 \leq i, j \leq N$, we define
\begin{equation*}
    p_{i, j} = \left(ib, jb, \sqrt{1 - (ib)^2 - (jb)^2}\right).
\end{equation*}

The point $p_{i, j}$ is well-defined, since for $i, j \leq N$, $(ib)^2 + (jb)^2 \leq \eps^{1/4} < 1$. Set $P = \{s\} \cup \{p_{i, j}\}_{0 \leq i, j \leq N}$. It is immediate that $p_{i, j}$ lies on the unit sphere $\mathbb{S}^2$.

\begin{observation}
    \label{obs:no-grid-replace}
    For any two distinct lattice points, $(i, j)$ and $(i', j')$, we have $\wts(p_{i, j}, p_{i', j'}) \geq b$.
\end{observation}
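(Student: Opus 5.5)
The plan is to compare the Euclidean distance between two lattice points with the planar distance between their preimages in the $xy$-plane. The map $(x,y)\mapsto (x,y,\sqrt{1-x^2-y^2})$ only adds a third coordinate. So for $p=p_{i,j}$ and $p'=p_{i',j'}$ we have
\[
\wts(p,p')^2=(ib-i'b)^2+(jb-j'b)^2+\bigl(z-z'\bigr)^2\ \ge\ b^2\bigl((i-i')^2+(j-j')^2\bigr),
\]
where $z,z'$ denote the third coordinates. In other words, the vertical projection onto the $xy$-plane is $1$-Lipschitz, and the lifted distance is at least the distance between the planar lattice points $(ib,jb)$ and $(i'b,j'b)$.

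Since $(i,j)\neq(i',j')$ are integer pairs, $(i-i')^2+(j-j')^2\ge 1$. Therefore $\wts(p,p')\ge b$, which is exactly the statement.

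There is no real obstacle here. The only point to check is that the third coordinates are well defined, so that the squared-distance expansion is legitimate. This was already verified above: the condition $(ib)^2+(jb)^2<1$ holds for all $0\le i,j\le N$. I would also remark why the bound is useful downstream: $b=\eps+\eps^2>\eps$, so every pair of distinct lattice points is more than $\eps$ apart. This is what later forces each edge $(s,p_{i,j})$ into any $(1+\eps)$-shallow tree, since all lattice points lie at distance exactly $1$ from $s$.
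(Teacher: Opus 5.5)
Your proof is correct and is the same as the paper's. Dropping the third coordinate gives $\wts(p_{i,j},p_{i',j'}) \ge b\sqrt{(i-i')^2+(j-j')^2} \ge b$, because distinct integer pairs differ in at least one coordinate.
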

\begin{proof}
    We have
    \begin{equation*}
        \wts(p_{i, j}, p_{i', j'}) \geq \sqrt{(i - i')^2b^2 + (j - j')^2b^2} \geq b ,
    \end{equation*}
    since at least one of $(i - i')^2$ and $(j - j')^2$ is at least one. 
\end{proof}

We now show that the distance between any two consecutive lattice points is more than $\eps$.

\begin{claim}
    For any two pairs $(i, j)$ and $(i', j')$ such that $(i', j') \in \{(i + 1, j), (i, j + 1)\}$, 
    \begin{equation*}
        \wts(p_{i, j}, p_{i', j'}) = \eps + O(\eps^{3/2}) .
    \end{equation*}
\end{claim}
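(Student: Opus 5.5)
By symmetry between the two coordinates, it suffices to treat $(i',j')=(i+1,j)$. Write $z_{i,j}=\sqrt{1-(ib)^2-(jb)^2}$. The squared distance splits exactly into an in-plane part and a vertical part:
\[
\wts(p_{i,j},p_{i+1,j})^2 = b^2 + (z_{i,j}-z_{i+1,j})^2 .
\]
The plan is to show that the vertical term is $O(\eps^{5/2})$. Since $b=\eps+\eps^2$, this would give
\[
\wts(p_{i,j},p_{i+1,j})^2=\eps^2+O(\eps^{5/2}).
\]
Taking square roots then yields $\wts(p_{i,j},p_{i+1,j})=\eps\sqrt{1+O(\eps^{1/2})}=\eps+O(\eps^{3/2})$, as claimed.

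The key step is bounding $z_{i,j}-z_{i+1,j}$. Rationalizing gives
\[
z_{i,j}-z_{i+1,j}=\frac{((i+1)b)^2-(ib)^2}{z_{i,j}+z_{i+1,j}}=\frac{(2i+1)b^2}{z_{i,j}+z_{i+1,j}} .
\]

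For the numerator: since $0\le i\le N$ and $Nb\le\eps^{1/4}$, we get $(2i+1)b^2\le 2(N+1)b\cdot b\le 2(\eps^{1/4}+b)b=O(\eps^{5/4})$.

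For the denominator: $(ib)^2+(jb)^2\le 2\eps^{1/2}$, so each of $z_{i,j},z_{i+1,j}$ is at least $\sqrt{1-3\eps^{1/2}}\ge 1/2$ for small $\eps$. The index $i+1$ may equal $N+1$, which adds at most an $O(b)$ term; this is harmless, and in any case the claim is only needed for $i+1\le N$. Hence the denominator is at least $1$, and
\[
z_{i,j}-z_{i+1,j}=O(\eps^{5/4}),\qquad (z_{i,j}-z_{i+1,j})^2=O(\eps^{5/2}).
\]

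The only point needing care is the domain. The paper asserts $(ib)^2+(jb)^2\le\eps^{1/4}$, but the correct bound is $2\eps^{1/2}$; either way it is $o(1)$, which keeps the $z$-coordinates bounded away from $0$. I expect no real obstacle. Note that the lower bound $\ge b>\eps$ needed for the forced-spoke argument already follows from Observation~\ref{obs:no-grid-replace}, and the present claim supplies the matching upper estimate used for the MST weight.
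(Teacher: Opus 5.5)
Your proof is correct and follows the paper's argument essentially step for step: you split the squared distance into $b^2$ plus the squared difference of the third coordinates, rationalize that difference to $(2i+1)b^2/(z_{i,j}+z_{i+1,j})$, use $i\le N=O(\eps^{-3/4})$ to make it $O(\eps^{5/4})$, and take square roots; your version also gives the matching lower bound, since the vertical term is nonnegative. One small correction to your side remark: the paper's bound $(ib)^2+(jb)^2\le\eps^{1/4}$ is not wrong, only looser than yours, because $2\eps^{1/2}\le\eps^{1/4}$ whenever $\eps\le 1/16$.
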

\begin{proof}
    We may assume w.l.o.g. that $(i', j') = (i + 1, j)$; the remaining case can be shown similarly. First, observe that
    \begin{equation*}
    \begin{split}
        \sqrt{1 - (i + 1)^2b^2 - (jb)^2} - \sqrt{1 - (ib)^2 - (jb)^2}  &= \frac{2ib^2 + b^2}{\sqrt{1 - (i + 1)^2b^2 - (jb)^2} + \sqrt{1 - (ib)^2 - (jb)^2}}\\
        &= \frac{2ib^2 + b^2}{2 - o(\sqrt{\eps})} .
    \end{split}
    \end{equation*}
    Then, the squared distance between $p_{i, j}$ and $p_{i + 1, j}$ is at most
    \begin{equation*}
        \wts(p_{i, j}, p_{i + 1, j})^2 \leq b^2 + \left(\frac{2ib^2 + b^2}{2 - o(\sqrt{\eps})}\right)^2 = b^2 + O(\eps^{5/2}) 
    \end{equation*}
    since $i \leq \eps^{-3/4}$. Thus,
    \begin{equation*}
        \wts(p_{i, j}, p_{i + 1, j}) \leq \sqrt{b^2 + O(\eps^{5/2})} = \eps + O(\eps^{3/2}) ,
    \end{equation*}
    since $b = \eps + \eps^2$. 
\end{proof}

We can now prove \Cref{thm:LB-3D}.

\begin{proof}[Proof of \Cref{thm:LB-3D}]
    Consider the following spanning graph $G$ with vertex set $P$. The graph $G$ contains an edge from $s$ to $p_{0, 0}$. For each $j$ from $0$ to $N - 1$, add the edge $(p_{0, j}, p_{0, j + 1})$ to $G$. Then, for each $j$ from $0$ to $N$ and $i$ from $0$ to $N - 1$, add the edge $(p_{i, j}, p_{i + 1, j})$ to $G$. It is immediate that $G$ is connected, hence $\wts(\mst(P)) \leq \wts(G)$. We have
    \begin{equation*}
        \begin{split}
            \wts(G) &= \wts(s, p_{0, 0}) + \sum_{0 \leq j \leq N - 1} \wts(p_{0, j}, p_{0, j + 1}) + \sum_{0 \leq i \leq N - 1}\sum_{0 \leq j \leq N} \wts(p_{i, j}, p_{i + 1, j})\\
            &\leq 1 + N(\eps + O(\eps^{3/2})) + (N + 1) \cdot N \cdot (\eps + O(\eps^{3/2})) = 1 + N(N + 2)(\eps + O(\eps^{3/2}))\\
        \end{split}
    \end{equation*}

    Let $T$ be a $(1 + \eps, \beta)$-SLT of $P$ for some $\beta>0$. By \Cref{obs:no-grid-replace}, $T$ must contain the edge from $s$ to $p_{i, j}$ for all $0 \leq i, j \leq N$. Thus, $\wts(T) \geq (N + 1)^2$. Therefore,
    \begin{equation*}
        \beta = \frac{\wts(T)}{\wts(\mst(P))} \geq \frac{(N + 1)^2}{N(N + 2)(\eps + O(\eps^{3/2})) + 1} .
    \end{equation*}
    Since $N = \Theta(\eps^{-3/4})$, we have $\beta \geq \frac{(N + 1)^2}{N(N+2)\eps + O(1)}$. Assume that $\beta \geq \frac{(N + 1)^2}{N(N+2)\eps + c}$ for some constant $c > 1$.  
    Consider $1/\eps - \beta$. We have
    \begin{equation*}
        \begin{split}
            \frac{1}{\eps} - \beta &\leq \frac{1}{\eps} \cdot\frac{N(N+ 2)\eps + c - \eps(N + 1)^2}{N(N+2)\eps + c}\\
            &= \frac{1}{\eps} \cdot \frac{c - \eps}{N(N+2)\eps + c} = \frac{1}{\eps} \cdot \Theta(\sqrt{\eps}) .
        \end{split}
    \end{equation*}
    The last equation holds because $N = \Theta(\eps^{-3/4})$. Therefore, $\beta \geq \left(1 - \Theta(\sqrt{\eps})\right) \cdot \frac{1}{\eps}$. 
\end{proof}

\subsection{Four Dimensional Euclidean Space}

\subsubsection{Four Dimensional Euclidean Space: Weaker Warm-up Lower Bound}
\label{ssec:four-dim-lower-bound}

In this section, we show that there exist point sets in $\mathbb{R}^4$ for which any $(1 + \eps, \beta)$-SLT must have lightness $\beta$ strictly larger than $1/\eps$.

\begin{theorem}
    \label{thm:easy-4d-bound}
    For every sufficiently small $\eps > 0$, there exists a finite point set $P \subset \mathbb{R}^4$ and a root $s \in P$ such that every $(1 + \eps, \beta)$-SLT rooted at $s$ has lightness $\beta > \frac{1.0113}{\eps}$.
\end{theorem}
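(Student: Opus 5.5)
The plan is to realize the ``chain of tetrahedra'' gadget from the technical overview with explicit parameters. We then verify the forced-spoke property and bound the MST by an explicit spanning tree. The constant $1.0113$ is loose, so I would choose parameters with some slack rather than optimizing.

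\textbf{Construction.} Let $s$ be the origin of $\mathbb{R}^4$ and work near the point $e_4$ of the unit sphere $\mathbb{S}^3$, whose tangent space there is a copy of $\mathbb{R}^3$.
\begin{itemize}
\item Fix a regular tetrahedron of side $a=(1+\zeta)\eps$ in this tangent $\mathbb{R}^3$. Its circumradius is $\rho=\sqrt{6}a/4\approx 0.612a$.
\item Place $K$ translated copies along a line in the tangent space. Consecutive copies are arranged so that a vertex of one copy is at distance $a$ from a vertex of the next, and all other inter-copy distances are at least $a$.
\item Map every vertex radially onto $\mathbb{S}^3$. The patch has diameter $\eps^{1/4}$, as in the 3-dimensional proof of \Cref{thm:LB-3D}, so all distances change only by a factor $1+O(\eps^{1/2})$.
\item For each copy, add one auxiliary point $c$ on the segment from $s$ toward the spherical centroid, at distance $1-\delta$ from $s$.
\end{itemize}
Let $\ell=\wts(c,p)=\sqrt{\delta^2+\rho^2}\,(1+o(1))$ for the four vertices $p$ of the tetrahedron. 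The parameters are chosen so that
\[
1-\delta+\ell > 1+\eps .
\]
Taking $\ell-\delta=(1+\zeta)\eps$ forces $\ell=(\rho^2+\eps^2)/(2\eps)\,(1+o(1))\approx 0.6875\,\eps$.

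\textbf{Forced spokes.} Let $T$ be any $(1+\eps,\beta)$-SLT and let $p$ be a tetrahedron vertex; we show the edge $(s,p)$ lies in $T$. Suppose instead that the $s$--$p$ path in $T$ first leaves $s$ to some point $q\neq p$. There are two cases.
\begin{itemize}
\item If $q$ is another sphere vertex, then $\wts(s,q)=1$ and $\wts(q,p)\ge a>\eps$ (up to $1+O(\sqrt{\eps})$, absorbed by $\zeta$), as in \Cref{obs:no-grid-replace}.
\item If $q$ is an auxiliary point, then either $q$ is $p$'s own centroid point, which is excluded by the displayed inequality, or $q$ is a farther auxiliary point, which is even worse.
\end{itemize}
In either case the path is longer than $1+\eps$. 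Hence $\wts(T)\ge 4K$.

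\textbf{MST bound.} Build a spanning tree from:
\begin{itemize}
\item one edge from $s$ to the first copy, of weight $1$;
\item for each copy, the star from $c$ to its four vertices, of weight $4\ell$;
\item one link edge of length about $a$ between consecutive copies.
\end{itemize}
This gives $\wts(\mst)\le 1+K\bigl(4\cdot 0.6875+1+O(\zeta)+o(1)\bigr)\eps$. We need $K\eps\to\infty$, which holds since $K\approx\eps^{1/4}/\eps$. Then
\[
\beta\ge \frac{4}{3.75+O(\zeta)}\cdot\frac{1}{\eps}\bigl(1-o(1)\bigr)\approx\frac{1.066}{\eps},
\]
which exceeds $1.0113/\eps$ for small $\zeta$ and $\eps$.

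\textbf{Main obstacle.} The delicate step is the geometry of the chain. Linking copies through a single short vertex--vertex edge must not bring any two vertices of different copies within distance $\eps$. Likewise, no auxiliary point of a neighbouring copy may give a route $s\to c'\to p$ of length at most $1+\eps$. I would handle this by placing consecutive tetrahedra so that the shared-direction vertices face each other, and checking the finitely many relative positions in the flat $\mathbb{R}^3$ model. I would then transfer the estimates to the sphere with the same $O(\eps^{3/2})$ distortion bounds used in the 3-dimensional claim. If the tight placement fails, I would spread the copies apart and use a longer link of length $\kappa\eps$. This still leaves ample room above $1.0113$ as long as $4\cdot 0.6875+\kappa<4/1.0113\approx 3.955$, that is, $\kappa<1.2$.
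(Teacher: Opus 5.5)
There is a genuine gap in the step ``We need $K\eps\to\infty$, which holds since $K\approx\eps^{1/4}/\eps$''. In fact $K\eps\approx\eps^{1/4}\to 0$.

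A single row of copies inside a patch of diameter $\eps^{1/4}$, with consecutive copies $\Theta(\eps)$ apart, contains only $K=\Theta(\eps^{-3/4})$ tetrahedra. Your spanning tree therefore has weight $1+3.75(1+\zeta)K\eps=1+o(1)$, dominated by the single edge from $s$. The argument then yields only $\beta\ge 4K(1-o(1))=\Theta(\eps^{-3/4})=o(1/\eps)$.

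This is not just lossy bookkeeping. The star from $s$ to every point of your instance is a valid SLT of weight $O(K)$, while $\wts(\mst)\ge 1$, so the instance itself cannot certify lightness $\Omega(1/\eps)$. The 3-dimensional construction you borrowed the patch from avoids this only because it fills the patch with an $N\times N$ grid, i.e.\ $\Theta(\eps^{-3/2})$ points.

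Stretching the row around a great circle does not rescue the constant either. The radial edge then remains a constant fraction of the MST. With step $\sqrt2\,a$ between copies, your ratio is about $4K/(1+3.75K\eps)\approx 1.006/\eps<1.0113/\eps$.

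\textbf{A repair.} Fill the tangent patch in three (or two) dimensions, or follow a long curve as the Clifford-torus curve in \Cref{thm:four-dim-lower-bound} does. For example, take the tetrahedron $\{(1,1,1),(1,-1,-1),(-1,1,-1),(-1,-1,1)\}$, rescaled to side $a$. Use all its translates by the correspondingly rescaled $4\mathbb{Z}^3$ that fit in the patch. Then:
\begin{enumerate}
\item All vertices lie in one coset of the face-centred cubic lattice, whose minimum distance equals the side, and distinct copies give distinct points.
\item Grid-adjacent copies are at distance exactly $a$, so a spanning tree of the grid of copies costs one link of length $a$ per copy.
\item Every vertex of another copy is at distance at least $\sqrt{11/3}\,\rho>\rho$ from a copy's circumcentre. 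This settles your ``farther auxiliary point'' case.
\end{enumerate}
With $\Theta(\eps^{-9/4})$ copies, $K\eps\to\infty$ and your bound $4/(3.75\eps)\approx 1.066/\eps$ goes through.

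\textbf{Placement of the auxiliary point.} Your choice $\ell-\delta=(1+\zeta)\eps$ forces $\delta\approx-0.31\eps<0$. So $c$ lies outside $\mathbb{S}^3$ on the ray through the circumcentre, not on the segment from $s$. If $\delta\ge 0$, then $\ell-\delta\le\rho<\eps$ and the forced-spoke inequality is impossible.

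\textbf{Comparison with the paper.} The paper's proof of this warm-up never makes the radial edge negligible. It keeps a circle of radius just above $1$ whose net points are apexes of tetrahedra with alternately rotated base triangles, and it forces spokes only to the base points. It then optimizes net spacing and apex height in an explicit ratio that includes the radial edge. Your repaired gadget is essentially that of the paper's stronger $12/11$ construction, without the vertex sharing between consecutive tetrahedra.
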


We briefly outline the idea of our construction. Recall that the lower bound for SLTs in Euclidean plane is given by points on a unit circle and a root $s$ at the center of the circle. Every $(1+\eps,\beta)$-SLT is forced to connect $s$ to an $\eps$-net on the circle. The idea of our construction is to exploit the higher dimension to save the distance between two consecutive net points; that is, to force $k$ connections to a set of points whose contribution to the MST is strictly less than $k\eps$, and hence save enough to compensate for the radial edge in the MST.

We denote the coordinate axes of $\mathbb{R}^4$ 
by $x$-, $y$-, $z$- and $w$-axis, respectively;
and denote the standard basis vectors by $e_1,\ldots ,e_4$.
The point set consists of a root $s$ and a set of points constructed as follows. Consider a circle $C$ centered at $s$, and a $(c \eps)$-net on $C$, where $c < 1$. To each net point, we attach a tetrahedron: each tetrahedron has one vertex at the net point and the remaining three vertices form a base triangle. The goal of adding tetrahedra is to force connections from $s$ to all vertices in the bases, thereby saving $(1 - c)\eps$ along the circle net. Formally, let $s$ be the origin in $\mathbb{R}^4$, and let $C$ be the circle of radius $R = 1 + a\eps$ centered at $s$ in the $xy$-plane, for some constant $a>0$ specified later. Here, $a\eps$ is the height of a tetrahedron. Let $p_1, p_2, \ldots ,p_N$ be a $(c\eps)$-net of $C$, where $c\in (0,1)$ is a constant specified later. Hence, $N = \lfloor\frac{2\pi R}{c\eps}\rfloor$.
We assume that $N$ is even; otherwise, we drop the last point $p_N$.

For all $i\in\{1,\ldots, N\}$, let $u_i$ be a point on the segment $sp_i$ such that $\wts(s, u_i) = 1$. At each point $p_i$, we attach a tetrahedron whose base triangle lies in an affine 2-flat parallel to the $zw$-plane and contains the point $u_i$. Each base triangle is an equilateral triangle of side length strictly larger than $\eps$, but sufficiently close to $\eps$. Specifically, we choose the side length to be $\eps + \eps^3$. For any two consecutive base triangles, attached to $p_i$ and $p_{i + 1}$, we ``rotate'' the triangles by an angle large enough so that the connection from $s$ to a point in one base cannot be used to connect to a point in the other base while achieving a good approximate path. The rotation, together with the net distance, is sufficient to force connections to all base points. We set the rotation angle to be $\pi/3$. 
We define three base points, denoted $q_{i, j}$ for $j = 0, 1, 2$. If $i$ is even, then let
\begin{equation*}
    q_{i,j} = u_i + \frac{\eps + \eps^3}{\sqrt{3}}(\cos{(2\pi j /3)}e_3 + \sin{(2\pi j/3)}e_4).
\end{equation*}
If $i$ is odd, let
\begin{equation*}
    q_{i,j} = u_i + \frac{\eps + \eps^3}{\sqrt{3}}(\cos{(\pi/3 + 2\pi j /3)}e_3 + \sin{(\pi/3 + 2\pi j/3)}e_4).
\end{equation*}

Our point set is $P = \{s\} \cup \{p_i~:~i\in\{1,\ldots, N\}\} \cup \{q_{i, j}: i \in \{1, \ldots N\}, j \in \{0, 1, 2\}\}$. See \Cref{fig:warm-up} for an illustration.

\begin{figure}[h!]
	\centering
%

\begin{tikzpicture}[
    scale=0.92,
    every node/.style={font=\small},
    rootpoint/.style={circle, fill=red!70!black, inner sep=2pt},
    netpoint/.style={circle, fill=blue!70!black, inner sep=1.8pt},
    centerpoint/.style={circle, fill=black, inner sep=1.3pt},
    basepoint/.style={circle, fill=green!60!black, inner sep=1.5pt},
    >=Stealth
]

\begin{scope}
    \def\rIn{1.85}
    \def\rOut{2.35}
    \def\angA{26}
    \def\angB{58}
    \def\triR{0.22}

    \draw[->, gray] (-2.75,0) -- (2.75,0) node[right] {$x$};
    \draw[->, gray] (0,-2.55) -- (0,2.75) node[above] {$y$};

    \node[rootpoint,label=below left:{$s$}] (s) at (0,0) {};

    \draw[thick, blue!60!black] (0,0) circle (\rOut);

    \coordinate (pi) at ({\rOut*cos(\angA)},{\rOut*sin(\angA)});
    \coordinate (pj) at ({\rOut*cos(\angB)},{\rOut*sin(\angB)});
    \coordinate (ui) at ({\rIn*cos(\angA)},{\rIn*sin(\angA)});
    \coordinate (uj) at ({\rIn*cos(\angB)},{\rIn*sin(\angB)});

    \node[netpoint,label=above right:{$p_i$}] at (pi) {};
    \node[netpoint,label=above right:{$p_{i+1}$}] at (pj) {};

    \draw[gray] (s) -- (pi);
    \draw[gray] (s) -- (pj);
    \draw[thick, orange!80!black] (uj) -- (pj);

    \draw[thick, blue!60!black] (pi) arc[start angle=\angA, end angle=\angB, radius=\rOut];
    \draw[orange!80!black, ->]
        ({(\rOut+0.24)*cos(\angA)},{(\rOut+0.24)*sin(\angA)})
        arc[start angle=\angA, end angle=\angB, radius=\rOut+0.24];
    \node[orange!80!black] at ({(\rOut+0.50)*cos(42)},{(\rOut+0.50)*sin(42)}) {$c\eps$};

    \coordinate (qi0) at ($(ui)+(90:\triR)$);
    \coordinate (qi1) at ($(ui)+(210:\triR)$);
    \coordinate (qi2) at ($(ui)+(330:\triR)$);

    \coordinate (qj0) at ($(uj)+(150:\triR)$);
    \coordinate (qj1) at ($(uj)+(270:\triR)$);
    \coordinate (qj2) at ($(uj)+(30:\triR)$);

    \draw[green!50!black, thick] (qi0)--(qi1)--(qi2)--cycle;
    \draw[green!50!black, thick] (qj0)--(qj1)--(qj2)--cycle;

    \foreach \q in {qi0,qi1,qi2} {
        \node[basepoint] at (\q) {};
        \draw[dashed, green!50!black] (pi) -- (\q);
    }
    \foreach \q in {qj0,qj1,qj2} {
        \node[basepoint] at (\q) {};
        \draw[dashed, green!50!black] (pj) -- (\q);
    }

    \node[green!50!black] at ($(ui)+(0,-0.48)$){};
    \node[green!50!black] at ($(uj)+(0,0.52)$){};
\end{scope}

\end{tikzpicture} 
	\caption{The points set $P$ in the $xy$-plane. }
	\label{fig:warm-up}
\end{figure}

\begin{observation}
    \label{obs:net-base}
    For any $i \in \{1, \ldots, N\}$ and $j \in \{0, 1, 2\}$,  $\wts(p_i, q_{i, j}) = \eps\sqrt{a^2 + (1 + \eps^2)^2/3}$.
\end{observation}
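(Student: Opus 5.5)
The plan is a direct coordinate computation: write $p_i - q_{i,j}$ as a sum of two orthogonal components, one in the $xy$-plane and one in the $zw$-plane, and apply the Pythagorean theorem.

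First, the radial component. By construction, $p_i$ lies on the circle $C$ of radius $R = 1+a\eps$ in the $xy$-plane. The point $u_i$ lies on the segment $sp_i$ with $\wts(s,u_i)=1$. Since $s$, $u_i$ and $p_i$ are collinear, $p_i - u_i$ is a vector in the $xy$-plane of length $R-1 = a\eps$.

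Second, the transverse component. By definition,
\[
q_{i,j} - u_i = \frac{\eps+\eps^3}{\sqrt3}\bigl(\cos\psi\, e_3 + \sin\psi\, e_4\bigr),
\]
where $\psi = 2\pi j/3$ for even $i$ and $\psi = \pi/3 + 2\pi j/3$ for odd $i$. In either case this vector lies in the span of $e_3, e_4$ and has length $(\eps+\eps^3)/\sqrt3$, because $\cos^2\psi+\sin^2\psi = 1$.

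Third, combine the two components. Write
\[
p_i - q_{i,j} = (p_i - u_i) - (q_{i,j} - u_i).
\]
The first term lies in $\mathrm{span}(e_1,e_2)$ and the second in $\mathrm{span}(e_3,e_4)$, so they are orthogonal. Pythagoras then gives
\[
\wts(p_i,q_{i,j})^2 = a^2\eps^2 + \frac{(\eps+\eps^3)^2}{3} = \eps^2\Bigl(a^2 + \frac{(1+\eps^2)^2}{3}\Bigr).
\]
Taking square roots yields $\wts(p_i,q_{i,j}) = \eps\sqrt{a^2+(1+\eps^2)^2/3}$, as claimed.

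There is no real obstacle. The only point to check is that $u_i$ indeed differs from $p_i$ purely in the $xy$-plane, which holds because both $s$ and $p_i$ lie in that plane.
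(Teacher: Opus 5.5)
Your proposal is correct and follows the paper's own argument: the paper likewise applies the Pythagorean theorem to the orthogonal legs $\wts(p_i,u_i)=a\eps$ and $\wts(u_i,q_{i,j})=(\eps+\eps^3)/\sqrt{3}$. You additionally spell out why the two legs are orthogonal (one lies in $\mathrm{span}(e_1,e_2)$, the other in $\mathrm{span}(e_3,e_4)$), which the paper leaves implicit.
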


\begin{proof}
    By the Pythagorean theorem,
    \begin{equation*}
        \sqrt{\wts(p_i, u_i)^2 + \wts(u_i, q_{i, j})^2} = \sqrt{(a\eps)^2 + (\eps + \eps^3)^2/3} = \eps\sqrt{a^2 + (1 + \eps^2)^2/3},
    \end{equation*}
    as desired.
\end{proof}

\begin{claim}
    \label{clm:mst-wt-1}
    $\wts(\mst(P)) \leq R + 2\pi R + 3N\eps\cdot \sqrt{a^2 + (1 + \eps^2)^2/3}$.
\end{claim}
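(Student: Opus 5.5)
The plan is to exhibit an explicit spanning tree (or connected spanning graph) $G$ on $P$ and bound its weight term by term. Since $\wts(\mst(P))\le\wts(G)$, the three summands $R$, $2\pi R$, and $3N\eps\sqrt{a^2+(1+\eps^2)^2/3}$ should correspond to three groups of edges:
\begin{itemize}
\item one radial edge from the root to the circle;
\item a path along the net points of the circle;
\item the edges attaching the base points of each tetrahedron to its apex.
\end{itemize}

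First, I will include the edge $(s,p_1)$. Since $p_1$ lies on the circle $C$ of radius $R$ centered at $s$, this edge has weight exactly $R$.

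Second, I will add the path edges $(p_i,p_{i+1})$ for $i=1,\ldots,N-1$. These connect all net points into a single path. Each chord $(p_i,p_{i+1})$ is no longer than the arc of $C$ between $p_i$ and $p_{i+1}$. The arcs between consecutive net points along this path are interior-disjoint and together lie on $C$. Hence the total weight of these chords is at most the circumference $2\pi R$. This holds whether or not the last point was dropped to make $N$ even, since we only use a sub-collection of the arcs.

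Third, for every $i\in\{1,\ldots,N\}$ and $j\in\{0,1,2\}$, I will add the edge $(p_i,q_{i,j})$. By \Cref{obs:net-base}, each such edge has weight exactly $\eps\sqrt{a^2+(1+\eps^2)^2/3}$. There are $3N$ of them, which gives the third summand.

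The resulting graph is connected: $s$ connects to $p_1$, the net points form a path, and every base point hangs off its own net point. It spans all of $P$, because $P$ consists exactly of $s$, the $p_i$, and the $q_{i,j}$. It is in fact a tree with $|P|-1=4N$ edges. Therefore $\wts(\mst(P))\le\wts(G)$, and summing the three groups gives the claimed bound.

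There is no real obstacle here. The only step needing a word of justification is the chord-versus-arc comparison in the second step, which is immediate because a straight segment is never longer than any curve joining its endpoints.
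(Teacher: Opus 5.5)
Your proposal is correct and follows the paper's proof: both take the spanning graph consisting of one edge $(s,p_i)$ of weight $R$, the chain of consecutive net points of total weight at most $2\pi R$, and the $3N$ edges $(p_i,q_{i,j})$ whose lengths come from \Cref{obs:net-base}, and then bound $\wts(\mst(P))$ by the weight of this graph. Your added remarks (chord versus arc, the edge count $4N$, and the case where the last point is dropped) are valid details that the paper leaves implicit.
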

\begin{proof}
    Consider a spanning graph $H$ of $P$ consisting of an edge from $s$ to an arbitrary net point, which has weight $R$; edges from each net point to the next net point, whose total weight is at most the perimeter of the circle, namely $2\pi R$; and edges from each net point to the three vertices in the base of the tetrahedron attached to it. By the Pythagorean theorem, the length of each edge from a net point $p_i$ to a base vertex $q_{i, j}$ is $\eps\sqrt{a^2 + (1 + \eps^2)^2/3}$ by \Cref{obs:net-base}.
    
    Hence, the weight of $H$ is at most
    \begin{equation*}
        R + 2\pi R + \sum_{1 \leq i \leq N}\sum_{0 \leq j \leq 2}\wts(p_i, q_{i, j}) = R + 2\pi R + 3N\eps\sqrt{a^2 + (1 + \eps^2)^2/3}.
    \end{equation*}
    Since the weight of the MST of $P$ is at most the weight of any spanning subgraph, the claim follows.
\end{proof}

We then show that no net point can be used in an approximate path from $s$ to any base point. Observe that:

\begin{observation}
    \label{obs:root-base}
    $\wts(s, q_{i, j}) = \sqrt{1 + \eps^2(1 + \eps^2)^2/3}$. 
\end{observation}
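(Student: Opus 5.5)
The distance $\wts(s,q_{i,j})$ follows from one application of the Pythagorean theorem, using the orthogonal decomposition $q_{i,j}=u_i+(q_{i,j}-u_i)$. By construction, $u_i$ lies on the segment $sp_i$ with $\wts(s,u_i)=1$. Since $s$ is the origin and $p_i$ lies in the $xy$-plane, the vector $u_i-s=u_i$ lies in the $xy$-plane, that is, in the span of $e_1,e_2$.

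The displacement $q_{i,j}-u_i$ equals $\frac{\eps+\eps^3}{\sqrt3}$ times a combination of $e_3$ and $e_4$. Its coefficients are $(\cos\psi,\sin\psi)$, where $\psi=2\pi j/3$ if $i$ is even and $\psi=\pi/3+2\pi j/3$ if $i$ is odd. In both parity cases this vector lies in the span of $e_3,e_4$, so it is orthogonal to $u_i$. Its length is
\[
\frac{\eps+\eps^3}{\sqrt3}\sqrt{\cos^2\psi+\sin^2\psi}=\frac{\eps(1+\eps^2)}{\sqrt3}.
\]

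By Pythagoras we then get
\[
\wts(s,q_{i,j})^2=\|u_i\|^2+\|q_{i,j}-u_i\|^2=1+\frac{\eps^2(1+\eps^2)^2}{3}.
\]
Taking square roots gives the claim. The rotation by $\pi/3$ for odd $i$ plays no role here, since only the norm of the $(e_3,e_4)$ component matters. No step is a real obstacle; the only point to check is that the base triangle's plane through $u_i$ is spanned by $e_3,e_4$, which gives the orthogonality.
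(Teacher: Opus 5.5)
Your proof is correct and is essentially the argument the paper intends. The paper states this observation without proof, mirroring its Pythagorean proof of the preceding observation on $\wts(p_i,q_{i,j})$: since $u_i$ is a unit vector in the $xy$-plane and $q_{i,j}-u_i$ is a vector of length $(\eps+\eps^3)/\sqrt{3}$ in the span of $e_3,e_4$, the squared norms add to give $1+\eps^2(1+\eps^2)^2/3$.
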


We have the following claim.

\begin{claim}
    \label{clm:net-base-forbidden}
    Assume that $a \geq 17/50$. For every point $p_i$ and every base point $q$,
    \begin{equation*}
        \wts(s, p_i) + \wts(p_i, q) > (1 + \eps)\wts(s, q). 
    \end{equation*}
\end{claim}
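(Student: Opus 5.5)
We compare the two sides of the inequality directly: the left side is at least $R=1+a\eps$ plus a strictly positive term, and the right side is $(1+\eps)$ times a quantity only $O(\eps^2)$ above $1$. We split according to whether the base point $q$ belongs to the tetrahedron at $p_i$ or to another one.

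\textbf{Step 1: right-hand side.} By \Cref{obs:root-base},
\[
\wts(s,q)=\sqrt{1+\eps^2(1+\eps^2)^2/3}\le 1+\tfrac{\eps^2}{6}(1+\eps^2)^2 .
\]
Hence $(1+\eps)\wts(s,q)\le 1+\eps+O(\eps^2)$.

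\textbf{Step 2: left-hand side.} Here $\wts(s,p_i)=R=1+a\eps$, so it suffices to show $\wts(p_i,q)\ge (1-a)\eps+\Omega(\eps)$, with a margin that dominates the $O(\eps^2)$ error.

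If $q=q_{i,j}$ is in the base attached to $p_i$, then \Cref{obs:net-base} gives $\wts(p_i,q)=\eps\sqrt{a^2+(1+\eps^2)^2/3}$. We then need
\[
a+\sqrt{a^2+1/3}>1 .
\]
This is equivalent to $a^2+1/3>(1-a)^2$, i.e.\ $a>1/3$, and it holds with a constant margin for $a\ge 17/50$. Indeed $17/50=0.34$, and $0.34+\sqrt{0.1156+0.3333}\approx 0.34+0.670=1.010>1$. So the left side exceeds $1+1.01\eps$, which beats $1+\eps+O(\eps^2)$ for small $\eps$.

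If $q=q_{k,j}$ with $k\ne i$, we lower-bound $\wts(p_i,q)$ by its projection onto the $xy$-plane. The $xy$-component of $q$ is $u_k$, at distance $1$ from $s$ in the direction of $p_k$. Consider the distance in the $xy$-plane from $p_i$ (radius $R$, direction angle $\phi_i$) to $u_k$ (radius $1$, angle $\phi_k$), where $|\phi_i-\phi_k|\ge 2\pi/N\approx c\eps/R$. By the law of cosines this distance is at least roughly $\sqrt{(a\eps)^2+(c\eps)^2}\,(1-O(\eps))$, and the $zw$-offset $(\eps+\eps^3)/\sqrt3$ only increases it. Adding this offset orthogonally gives
\[
\wts(p_i,q)\ge \eps\sqrt{a^2+c^2+1/3}\,(1-O(\eps)) ,
\]
which is larger than in the same-tetrahedron case. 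The desired strict inequality therefore follows from the same computation.

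\textbf{Main obstacle.} The main work is bookkeeping of the second-order terms. We must verify that the constant margin $a+\sqrt{a^2+1/3}-1\approx 0.01$ (times $\eps$) survives the $O(\eps^2)$ terms on the right and the $(1-O(\eps))$ factors from the chord computation. This holds for all sufficiently small $\eps$, which is exactly what the theorem allows. One must also make sure the chord bound uses the true minimum angular separation $2\pi/N\ge c\eps/R$. This is fine because $N=\lfloor 2\pi R/(c\eps)\rfloor$, and in any case any $k\neq i$ gives at least the same-tetrahedron bound.
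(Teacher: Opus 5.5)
Your proof is correct and follows the paper's case split, with an identical first case (the margin $a+\sqrt{a^2+1/3}>1.01$ against $(1+\eps)\wts(s,q)=1+\eps+O(\eps^2)$). In the second case the paper skips the law-of-cosines chord estimate and any use of the net spacing: it notes that $\wts(p_i,u_k)\ge \wts(s,p_i)-\wts(s,u_k)=a\eps=\wts(p_k,u_k)$, so the orthogonal $zw$-offset gives $\wts(p_i,q)\ge\wts(p_k,q)$ and the first case applies. This is exactly your closing fallback remark, so the extra $c^2$ term you derive is not needed.
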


\begin{proof}
    Note that $\wts(s, p_i) = R$. We consider two cases: either $q$ belongs to the tetrahedron attached to $p_i$, or it does not. In the former case, we have
    \begin{equation*}
        \wts(s, p_i) + \wts(p_i, q)  = R + \eps\sqrt{a^2 + (1 + \eps^2)^2/3} = 1 + a\eps + \eps\sqrt{a^2 + (1 + \eps^2)^2/3} > 1 + \left(a + \sqrt{a^2 + \frac{1}{3}}\right)\eps. 
    \end{equation*}

    On the other hand,
    \begin{equation*}
        (1 + \eps)\wts(s, q) = (1 + \eps)\sqrt{1 + \frac{\eps^2(1 + \eps^2)^2}{3}} = 1 + \eps + O(\eps^2). 
    \end{equation*}
    Since $a \geq 17/50$, we have $a + \sqrt{a^2 + \frac{1}{3}} > 1.01$, and the claim follows for sufficiently small $\eps>0$.

    Now suppose that $q$ belongs to a tetrahedron attached to a net point other than $p_i$, say $p_{i'}$. It is enough to show that $\wts(p_i, q) > \wts(p_{i'}, q)$, since $\wts(s,p_i)=\wts(s,p_{i'})=R$ and the previous case then applies to $p_{i'}$. By the Pythagorean theorem,
    \begin{equation*}
        \wts(p_i, q)^2 = \wts(p_i, u_{i'})^2 + \wts(u_{i'}, q)^2.
    \end{equation*}
    We claim that $\wts(p_i, u_{i'}) > \wts(p_{i'}, u_{i'}) = a\eps$. By the triangle inequality,
    \begin{equation*}
        \wts(p_i, u_{i'}) > \wts(s, p_i) - \wts(s, u_{i'}) = a\eps,
    \end{equation*}
    implying $\wts(p_i, q)^2 > \wts(p_{i'}, u_{i'})^2 + \wts(u_{i'}, q)^2 = \wts(p_{i'}, q)^2$, or $\wts(p_i, q) > \wts(p_{i'}, q)$.
\end{proof}

We then show that a base point cannot be used in the approximate path to another base point.

\begin{claim}
    \label{clm:base-base-forbidden}
    Assume that $c > 0.82$ and $a \geq 17/50$. For every two distinct base points $q_1$ and $q_2$,
    \begin{equation*}
        \wts(s, q_1) + \wts(q_1, q_2) > (1 + \eps)\wts(s, q_2).
    \end{equation*}
\end{claim}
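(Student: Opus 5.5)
The plan is to exploit the fact that all base points are equidistant from the root. By \Cref{obs:root-base}, every base point $q$ satisfies $\wts(s,q)=r_0:=\sqrt{1+\eps^2(1+\eps^2)^2/3}=1+O(\eps^2)$. The claimed inequality is therefore equivalent to
\[
\wts(q_1,q_2)>\eps r_0=\eps+O(\eps^3).
\]
So it suffices to show that any two distinct base points are at distance at least $\eps$ plus a margin that dominates the $O(\eps^3)$ error. I would write $q_{i,j}=u_i+d_{i,j}$, where $u_i$ lies in the $xy$-plane on the unit circle and $d_{i,j}$ lies in the $zw$-plane with $|d_{i,j}|=\rho:=(\eps+\eps^3)/\sqrt3$. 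Since these two planes are orthogonal,
\[
\wts(q_{i,j},q_{i',j'})^2=\wts(u_i,u_{i'})^2+|d_{i,j}-d_{i',j'}|^2 .
\]
I would then split into three cases according to the relation between $i$ and $i'$.

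\emph{Case 1: same tetrahedron ($i=i'$).} The two points are distinct vertices of an equilateral triangle of side $\eps+\eps^3$. Since $\eps+\eps^3>\eps(1+\eps^2/6+O(\eps^4))=\eps r_0$ for small $\eps$, this case is immediate.

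\emph{Case 2: adjacent net points ($i'=i\pm1$, including the wrap-around pair $p_N,p_1$).} Because $N$ is even, consecutive indices have opposite parity, so the two triangles are rotated by $\pi/3$ relative to each other. Their six vertices together form a regular hexagon of circumradius $\rho$. Any two vertices taken from different triangles are therefore at least one hexagon side apart, giving $|d_{i,j}-d_{i',j'}|\ge\rho$. For the planar part, $u_i$ and $u_{i'}$ lie on the unit circle at angular separation at least $c\eps/R$ (one net step on the circle of radius $R=1+a\eps$). Hence $\wts(u_i,u_{i'})\ge 2\sin(c\eps/(2R))=c\eps(1-O(\eps))$. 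Combining the two parts,
\[
\wts(q_1,q_2)^2\ge c^2\eps^2(1-O(\eps))+\eps^2/3 .
\]
This exceeds $\eps^2(1+\Omega(1))$ as soon as $c^2>2/3$, that is, $c>0.8165$. The hypothesis $c>0.82$ gives a constant margin $(c^2+1/3-1)\eps^2$, which absorbs the $O(\eps^3)$ error terms.

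\emph{Case 3: non-adjacent net points.} The angular separation, measured the short way around the circle, is at least two net steps. Thus $\wts(q_1,q_2)\ge\wts(u_i,u_{i'})\ge 2c\eps(1-O(\eps))>1.6\eps$, which is more than enough without using the $zw$-component.

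The main obstacle is Case 2, where the bound is genuinely tight up to constants. There I have to check carefully two things. First, the net spacing really yields chord at least $c\eps(1-O(\eps))$ after projecting from radius $R$ to the unit circle, including at the wrap-around pair. Second, the hexagon argument holds, i.e.\ the minimum distance between vertices of the two rotated triangles is exactly $\rho$. The hypothesis $a\ge 17/50$ plays no role in this claim beyond bounding $R$, so that $c\eps/R=c\eps(1-O(\eps))$.
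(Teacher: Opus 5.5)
Your proof is correct and follows the paper's argument essentially step for step. Both reduce the claim to $\wts(q_1,q_2)>\eps\,\wts(s,q_2)$ using equal root distances, split into same, adjacent and non-adjacent tetrahedra, and in the adjacent case combine the orthogonal $xy$/$zw$ decomposition, the odd-multiple-of-$\pi/3$ rotation (your hexagon) giving a $zw$-gap of at least $(\eps+\eps^3)/\sqrt3$, and the condition $c^2>2/3$. Your explicit handling of the wrap-around pair $(p_N,p_1)$ via the parity of $N$ is slightly more careful than the paper, whose non-consecutive case implicitly uses $\wts(p_i,p_{i'})\ge\wts(p_i,p_{i+2})$, which fails for that pair.
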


\begin{proof}
    By \Cref{obs:root-base}, $\wts(s, q_1) = \wts(s, q_2) = \sqrt{1 + \eps^2(1 + \eps^2)^2/3}$. Thus, we only need to show that $\wts(q_1, q_2) > \eps \cdot \wts(s, q_2) = \eps\sqrt{1 + \eps^2(1 + \eps^2)^2/3} < \eps + \eps^3/3 + O(\eps^5)$.
    
    We consider three cases. Case $1$ is when $q_1$ and $q_2$ are in the base of the same tetrahedron. Then $\wts(q_1, q_2) = \eps + \eps^3 > \eps\sqrt{1 + \eps^2(1 + \eps^2)^2/3}$ for sufficiently small $\eps$.

    Case $2$ is when $q_1$ and $q_2$ are in two consecutive tetrahedra attached to $p_i$ and $p_{i + 1}$. Assume that $q_1 = u_i + \frac{\eps + \eps^3}{\sqrt{3}}(\cos{\theta}e_3 + \sin{\theta}e_4)$ and $q_2 = u_{i + 1} + \frac{\eps + \eps^3}{\sqrt{3}}(\cos{\theta'}e_3 + \sin{\theta'}e_4)$. By construction, $\theta-\theta'$ is an odd multiple of $\pi/3$. By the Pythagorean theorem,
    \begin{equation*}
    \begin{split}
        \wts(q_1, q_2)^2 &= \wts(u_i, u_{i + 1})^2 + \frac{(\eps + \eps^3)^2}{3}\left((\sin{\theta} - \sin{\theta'})^2 + (\cos{\theta} - \cos{\theta'})^2\right) \\
        &= \frac{\wts(p_i, p_{i + 1})^2}{R^2} + \frac{(\eps + \eps^3)^2}{3}(2 - 2\cos{\theta - \theta'}).
    \end{split}
    \end{equation*}
    Since $\theta-\theta'$ is an odd multiple of $\pi/3$, we have $\cos{\theta - \theta'} \leq 1/2$, and hence
    \begin{equation*}
        \wts(q_1, q_2)^2 \geq \frac{\wts(p_i, p_{i + 1})^2}{R^2} + \frac{(\eps + \eps^3)^2}{3} = c^2\eps^2/(1 + a\eps)^2 + \frac{(\eps + \eps^3)^2}{3}.    
    \end{equation*}
    Hence, $\wts(q_1, q_2) \geq \eps\sqrt{\frac{c^2}{(1 + a\eps)^2} + \frac{(1 + \eps^2)^2}{3}}$. Choosing $c$ slightly larger than $\sqrt{2/3}$, say $c \geq 0.82$, and taking $a \geq 17/50$, we obtain $\wts(q_1, q_2) > \eps \cdot \wts(s, q_2)$ for sufficiently small $\eps$.

    Case $3$ is when $q_1$ and $q_2$ are in two non-consecutive tetrahedra attached to $p_i$ and $p_{i'}$. In this case, their distance is at least $\wts(u_i, u_{i'})$. By the Pythagorean theorem, we have
    \begin{equation}
        \wts(q_1, q_2) \geq \wts(u_i, u_{i'}) = \frac{\wts(p_i, p_{i'})}{R}. 
    \end{equation}
    Since $p_i$ and $p_{i'}$ are non-consecutive, $\wts(p_i, p_{i'}) \geq \wts(p_i, p_{i + 2}) = 2c\eps - O(\eps^3) > 1.1\eps$ for sufficiently small $\eps>0$, provided $c > 0.6$. Therefore $\wts(q_1,q_2) > \eps \cdot \wts(s,q_2)$ for sufficiently small $\eps>0$.
\end{proof}

\begin{lemma}
    \label{lm:root-base-connection}
    Let $T$ be any $(1 + \eps, \beta)$-SLT of $P$ rooted at $s$. Then $T$ contains the edge from $s$ to $q_{i, j}$ for every $i \in \{1, \ldots, N\}$ and every $j \in \{0, 1, 2\}$.
\end{lemma}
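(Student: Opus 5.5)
We argue by contradiction via the first vertex of the tree path. Fix a base point $q=q_{i,j}$ and let $T$ be any $(1+\eps,\beta)$-SLT rooted at $s$. Consider the unique path in $T$ from $s$ to $q$, and let $x$ be its second vertex, i.e., $(s,x)$ is the first edge of the path. If $x=q$ we are done. Otherwise $x\in P\setminus\{s,q\}$, so $x$ is either a net point $p_{i'}$ or a base point $q'\neq q$. In either case the tree path has length at least $\wts(s,x)+\wts(x,q)$, by the triangle inequality applied to the subpath from $x$ to $q$. The root-stretch condition requires this to be at most $(1+\eps)\wts(s,q)$, so it suffices to show
\[
\wts(s,x)+\wts(x,q)>(1+\eps)\wts(s,q).
\]

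This inequality follows from the two earlier claims, one for each type of $x$:
\begin{itemize}
\item If $x=p_{i'}$ is a net point, \Cref{clm:net-base-forbidden} gives the strict inequality directly, for any $a\ge 17/50$.
\item If $x=q'$ is another base point, \Cref{clm:base-base-forbidden}, applied with $q_1=q'$ and $q_2=q$, gives it, for $c>0.82$ and $a\ge17/50$.
\end{itemize}
Either way we contradict the $(1+\eps)$ root-stretch bound. Hence $x=q$, so the edge $(s,q_{i,j})$ lies in $T$. Since $q_{i,j}$ was arbitrary, this holds for every $i$ and $j$.

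The one thing to check is that the parameters fixed in the construction satisfy the hypotheses of both claims simultaneously: $c\in(0.82,1)$, $a\ge 17/50$, and $\eps$ sufficiently small. I expect no real obstacle here, since the geometric work is already done in those claims. The argument is purely combinatorial: it rules out every possible first hop other than the direct edge.
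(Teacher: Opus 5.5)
Your proof is correct and follows essentially the paper's argument. The paper applies \Cref{clm:net-base-forbidden} and \Cref{clm:base-base-forbidden} to rule out every intermediate vertex of the tree path from $s$ to $q$, whereas you apply them only to the first hop, which is equivalent; the paper's choice $c=33/40$, $a=17/50$ indeed satisfies both claims' hypotheses.
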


\begin{proof}
    Fix a base point $q$, and let $S$ be the path in $T$ from $s$ to $q$. By \Cref{clm:net-base-forbidden}, the path $S$ cannot contain any net point $p_{i'}$. Indeed, if $S$ contained such a point, then the length of $S$ would be at least $\wts(s,p_{i'})+\wts(p_{i'},q)$, which is larger than $(1+\eps)\wts(s,q)$.

    Similarly, by \Cref{clm:base-base-forbidden}, the path $S$ cannot contain any base point $q'\neq q$. Therefore, since the only points in $P$ are the root, the net points, and the base points, the path $S$ contains no intermediate vertex. Hence $S$ is the single edge from $s$ to $q$.
\end{proof}

\begin{proof}[Proof of \Cref{thm:easy-4d-bound}]
    Consider the point set $P$ constructed above. By \Cref{lm:root-base-connection}, the total weight of $T$ is at least
    \begin{equation*}
        3N\wts(s, q_{1, 0}) = 3N\sqrt{1 + \eps^2(1 + \eps^2)^2/3} = 3N + O(\eps).
    \end{equation*}
    By \Cref{clm:mst-wt-1},
    \begin{equation*}
        \wts(\mst(P)) \leq R + 2\pi R + 3N\eps\sqrt{a^2 + (1 + \eps^2)^2/3}.
    \end{equation*}
    Hence,
    \begin{equation*}
        \begin{split}
            \beta &= \frac{\wts(T)}{\wts(\mst(P))} \geq \frac{3N + O(\eps)}{R + 2\pi R + 3N\eps\sqrt{a^2 + (1 + \eps^2)^2/3}}\\
            &= \frac{6\pi/(c\eps)}{1 + 2\pi + 6\pi/c \cdot \sqrt{a^2 + (1 + \eps^2)^2/3}} - o(1/\eps) \qquad \text{since $N = \frac{2\pi R}{c\eps} - O(1)$}\\
            &= \frac{1}{\eps} \frac{6\pi}{(1 + 2\pi)c + 6\pi\sqrt{a^2 + 1/3}} - o(1/\eps).
        \end{split}
    \end{equation*}
    Choosing $c = 33/40$ and $a = 17/50$, we obtain $\beta > \frac{1.0113345}{\eps}$ for sufficiently small $\eps>0$.
\end{proof}

\subsubsection{Four Dimensional Euclidean Space: A Stronger Lower Bound}
\label{ssec:four-dim-strong-bound}
In this section, we give a stronger lower bound than 
\Cref{thm:easy-4d-bound}
for the lightness of any $(1+\eps,\beta)$-SLT in $\mathbb{R}^4$, showing that the lightness $\beta$ must be at least nearly $12/11$.

\LowerBoundFour*

We now explain the idea of the construction. Instead of using a circle, we pack regular tetrahedra on the unit sphere $\mathbb{S}^3$. We construct a curve $C\subset \mathbb{S}^3$ of arclength polynomial in $1/\epsilon$. Note that this can only be done on a higher-dimensional sphere (i.e., $\mathbb{S}^d$ for $d\geq 2$), since any circle has constant length. We then place a $c\eps$-net, with $c = 1 + o(\eps)$, on $C$ and denote these points by $p_1, p_2, \ldots, p_N$. Each tetrahedron contains two consecutive net points $p_i$ and $p_{i + 1}$, together with two additional points $u_i$ and $v_i$, chosen so that $\{p_i, p_{i + 1}, u_i, v_i\}$ is a regular tetrahedron in $\mathbb{S}^3$.

The purpose of the curve $C$ is to replicate the role of the circle, while ensuring that the Euclidean distance between nearby points is approximately their distance along the curve. A useful analogy is a spherical helix in three dimensions. There are many curves with this behavior; for ease of computation, we use a curve on the Clifford torus. Formally, consider the curve
\begin{equation*}
    \Gamma(t) = \frac{1}{\sqrt{2}}\left(\cos{t}, \sin{t}, \cos{\frac{t}{\sqrt{\eps}}}, \sin{\frac{t}{\sqrt{\eps}}}\right), \qquad t\in [0, \pi].
\end{equation*}

The arclength of the curve is $\int_{0}^\pi\|\Gamma'(t)\|dt = \frac{\pi}{\sqrt{2}}\sqrt{1 + \frac{1}{\eps}} = \Theta(\eps^{-1/2})$. Placing points $\{p_1, p_2, \ldots p_N\}$ on $C$ such that two consecutive points have distance $b$, 
with $b = \eps(1 + \eps^{1/3})$, we obtain $N = \Theta(\eps^{-3/2})$. We now construct the tetrahedron containing $p_i$ and $p_{i + 1}$. Consider the intersection of the sphere centered at $p_i$ of radius $b$, the sphere centered at $p_{i + 1}$ of radius $b$, and the unit sphere $\mathbb{S}^3$. This intersection is a circle of radius $b\sqrt{\frac{3 - b^2}{4 - b^2}}$ (see \Cref{clm:intersection-circle}), which is greater than $b/2$ for sufficiently small $\eps$. Hence, we can choose two points on this circle whose mutual distance is exactly $b$. However, we need to choose these two points for each tetrahedron in a pattern so that the distance between $\{u_i,v_i\}$ and $\{u_{i+1},v_{i+1}\}$, in two consecutive tetrahedra,  
is also strictly greater than $\eps$. Meaning that $(u_i, v_i)$ and $(u_{i + 1}, v_{i + 1})$ are ``almost'' parallel.

The choices of the two points, $u_i$ and $v_i$, are made consistently along the curve. 
For an analogy, imagine a different scenario when $p_i$, $p_{i + 1}$ and $p_{i + 2}$ are on a straight line at distance $b$ apart in $\mathbb{R}^3$ equipped with the standard basis $\{e_1, e_2, e_3\}$ such that $p_i$ is the origin, $p_{i + 1} = be_1$ and $p_{i + 2} = 2be_1$. In this case, one can choose $u_i$ and $v_i$ arbitrarily so that $\{p_i,p_{i+1},u_i, v_i\}$ form a regular tetrahedron, and then translate them by vector $b\overrightarrow{e_1}$ to get $(u_{i + 1}, v_{i + 1})$. As $u_i$ and $v_i$ lie in the orthogonal bisector of $p_ip_{i+1}$, then $\{u_i,v_i\}$ and $\{u_{i+1},v_{i+1}\}$ lie in two parallel planes at distance $b$ apart. 
However, if $p_i,p_{i+1}$ are on a spherical curve, we need to shift $(u_i, v_i)$ along the curve $\Gamma\subset \mathbb{S}^3$. Since the triple $p_i,p_{i+1},p_{i+2}\in \Gamma$ deviates from a straight line by only $O(b^2) = O(\eps^2)$, we will still get our separation property (see \Cref{clm:base-base-forbidden2} below). Formally, consider the a tetrahedron $\Delta_i$ containing $p_i, p_{i + 1}$. Let $f_{i, 1}$ and $f_{i, 2}$ be unit vectors in directions $p_{i + 1} - p_i$ and $\frac{p_i + p_{i + 1}}{2}$. We find an orthonormal basis $\{f_{i, 3}, f_{i, 4}\}$ of the plane containing all possible point pairs $\{u_i, v_i\}$ as follows. (We prove in \Cref{clm:intersection-circle} below that the locus of potential points for $u_i$ and $v_i$ is a two-dimensional circle.) We first choose $f_{i, 4}$ such that $f_{i, 4}$ is almost orthogonal to the change of $p_i$, i.e., the dot product between $f_{i, 4}$ with $f_{i + 2, 2} - f_{i, 2}$ is insignificant $O(\eps^2)$. Then, we find two points $u_i, v_i$ such that $u_i = t_2 \cdot f_{i, 2} + t_3 \cdot f_{i, 3} + t_4 \cdot f_{i, 4}$ and $v_i = t_2 \cdot f_{i, 2} + t_3 \cdot f_{i, 3} - t_4 \cdot f_{i, 4}$, where the coefficients $\{t_j\}_{j \in \{2, 3, 4\}}$ are constant and chosen consistently for all $i=1,\ldots ,N-1$. We note that there are different methods to compute $u_i$ and $v_i$. Our goal is to choose $u_i$ and $v_i$ so that the calculations in our proof are as simple as possible. We show that suitable coefficients $t_2, t_3$ and $t_4$ exist in \Cref{clm:constant-exist}.

The circumcenter of the regular tetrahedron $\Delta_i = \{p_i, p_{i + 1}, u_i, v_i\}$ is $o_i = \frac{(p_i + p_{i + 1} + u_i + v_i)}{4}$. Let $x_i = o_i + h \cdot \frac{o_i}{\|o_i\|}$ for some constant $h\in (0,1)$ to be specified later. 
Our point set is 
\[P = \{s\} \cup \{p_i\}_{i = 1}^N \cup \{u_i, v_i, x_i\}_{i = 1}^{N - 1}.\] 
We call the vertices of $\Delta_i$, for $i=1,\ldots , N-1$,
\emph{tetrahedon points}.
Note that $\wts(o_i,x_i)=h$ and the distance from $x_i$ to all four vertices of $\Delta_i$ is the same, let us denote it by $L$. We shall choose $h$ so that $L<\eps$ and $x_i$ cannot be used in an approximate path from $s$ to any vertex of $\Delta_i$.  Since the circumradius of $\Delta_i$ is $\frac{\sqrt{6}}{4}b$, using the Pythagorean theorem for the right triangle $o_i p_i x_i$, we have
\begin{equation}
    \label{eq:con1}
    h^2 + \frac{3b^2}{8} = L^2.
\end{equation}

We first prove the following claim.
\begin{claim}
    \label{clm:constant-exist} There exist real numbers $t_2$, $t_3$, and $t_4$ such that for every $i=1,\ldots , N-1$, the set $\{u_i, v_i, p_i, p_{i + 1}\}$ forms a regular tetrahedron $\Delta_i$ in $\mathbb{S}^3$. Moreover, $t_2 = 1 - o(\eps^2)$ and $t_3, t_4 = O(\eps)$.
\end{claim}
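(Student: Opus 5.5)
The plan is to work in the orthonormal frame $\{f_{i,1},f_{i,2},f_{i,3},f_{i,4}\}$ attached to the pair $p_i,p_{i+1}$. The point is that, in this frame, the pair $p_i,p_{i+1}$ has the \emph{same} coordinates for every $i$, so the constants can be solved for once and reused.

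\textbf{Frame coordinates of $p_i,p_{i+1}$.} Both $p_i$ and $p_{i+1}$ lie on $\mathbb{S}^3$, and consecutive net points satisfy $\wts(p_i,p_{i+1})=b$. Hence $p_{i+1}-p_i$ is orthogonal to $p_i+p_{i+1}$, because $\|p_{i+1}\|^2-\|p_i\|^2=0$. Consequently $f_{i,1}\perp f_{i,2}$, and
\[
p_i=\mu f_{i,2}-\tfrac b2 f_{i,1},\qquad p_{i+1}=\mu f_{i,2}+\tfrac b2 f_{i,1},\qquad \mu=\sqrt{1-b^2/4}.
\]
Completing $f_{i,1},f_{i,2}$ to an orthonormal basis with any $f_{i,3},f_{i,4}$ (the freedom in choosing $f_{i,4}$ is reserved for the later near-orthogonality requirement), these coordinates do not depend on $i$. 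For this reason any solution $(t_2,t_3,t_4)$ found for one $i$ works for all $i=1,\dots,N-1$.

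\textbf{Solving the regular-tetrahedron conditions.} Set $u_i=t_2f_{i,2}+t_3f_{i,3}+t_4f_{i,4}$ and $v_i=t_2f_{i,2}+t_3f_{i,3}-t_4f_{i,4}$. Neither point has an $f_{i,1}$-component, so each is automatically equidistant from $p_i$ and $p_{i+1}$. The remaining requirements give three equations.
\begin{itemize}
\item Unit norm: $t_2^2+t_3^2+t_4^2=1$.
\item Edge length to $p_i$: for unit vectors $\wts(u_i,p_i)^2=2-2\langle u_i,p_i\rangle=2-2t_2\mu$. Setting this equal to $b^2$ gives $t_2=(1-b^2/2)/\sqrt{1-b^2/4}$.
\item Edge length between $u_i,v_i$: $\wts(u_i,v_i)=2|t_4|=b$, so $t_4=b/2$.
\end{itemize}
The first equation then determines $t_3$ via
\[
t_3^2=1-t_2^2-\tfrac{b^2}{4}=\frac{b^2(3-b^2)}{4-b^2}-\frac{b^2}{4}=\frac{b^2(8-3b^2)}{4(4-b^2)}>0 .
\]
The term $b^2(3-b^2)/(4-b^2)$ is exactly $1-t_2^2$, i.e.\ the squared radius of the intersection circle in \Cref{clm:intersection-circle}. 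So a real $t_3$ exists, and it is positive for small $\eps$.

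\textbf{Asymptotics.} Since $b=\eps(1+\eps^{1/3})=\Theta(\eps)$, we get $t_3=b/\sqrt2+O(b^3)=O(\eps)$ and $t_4=b/2=O(\eps)$. For $t_2$, expanding gives
\[
t_2=1-\tfrac38 b^2+O(b^4).
\]
The only step needing care is this expansion. It shows $1-t_2=\Theta(\eps^2)$, so the bound should be read as $t_2=1-O(\eps^2)$. I expect that is the precision the subsequent separation argument (\Cref{clm:base-base-forbidden2}) actually uses, since there only the fact that $t_2$ is independent of $i$ and very close to $1$ matters. Everything else is the routine algebra above.
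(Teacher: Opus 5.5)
Your proposal is correct and is essentially the paper's argument. The paper places $u_1,v_1$ on the intersection circle of \Cref{clm:intersection-circle}, which gives exactly your values $t_2=\frac{1-b^2/2}{\sqrt{1-b^2/4}}$, $t_3=\sqrt{r^2-b^2/4}$ and $t_4=b/2$. Your remark that $p_i,p_{i+1}$ have $i$-independent coordinates in the frame $\{f_{i,j}\}$ supplies the justification the paper leaves implicit for reusing these constants for every $i$.

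You are also right that $1-t_2=\tfrac38 b^2+O(b^4)=\Theta(\eps^2)$. The stated bound $t_2=1-o(\eps^2)$ is therefore an error in the claim, and only $t_2=1-O(\eps^2)$ holds; that weaker bound is all the later separation argument uses.
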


Before proving \Cref{clm:constant-exist}, we show that the intersection of the sphere centered at $p_i$ and $p_{i + 1}$ of radius $b$ and the unit sphere $\mathbb{S}^3$ is a circle of radius more than $b/2$. For any point $x$ and any radius $r > 0$, let $S(x, r)$ be the sphere of center $x$ with radius $r$. For any two vectors $v_1$ and $v_2$, let $\langle v_1, v_2\rangle$ be the dot product between them. 

\begin{claim}
    \label{clm:intersection-circle} The intersection of $S(p_i, b)$, $S(p_{i + 1}, b)$ and $\mathbb{S}^3$ is a circle centered at $\frac{1-b^2/2}{1-b^2/4} \cdot \frac{p_i + p_{i + 1}}{2}$ of radius $\sqrt{\frac{3-b^2}{4-b^2}}\cdot b$.
\end{claim}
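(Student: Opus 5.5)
The plan is to reduce the intersection of three spheres to the intersection of one sphere with two hyperplanes, and then read off the center and radius by orthogonal decomposition. Throughout, $p_i$ and $p_{i+1}$ are unit vectors (they lie on $\mathbb{S}^3$) with $\|p_i-p_{i+1}\|=b$, so $\langle p_i,p_{i+1}\rangle = 1-b^2/2$.

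First, I rewrite the two sphere constraints as linear ones. A point $y\in\mathbb{S}^3$ lies on $S(p_i,b)$ exactly when
\[
\|y-p_i\|^2 = 2-2\langle y,p_i\rangle = b^2,
\qquad\text{i.e.}\qquad
\langle y,p_i\rangle = 1-b^2/2 .
\]
The same computation gives $\langle y,p_{i+1}\rangle = 1-b^2/2$. So the intersection is $\mathbb{S}^3$ cut by two affine hyperplanes. Since $p_i$ and $p_{i+1}$ are linearly independent for $b\in(0,2)$, these hyperplanes meet in a 2-flat, and a sphere meets a 2-flat in a circle, a point, or the empty set.

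Second, I locate the center. It is the orthogonal projection of the origin onto the 2-flat, and it must lie in $\mathrm{span}(p_i,p_{i+1})$. By the symmetry $p_i\leftrightarrow p_{i+1}$, it has the form $c=\mu(p_i+p_{i+1})/2$. Imposing $\langle c,p_i\rangle = 1-b^2/2$ gives
\[
\mu\cdot\frac{1+(1-b^2/2)}{2} = 1-b^2/2,
\qquad\text{so}\qquad
\mu = \frac{1-b^2/2}{1-b^2/4},
\]
which is the claimed center. Any $y$ in the flat can be written as $y=c+z$ with $z\perp p_i,p_{i+1}$, and then $\|y\|^2=\|c\|^2+\|z\|^2$.

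Third, I compute the radius, which equals $\sqrt{1-\|c\|^2}$. Using $\|(p_i+p_{i+1})/2\|^2 = (2+2(1-b^2/2))/4 = 1-b^2/4$, I get
\[
\|c\|^2 = \mu^2\,(1-b^2/4) = \frac{(1-b^2/2)^2}{1-b^2/4}.
\]
Therefore
\[
1-\|c\|^2 = \frac{(1-b^2/4)-(1-b^2+b^4/4)}{1-b^2/4}
= \frac{3b^2/4 - b^4/4}{1-b^2/4}
= \frac{b^2(3-b^2)}{4-b^2},
\]
which is the claimed radius squared. It is positive for small $b$, so the intersection is a genuine circle lying in a 2-flat orthogonal to $p_i$ and $p_{i+1}$. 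Its radius exceeds $b/2$ because $(3-b^2)/(4-b^2)>1/4$.

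The only delicate step is justifying that the center lies in $\mathrm{span}(p_i,p_{i+1})$ and is symmetric. I would handle it directly: the flat is a translate of $\{p_i,p_{i+1}\}^\perp$, so its point nearest the origin lies in the span, and uniqueness of that nearest point together with the symmetry forces equal coefficients. The rest is routine algebra.
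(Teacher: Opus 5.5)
Your proposal is correct and follows essentially the same route as the paper. Both arguments rewrite the sphere conditions as $\langle y,p_i\rangle=\langle y,p_{i+1}\rangle=1-b^2/2$, place the center on the line through $(p_i+p_{i+1})/2$ by symmetry, fix the scalar with one inner-product condition, and read off the radius as $\sqrt{1-\|c\|^2}$. Your justification of the symmetry step (the nearest point of the flat lies in $\mathrm{span}(p_i,p_{i+1})$ and is unique) and your positivity check, which shows the circle is nondegenerate, are slightly more careful than the paper's write-up.
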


\begin{proof}
    Let $D = S(p_i, b) \cap S(p_{i + 1}, b) \cap\mathbb{S}^3$. For any point $x \in D$, we have $\langle x, p_i\rangle = \langle x, p_{i + 1}\rangle = 1 - \frac{b^2}{2}$.
    Thus $D$ is the intersection of $\mathbb{S}^3$ with two affine hyperplanes. Since the intersection of these two hyperplanes is an affine 2-flat, then $D$ is a circle. Let $m=(p_i+p_{i + 1})/2$. By symmetry, the center of $D$ lies on the line spanned by $m$, so it has the form $tm$. 
    Moreover, since the vector $tm$ is orthogonal to the plane containing $D$, we have
    \begin{equation}
        \langle tm, m\rangle = \langle x, m\rangle = \frac{1}{2} \cdot \left( \langle x, p_i\rangle + \langle x, p_{i + 1}\rangle \right) = 1 - \frac{b^2}{2},
    \end{equation}
    implying that $t = \frac{1 - b^2/2}{\|m\|^2}$. The Pythagorean theorem in the triangle $\Delta(s,m,p_i)$ yields $\|m\|^2 = 1 - \frac{b^2}{4}$. Thus, $t = \frac{1 - b^2/2}{1-b^2/4}$ and the center of $D$ is $\frac{1 - b^2/2}{1-b^2/4} \cdot \frac{p_i + p_{i + 1}}{2}$.
    Finally, the radius of $D$ is also obtained from the Pythagorean theorem: Since $x-tm$ is orthogonal to $m$, then
    \begin{equation*}
    \|x-tm\|^2=1-\|tm\|^2 = 1-\frac{(1-b^2/2)^2}{1-b^2/4} = \frac{b^2(3-b^2)}{4-b^2}.
    \end{equation*}
    Thus the radius of $D$ is $\sqrt{\frac{3-b^2}{4-b^2}}\cdot b$.
\end{proof}

\begin{proof}[Proof of \Cref{clm:constant-exist}]
    We find suitable $t_2$, $t_3$ and $t_4$ for the tetrahedron $\Delta_1$. The claim follows by applying the same parameters to all $i=1,\ldots , N-1$. By \Cref{clm:intersection-circle}, the intersection of $S(p_1, b)$, $S(p_2, b)$ and $\mathbb{S}^3$ is a circle centering at $\frac{1-b^2/2}{1-b^2/4} \cdot \frac{p_1 + p_2}{2}$ of radius $r := \sqrt{\frac{3-b^2}{4-b^2}} \cdot b > b/2$. Since $\sqrt{\frac{3-b^2}{4-b^2}} > \frac12$, we can choose $u_1$ and $v_1$ on that circle $D$ such that $\wts(u_1, v_1) = b$. This ensures that $\{p_1, p_2, u_1, v_1\}$ forms a regular tetrahedron. Since the circle $D$ is in a plane orthogonal to both $f_{1, 1}$ and $f_{1, 2}$, we can choose
    \begin{equation*}
        \begin{split}
            &u_1 = \frac{1-b^2/2}{1-b^2/4} \cdot \frac{p_1 + p_2}{2} + \sqrt{r^2 - \frac{b^2}{4}}f_{1, 3} + \frac{b}{2} f_{1, 4} ,\\
            &v_1 = \frac{1-b^2/2}{1-b^2/4} \cdot \frac{p_1 + p_2}{2} + \sqrt{r^2 - \frac{b^2}{4}}f_{1, 3} - \frac{b}{2} f_{1, 4} .
        \end{split}
    \end{equation*}
    Therefore, we can take $t_2 = \frac{1-b^2/2}{1-b^2/4} \cdot \frac{\|p_1 + p_2\|}{2}$, $t_3 = \sqrt{r^2 - \frac{b^2}{4}}$, and $t_4 = \frac{b}{2}$. 
\end{proof}

We can now choose $f_{i, 4}$ for every $i$ such that $f_{i, 4}$ is almost orthogonal to the change of $f_{i, 2}$. 
\begin{claim}
    \label{clm:almost-ortho} For every $i = 1,\ldots , N - 1$, there is a choice of $f_{i, 4}$, such that for every $i' \in \{i + 1, i + 2\}$, we have $|\langle f_{i, 4}, f_{i, 2} - f_{i', 2} \rangle| = O(\eps^2)$.
\end{claim}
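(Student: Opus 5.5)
The plan is to exploit that the admissible directions for $f_{i,4}$ form a two-dimensional space, while the constraint we want is a single linear condition (approximately). By \Cref{clm:intersection-circle}, the circle $D_i$ of candidates for $u_i,v_i$ lies in the affine 2-flat orthogonal to both $f_{i,1}$ and $f_{i,2}$. So $f_{i,3},f_{i,4}$ may be any orthonormal basis of the 2-dimensional subspace $V_i=\{f_{i,1},f_{i,2}\}^\perp\subset\mathbb{R}^4$. I will choose $f_{i,4}\in V_i$ orthogonal to the projection of $f_{i+1,2}-f_{i,2}$ onto $V_i$; this is always possible since $\dim V_i=2$. With this choice,
\[
\langle f_{i,4},f_{i,2}-f_{i+1,2}\rangle=0 .
\]
The case $i'=i+1$ is then exact.

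For $i'=i+2$, I write
\[
f_{i+2,2}-f_{i,2}=2(f_{i+1,2}-f_{i,2})+\bigl(f_{i+2,2}-2f_{i+1,2}+f_{i,2}\bigr).
\]
The first term is annihilated by $f_{i,4}$. It then suffices to show that the second difference $\Delta_i:=f_{i+2,2}-2f_{i+1,2}+f_{i,2}$ has norm $O(\eps^2)$.

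The key step is to expand $f_{i,2}=(p_i+p_{i+1})/\|p_i+p_{i+1}\|$ along the curve. Since $\|p_i+p_{i+1}\|^2=4-b^2$ depends only on $b$ (the points lie on $\mathbb{S}^3$ at mutual distance $b$), we get $f_{i,2}=(p_i+p_{i+1})/\sqrt{4-b^2}$. Hence
\[
\Delta_i=\frac{p_{i+3}-p_{i+2}-p_{i+1}+p_i}{\sqrt{4-b^2}} .
\]
Write $p_k=\Gamma(t_k)$. Consecutive points are at constant Euclidean distance $b$, and the speed $\|\Gamma'\|=\Theta(\eps^{-1/2})$ is constant, so the parameter step $\delta=t_{k+1}-t_k$ is constant, $\delta=\Theta(b\sqrt{\eps})=\Theta(\eps^{3/2})$. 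To justify this I will argue that the chord length $\|\Gamma(t+\delta)-\Gamma(t)\|$ is independent of $t$, because $\Gamma$ is an orbit of a one-parameter rotation group. Then
\[
p_{i+3}-p_{i+2}-p_{i+1}+p_i=\bigl(\Gamma(t+3\delta)-\Gamma(t+2\delta)\bigr)-\bigl(\Gamma(t+\delta)-\Gamma(t)\bigr),
\]
which by Taylor's theorem is at most $2\delta\cdot\delta\sup\|\Gamma''\|$. Since $\|\Gamma''\|=O(1/\eps)$, this gives $O(\eps^{3}\cdot\eps^{-1})=O(\eps^2)$, as required.

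The main obstacle is the bookkeeping in this second-difference estimate. One must verify that the uniform parameter spacing is legitimate, using the rotational homogeneity of the Clifford-torus curve, and that $\|\Gamma''\|\,\delta^2$ really is $O(\eps^2)$ and not merely $O(\eps)$. Some care is also needed at the end of the curve ($i=N-1,N-2$), where $f_{i+1,2}$ or $f_{i+2,2}$ may not be defined; there the claim is only needed for existing indices, or one extends $\Gamma$ slightly.
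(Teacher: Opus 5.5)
Your proof is correct, but it follows a different route from the paper's.

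\textbf{The paper's route.} The paper picks $f_{i,4}$ explicitly in $\mathrm{span}(R_i,Q_i)$, the slow and fast radial directions of the Clifford torus at the parameter midpoint $m_i$, orthogonal to $f_{i,2}$. It then computes the inner product in closed form as $\frac{c_1c_2}{c_1^2+c_2^2}\bigl(\cos(k\delta)-\cos(k\delta/\sqrt{\eps})\bigr)$, which is of order $\eps^2$ but not zero.

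\textbf{Your route.} You argue coordinate-free: you kill the first difference exactly and bound the remainder by the second-difference estimate $2\delta^2\sup\|\Gamma''\|=O(\eps^2)$. This works for any curve traversed with a uniform parameter step $\delta$ satisfying $\delta^2\sup\|\Gamma''\|=O(\eps^2)$. It also avoids the trigonometric bookkeeping, where the paper has harmless slips: $\cos\delta$ should be $\cos(\delta/2)$, and $\delta\approx\sqrt{2}\,\eps^{3/2}$ rather than $\eps^{3/2}$.

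\textbf{What each buys.} The paper's explicit formula gives a canonical frame that visibly varies smoothly in $i$. Claim~\ref{clm:base-base-forbidden2} relies on this when it asserts that $f_{i,j}$ and $f_{i',j}$ make angle $O(\eps)$. Your $f_{i,4}$ is determined only up to sign, so orientations would have to be fixed consistently there.

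\textbf{Your estimate proves more than you use.} Since $p_{i+1}-p_i\in\mathrm{span}(p_i,p_{i+1})\perp V_i$, the $V_i$-component of $f_{i+1,2}-f_{i,2}=(p_{i+2}-p_i)/\sqrt{4-b^2}$ equals that of $(p_{i+2}-2p_{i+1}+p_i)/\sqrt{4-b^2}$. That component is therefore $O(\eps^2)$, and likewise for $f_{i+2,2}-f_{i,2}$. Hence \emph{every} unit vector of $V_i$ satisfies the claim, and no special choice is needed.

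Consistent with this, your vector is nearly orthogonal to the paper's. The $V_i$-projection of $f_{i+1,2}-f_{i,2}$ has length about $\eps^2/2$ and points almost exactly along the paper's $f_{i,4}$; its component along the other direction of $V_i$ is only $O(\eps^{7/2})$. So your $f_{i,4}$ is essentially the paper's $f_{i,3}$.
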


\begin{proof}
    The intuition is that $f_{i, 2} - f_{i', 2}$ is close to a tangent of the curve $\Gamma$, then, we find $f_{i, 4}$ within the radius of that tangent. 
    Observe that each point in the Clifford-torus can be decomposed into a sum of two vectors: one corresponding to the slower cycle, containing the first two coordinates, and one corresponding to the faster cycle, containing the last two coordinates. 
    Let $t_i \in [0, \pi]$ such that $p_i = \Gamma(t_i)$ for every $i$. Let $\delta = t_{i + 1} - t_i = \eps^{3/2} + o(\eps^2)$. Let $m_i = \frac{t_i + t_{i + 1}}{2}$. Consider two vectors corresponding to the slow cycle and the fast cycle of the Clifford-torus curve:
    \begin{equation*}
        \begin{split}
            R_i &= (\cos m_i, \sin m_i, 0, 0),\\
            Q_i &= \left(0, 0, \cos \frac{m_i}{\sqrt{\eps}}, \sin \frac{m_i}{\sqrt{\eps}}\right).
        \end{split}
    \end{equation*}
    We choose $f_{i, 4}$ to be orthogonal to $p_i + p_{i + 1}$ and $p_{i + 1} - p_i$, and still within the span of $R_i$ and $Q_i$. Observe that $p_{i + 1} - p_i$ is orthogonal to both $R_i$ and $Q_i$. On the other hand,
    \begin{equation*}
        p_i + p_{i + 1} = \sqrt{2}\left(\cos{\delta} \cdot R_i + \cos{\frac{\delta}{\sqrt{\eps}}} \cdot Q_i\right).
    \end{equation*}

    Thus, 
    $$f_{i, 2} = \frac{\cos\delta \cdot R_i + \cos\frac{\delta}{\sqrt{\eps}} \cdot Q_i}{\sqrt{\cos^2{\delta} + \cos^2\frac{\delta}{\sqrt{\eps}}}}.$$

    Then, since $f_{i, 4}$ is in $\mathsf{span}(R_i, Q_i)$ and orthogonal to $f_{i, 2}$, we have
    \begin{equation*}
        f_{i, 4} = \frac{\cos\frac{\delta}{\sqrt{\eps}} \cdot R_i -  \cos\delta \cdot Q_i}{\sqrt{\cos^2{\delta} + \cos^2\frac{\delta}{\sqrt{\eps}}}}.
    \end{equation*}
    We show that our choice of $f_{i, 4}$ satisfies \Cref{clm:almost-ortho}. Since $\langle f_{i, 4}, f_{i, 2} \rangle = 0$, we only need to show that $|\langle f_{i, 4}, f_{i', 2} \rangle| = O(\eps^2)$. Similar to $f_{i, 2}$, we have
    \begin{equation*}
        f_{i', 2} = \frac{\cos\delta \cdot R_{i'} + \cos\frac{\delta}{\sqrt{\eps}} \cdot Q_{i'}}{\sqrt{\cos^2{\delta} + \cos^2\frac{\delta}{\sqrt{\eps}}}}.
    \end{equation*}
    Observe that $\langle R_i, R_{i'} \rangle = \cos((i' - i)\delta)$, $\langle Q_i, Q_{i'} \rangle = \cos((i' - i)\delta/\sqrt{\eps})$ and $\langle R_i, Q_{i'} \rangle = \langle R_{i'}, Q_{i} \rangle = 0$. Set $k = i' - i$. Expanding the product $\langle f_{i', 2}, f_{i, 4} \rangle$, we get
    \begin{equation*}
        |\langle f_{i', 2}, f_{i, 4} \rangle| = \frac{\cos \delta\cos \frac{\delta}{\sqrt{\eps}}}{\cos^2\delta + \cos^2\frac{\delta}{\sqrt{\eps}}} \left(\cos(k\delta) - \cos\frac{k\delta}{\sqrt{\eps}}\right) = O(\eps^2) ,
    \end{equation*}
 since $0 < k < 3$ and $\delta = \eps^{3/2} + o(\eps^2)$.  
\end{proof}

\begin{claim}
    \label{clm:base-base-forbidden2}
    For any two tetrahedon points, $q_1$ and $q_2$, $\wts(s, q_1) + \wts(q_1, q_2) > (1 + \eps)\wts(s, q_2)$. 
\end{claim}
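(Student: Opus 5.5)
The plan is to reduce the claim to a pure separation statement. Every tetrahedron point ($p_i$, $u_i$, $v_i$) lies on $\mathbb{S}^3$, so $\wts(s,q_1)=\wts(s,q_2)=1$. The inequality is therefore equivalent to
\[
\wts(q_1,q_2)>\eps\quad\text{for all distinct tetrahedron points } q_1,q_2 .
\]
Since $b=\eps(1+\eps^{1/3})$, it suffices to show $\wts(q_1,q_2)\ge b-O(\eps^{2})$ in every case, because $\eps^{4/3}\gg\eps^2$.

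We split into cases according to the indices of the tetrahedra containing $q_1$ and $q_2$. Net points $p_i$ may belong to two consecutive tetrahedra, which is harmless.

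\textbf{Case 1: same tetrahedron.} Both points lie in a common $\Delta_i$. By \Cref{clm:constant-exist}, $\Delta_i$ is regular with side $b$, so $\wts(q_1,q_2)=b>\eps$. This also covers $p_i,p_{i+1}$ and any pair consisting of a net point of $\Delta_i$ and one of $u_i,v_i$.

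\textbf{Case 2: far apart.} Here $q_1\in\Delta_i$ and $q_2\in\Delta_j$ with $|i-j|\ge 3$. Then $q_1,q_2$ lie within $O(b)$ of $p_i$ and $p_j$ respectively. Along $\Gamma$ the chord between $p_i$ and $p_j$ is at least $\min\{3b-O(b^3/\text{curvature}),\Omega(1)\}$. This uses that $\Gamma$ has curvature $O(1/\eps)$ (speed $\sim\eps^{-1/2}$ in $t$, acceleration $\sim\eps^{-1}$). Consequently, for arclength separation $\ell\le$ some $c\sqrt{\eps}$ the chord is $\ell(1-O(\ell^2/\eps))$. For larger $\ell$ the chord is bounded below by a constant multiple of $\sqrt\eps$ when the fast angle winds, and by the slow coordinates' separation otherwise. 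Either way the chord is $\gg$ the $O(b)$ offsets.

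More crudely, I would write $u_i=t_2f_{i,2}+O(\eps)$ and compare positions along the curve to show $\wts(q_1,q_2)\ge 3b-4b\cdot\frac{\sqrt6}{4}\cdot$const $>\eps$. Concretely, the projection of $q$ onto the direction $f_{i,1}\approx$ tangent differs from that of $p_i$ by at most $b$.

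\textbf{Case 3: consecutive and next-to-consecutive tetrahedra.} This is $j\in\{i+1,i+2\}$ with $q_1\in\{u_i,v_i\}$ or $q_2\in\{u_j,v_j\}$. This case is the main obstacle. Write
\[
u_j-u_i=t_2(f_{j,2}-f_{i,2})+t_3(f_{j,3}-f_{i,3})+t_4(f_{j,4}-f_{i,4}).
\]
The term $t_2(f_{j,2}-f_{i,2})$ is essentially $(p_j+p_{j+1}-p_i-p_{i+1})/2$, which has length $\approx kb$ ($k=j-i$) up to $O(b^2)$ corrections, since three consecutive curve points deviate from a line by $O(b^2/\eps)=O(\eps)$ times $b$. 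This would need care: the deviation is $b^2\kappa$ with $\kappa\sim1/\eps$, giving $O(\eps)\cdot b$, i.e.\ $O(\eps^2)$, which is fine.

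The frame vectors $f_{\cdot,3},f_{\cdot,4}$ rotate by $O(\delta/\sqrt\eps)=O(\eps)$ per step. Since $t_3,t_4=O(\eps)$, their differences contribute $O(\eps^2)$. \Cref{clm:almost-ortho} guarantees that the $f_{i,4}$ component does not cancel the main translation beyond $O(\eps^2)$. Hence $u_j-u_i$, $v_j-v_i$ equal the translation of the midpoints up to $O(\eps^2)$, giving distance $\ge b-O(\eps^2)$. This is the analogue of the parallel-planes picture in the overview.

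For the mixed pairs $u_i$ versus $p_{i+2}$ or $u_{i+1}$ versus $p_i$, the plan is to decompose along the tangent $f_{i,1}$. The point $u_i$ projects to the midpoint of $p_ip_{i+1}$, so the tangential gap to $p_{i+2}$ is $\approx 3b/2$. For $u_i$ versus $u_{i+1}$ versus $v_{i+1}$, the tangential gap is $b$ while the transverse offsets are equal ($\pm t_4$ and $t_3$ match up to $O(\eps^2)$). For $u_i$ versus $v_{i+1}$ the transverse $f_4$-gap $2t_4=b$ adds in Pythagoras, so the distance exceeds $b$.

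I expect the bookkeeping of these $O(\eps^2)$ error terms, in particular controlling $f_{j,3}-f_{i,3}$ via the explicit Clifford-torus frame, to be the hardest part. The remaining steps are routine Taylor estimates.
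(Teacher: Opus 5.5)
Your plan follows the paper's proof. You reduce to $\wts(q_1,q_2)>\eps$ using $\wts(s,q_1)=\wts(s,q_2)=1$, and use $b>\eps$ inside one tetrahedron. For far-apart tetrahedra you use a triangle inequality through net points. For nearby tetrahedra you use the frame decomposition: $t_2(f_{j,2}-f_{i,2})$ has length about $(j-i)b$, the $t_3,t_4$ differences cost $O(\eps^2)$, and in the $u_i$ versus $v_j$ case \Cref{clm:almost-ortho} lets the $f_4$-gap add in Pythagoras. Two of your additions cover situations the paper's write-up passes over: the projection onto $f_{i,1}$ for mixed pairs (a net point against $u$ or $v$ of a neighbouring tetrahedron), and the winding argument for chords between distant net points. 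The paper simply asserts the $3b$ chord bound and only treats the ``shift'' and ``$u_i$ versus $v_{i'}$'' configurations.

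The step that is wrong as written is the curvature of $\Gamma$. Curvature is $|\Gamma''|/|\Gamma'|^2$, not $|\Gamma''|$. Since $\Gamma'\perp\Gamma''$, $|\Gamma'|^2=\frac12(1+\eps^{-1})$ and $|\Gamma''|^2=\frac12(1+\eps^{-2})$, you get $\kappa\to\sqrt2$. So $\kappa=\Theta(1)$; locally $\Gamma$ is essentially the fast circle of radius $1/\sqrt2$.

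This is not cosmetic. If $\kappa$ really were of order $1/\eps$, the turning angle between consecutive chords would be $\kappa b=\Theta(1)$. The deviation $\kappa b^2$ would then be $\Theta(\eps)$, the same order as $b$; your step ``$O(\eps)\cdot b$, i.e.\ $O(\eps^2)$'' is an arithmetic slip. Also $\frac12\wts(p_i,p_{i+2})$ could drop to roughly $b\cos\frac12<\eps$. So both your Case~2 and the main term of your Case~3 would fail under your own stated curvature.

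With the correct $\kappa=\Theta(1)$, the turning angle per step is $O(\eps)$. This matches your own, correct, remark that the frames rotate by $\delta/\sqrt\eps=O(\eps)$ per step. Hence $\wts(p_i,p_{i+k})\ge kb-O(\eps^3)$ for $k\le 3$, and all your $O(\eps^2)$ error claims hold against the margin $b-\eps=\eps^{4/3}$.

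Separately, at $|i-j|=3$ the offsets are not $\ll$ the chord, so you need the exact count. Every vertex of $\Delta_i$ is within $b$ of $p_i$, so $\wts(q_1,q_2)\ge\wts(p_i,p_j)-\wts(q_1,p_i)-\wts(q_2,p_j)\ge 3b-O(\eps^3)-2b$. Your ``cruder'' bound $3b-\sqrt6\,b\cdot\mathrm{const}$ falls below $\eps$ for any constant at least $1$. Subtracting two circumradii from the circumcenter distance, $3b-2\cdot\frac{\sqrt6}{4}b$, is the version that works.
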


\begin{proof}
    Since $\wts(s, q_1) = \wts(s, q_2) = 1$, it remains to show that $\wts(q_1, q_2) > \eps$. If $q_1$ and $q_2$ are in the same tetrahedron, then $\wts(q_1, q_2) = b = \eps(1 + \eps^{1/3}) > \eps$. Assume now that $q_1$ and $q_2$ are in different tetrahedra, corresponding to $p_i$ and $p_{i'}$. If $i' > i + 3$, then by the triangle inequality, $\wts(q_1, q_2) \geq 3b - \wts(p_i, q_1) - \wts(p_{i + 1}, q_2) = b > \eps$. We focus on the case $i' \leq i + 3$.
    
    There are two cases to consider: If $q_2$ is the image of $q_1$ of a shift from $p_i$ to $p_{i'}$, we may assume that $q_1 = u_i = t_2f_{i, 2} + t_3f_{i, 3} + t_4f_{i, 4}$ and $q_2 = t_2f_{i', 2} + t_3f_{i', 3} + t_4f_{i', 4}$. By the choice of $\{f_{i, j}\}_{j \in \{2, 3, 4\}}$ and $\{f_{i', j}\}_{j \in \{2, 3, 4\}}$, for any $j \in \{2, 3, 4\}$, the angle between $f_{i, j}$ and $f_{i', j}$ is only $O(\eps)$. Thus, $\langle f_{i, j}, f_{i', j'}\rangle$ is $1 - o(\eps^2)$ if $j = j'$ and $O(\eps)$ otherwise. We have
    \begin{equation*}
        \begin{split}
            \|q_1 - q_2\|^2 &= \|(t_2f_{i, 2} + t_3f_{i, 3} + t_4f_{i, 4}) - (t_2f_{i', 2} + t_3f_{i', 3} + t_4f_{i', 4})\|^2. 
        \end{split}
    \end{equation*}
    Let $d_j = f_{i, j} - f_{i', j}$ for $j \in \{2, 3, 4\}$. With this notation,   \begin{equation*}
        \begin{split}
            \|q_1 - q_2\|^2 &= \|t_2d_2 + t_3d_3 + t_4d_4\|^2 = \|t_2d_2\|^2 + \|t_3d_3\|^2 + \|t_4d_4\|^2 + 2\sum_{2\leq j < j' \leq 4}\langle t_jd_j, t_{j'}d_{j'}\rangle\\
            &\geq \|t_2d_2\|^2  + 2\sum_{2\leq j < j' \leq 4}t_jt_{j'}\langle d_j, d_{j'}\rangle \geq \|t_2d_2\|^2 - \sum_{2\leq j < j' \leq 4}t_jt_{j'}(\underbrace{d_j^2 + d_{j'}^2}_{= O(\eps^2)})\\
            &= \|t_2d_2\|^2 - O(\eps^3).
        \end{split}
    \end{equation*}
    Since $\|t_2d_2\|^2 = \Theta(\eps^2)$, we have $\wts(q_1, q_2) = \|t_2d_2\| - O(\eps^2) = t_2 \cdot \|f_{i, 2} - f_{i', 2}\| - O(\eps^2)$. Substituting the value of $t_2$ yields
    \begin{equation*}
        \wts(q_1, q_2) \geq \frac{1}{2}\|p_{i'} + p_{i' + 1} - p_{i} - p_{i + 1}\| - O(\eps^2) \geq b - O(\eps^2) > \eps, 
    \end{equation*}
    since $b = \eps(1 + \eps^{1/3})$ and $i' > i$. 

    Otherwise, we may assume w.o.l.g.\ that $q_1 = u_i = t_2f_{i, 2} + t_3f_{i, 3} + t_4f_{i, 4}$ and $q_2 = v_{i'} = t_2f_{i', 2} + t_3f_{i', 3} - t_4f_{i', 4}$. Recall that we choose $f_{i, 4}$ and $f_{i', 4}$ such that $|\langle f_{i, 4}, f_{i, 2} - f_{i', 2} \rangle| = O(\eps^2)$ in \Cref{clm:almost-ortho}. By symmetry, $|\langle f_{i', 4}, f_{i, 2} - f_{i', 2} \rangle| = O(\eps^2)$. Let $e_4 = f_{i, 4} + f_{i', 4}$. Then, we have
    \begin{equation*}
        \begin{split}
            \|q_1 - q_2\|^2 &= \|t_2d_2 + t_3d_3 + t_4e_4\|^2 \\
            &\geq \|t_2d_2\|^2  + \|t_4e_4\|^2 + \underbrace{2t_2t_3\langle d_2, d_3\rangle}_{\geq -O(\eps^3)} + \underbrace{2t_3t_4\langle d_3, e_4\rangle}_{\geq -O(\eps^3)} + \underbrace{2t_2t_4\langle d_2, e_4\rangle}_{\geq -O(\eps^3)}\\
            &\geq \|t_2d_2\|^2 + \|t_4e_4\|^2  - O(\eps^3) \geq b^2 - O(\eps^3), 
        \end{split}
    \end{equation*}
    implying that $\wts(q_1, q_2) > b - O(\eps^2) > \eps$.
\end{proof}
    
\begin{claim}
    \label{clm:x-base-forbid}
    If $h + L > \eps + \Theta(\eps^2)$, then for every tetrahedron point $q$ and every $x_i$, we have $\wts(s, x_i) + \wts(x_i, q) > (1 + \eps)\wts(s, q)$.
\end{claim}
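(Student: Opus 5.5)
Since every tetrahedron point $q$ lies on the unit sphere $\mathbb{S}^3$, we have $\wts(s,q)=1$, and the claim reduces to showing
\[
\wts(s,x_i)+\wts(x_i,q) > 1+\eps .
\]
The plan is to compute $\wts(s,x_i)$ exactly in terms of $b$ and $h$, and then bound $\wts(x_i,q)$ from below.

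\textbf{Step 1: the distance from $s$ to $x_i$.} Let $o_i$ be the circumcenter of $\Delta_i$. Its vertices lie on $\mathbb{S}^3$, and the circumradius of $\Delta_i$ is $\frac{\sqrt6}{4}b$. The vector $o_i$ is orthogonal to the affine 3-flat spanned by $\Delta_i$, because the vertices are equidistant from $s$. So Pythagoras gives
\[
\|o_i\|=\sqrt{1-\tfrac{3b^2}{8}}=1-\Theta(\eps^2).
\]
Since $x_i=o_i+h\,o_i/\|o_i\|$, this yields
\[
\wts(s,x_i)=\|o_i\|+h=1+h-\Theta(\eps^2).
\]

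\textbf{Step 2: the distance from $x_i$ to a vertex of its own tetrahedron.} Let $q$ be a vertex of $\Delta_i$. By construction $\wts(x_i,q)=L$, using \eqref{eq:con1}. Therefore
\[
\wts(s,x_i)+\wts(x_i,q)=1+h+L-\Theta(\eps^2).
\]
This exceeds $1+\eps$ exactly when $h+L>\eps+\Theta(\eps^2)$, where the $\Theta(\eps^2)$ in the hypothesis is chosen to absorb the loss $1-\|o_i\|=\tfrac{3b^2}{16}+O(\eps^4)$. This is the main case.

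\textbf{Step 3: vertices of other tetrahedra.} Now let $q$ be a vertex of $\Delta_{i'}$ with $i'\neq i$. The goal is $\wts(x_i,q)\ge L$, after which Step 2 applies verbatim.

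Decompose $q-x_i$ into its component along $o_i$ and its component orthogonal to $o_i$. The vector $o_i$ is normal to the flat of $\Delta_i$, and the point of that flat closest to $q$ lies outside the circumscribed 2-sphere of $\Delta_i$. This uses the separation shown in \Cref{clm:base-base-forbidden2}: distinct tetrahedron points are more than $\eps$ apart, and $q$ is at distance at least $b$ from every vertex of $\Delta_i$. Moreover, $q$ sits at the same height $\langle q,\hat o_i\rangle\le 1$ on the sphere, which is at most the height of the vertices of $\Delta_i$ plus $O(\eps^2)$ corrections. Meanwhile, $x_i$ lies strictly above the flat, at height $\|o_i\|+h$.

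Combining these, the horizontal component is at least $\frac{\sqrt6}{4}b$ and the vertical component is at least $h-O(\eps^2)$. This gives
\[
\wts(x_i,q)\ge L-O(\eps^2).
\]
The resulting $O(\eps^2)$ loss must again fit inside the $\Theta(\eps^2)$ slack in the hypothesis. Checking that the constants work out is where I expect the main obstacle.

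\textbf{Fallback for Step 3.} If this bound turns out too tight, I would instead use the cruder estimate
\[
\wts(x_i,q)\ge \wts(p_i,q)-\wts(p_i,x_i)\ge b\cdot\min\{\text{index gap}\}-L .
\]
For far tetrahedra this is large, so only the adjacent cases $i'=i\pm1$ would need the careful vertical/horizontal computation above.
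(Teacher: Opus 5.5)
Your Steps 1 and 2 are exactly the paper's argument: $\wts(s,x_i)=\|o_i\|+h=1+h-\Theta(\eps^2)$, $\wts(x_i,q)=L$ for $q\in\Delta_i$, and the $\Theta(\eps^2)$ in the hypothesis absorbs $1-\|o_i\|$. For $q\notin\Delta_i$, the paper reduces to the previous case by asserting $\wts(x_i,q)\ge\wts(x_{i'},q)=L$, using also that all $x_i$ are equidistant from $s$. That is the same inequality your Step 3 aims for.

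\textbf{The gap.} Your justification of that inequality is incomplete. The statement that the projection of $q$ onto the flat of $\Delta_i$ lies outside the circumscribed 2-sphere is the entire content of this case. Saying it ``uses the separation'' does not prove it. Pairwise separation says nothing about where $q$ projects unless you also show that every point of the circumball of a regular tetrahedron of side $b$ lies within distance less than $\eps$ of some vertex. Your fallback only covers far-apart tetrahedra, so the nearby ones, which are the only real case, remain open. Also, \Cref{clm:base-base-forbidden2} gives $\wts(q,v)>\eps$, not $\ge b$.

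\textbf{The fix.} A two-line identity closes the gap and also removes your $O(\eps^2)$ loss. Write $\hat o_i=o_i/\|o_i\|$, so that $x_i=(\|o_i\|+h)\hat o_i$. For every $q\in\mathbb{S}^3$,
\[
\wts(x_i,q)^2=\|x_i\|^2+1-2\|x_i\|\langle q,\hat o_i\rangle .
\]
This is decreasing in $\langle q,\hat o_i\rangle$. It equals $L^2$ when $\langle q,\hat o_i\rangle=\|o_i\|$, which is the value at the vertices of $\Delta_i$. Hence $\wts(x_i,q)\ge L$ if and only if $\langle q,o_i\rangle\le\|o_i\|^2=1-\tfrac38 b^2$.

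Now let $v_1,\dots,v_4$ be the vertices of $\Delta_i$, so $o_i=\frac14\sum_j v_j$. Using $\langle q,v_j\rangle=1-\frac12\|q-v_j\|^2$, we get
\[
\langle q,o_i\rangle=1-\tfrac18\sum_{j=1}^4\|q-v_j\|^2 .
\]
If $q$ is a tetrahedron point not in $\Delta_i$, then each $\|q-v_j\|>\eps$ by \Cref{clm:base-base-forbidden2}. Therefore $\sum_j\|q-v_j\|^2>4\eps^2\ge 3b^2$ for small $\eps$, since $b=\eps(1+\eps^{1/3})\le\frac{2}{\sqrt3}\eps$. This gives $\langle q,o_i\rangle<\|o_i\|^2$, hence $\wts(x_i,q)>L$ exactly, and Step~2 finishes the proof with no additional slack.

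In your horizontal/vertical language: a horizontal distance of at least $\frac{\sqrt6}{4}b$ forces the height $\langle q,\hat o_i\rangle$ to be at most $\sqrt{1-\tfrac38 b^2}=\|o_i\|$. So the vertical component is at least $h$, not merely $h-O(\eps^2)$.
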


\begin{proof}
    Since $\wts(s, q) = 1$, it suffices to show that $\wts(s, x_i) + \wts(x_i, q) > 1 + \eps$. We focus on the case $q \in \Delta_i$, as otherwise $\wts(x_i, q) \geq \wts(x_{i'}, q)$, with $i'$ being the index of the tetrahedron that $q$ belong to. Since $\Delta_i$ is a regular tetrahedron with side length $b$, its circumradius is $\frac{\sqrt{6}}{4}b$. Thus, we need to find $h$ and $L$ such that $\wts(s, x_i) + L > 1 + \eps$. Consider the circumcenter $o_i=\frac{(p_i + p_{i + 1} + u_i + v_i)}{4}$ of $\Delta_i$. Then
    \begin{equation*}
        \wts(s, x_i) = h + \wts(s, o_i) = h + \sqrt{1 - 3b^2/8} = h + 1 - \Theta(\eps^2) ,
    \end{equation*}
    since $b = \eps(1 + \eps^{1/3})$. Thus, if $h$ and $L$ satisfy $h + L > \eps + \Theta(\eps^2)$, we obtain our claim.
\end{proof}

We can now prove \Cref{thm:four-dim-lower-bound}. 
\begin{proof}[Proof of \Cref{thm:four-dim-lower-bound}]
    Consider the following spanning graph $G$ on vertex set $P$: it contains a single edge from $s$ to $p_1$, and the edges $(x_i, p_i)$, $(x_i, p_{i + 1})$, $(x_i, u_i)$ and $(x_i, v_i)$ for all $i\in \{1,\ldots ,N - 1\}$. Clearly, $G$ is connected, and its weight is $1 + 4(N - 1)L$. Thus, $\wts(\mst(P)) \leq \wts(G) = 1 + 4(N - 1)L$. 

    Let $T$ be a $(1 + \eps, \beta)$-SLT of $P$. By Claims~\ref{clm:base-base-forbidden2}--\ref{clm:x-base-forbid}, and using a proof similar to that of \Cref{lm:root-base-connection}, we conclude that $T$ must contain edges from $s$ to all tetrahedron points. In other word, $T$ contains the edge $(s, q)$, for all $q \in \Delta_i$ and $i \in \{1, \ldots , N - 1\}$. Since all tetrahedron points are on the unit sphere $\mathbb{S}^4$, this gives a weight lower bound  $\wts(T) \geq 3N + 1$. 

    Therefore, the lightness of $T$ is $\beta = \frac{\wts(T)}{\wts(\mst(P))} \geq \frac{3N + 1}{1 + 4(N - 1)L}$. Recall that $N = \Theta(\eps^{-3/2})$. For sufficiently small $\eps$, we have $\beta \geq \frac{3}{4L}$. The combination of \Cref{eq:con1} and \Cref{clm:x-base-forbid} gives
    \begin{equation*}
        \begin{cases}
            h + L > \eps + \Theta(\eps^2), \\
            h^2 + \frac{3b^2}{8} = L^2 .
        \end{cases}
    \end{equation*}
    By minimizing $L$ subject to these constraints, we obtain $L = \frac{11\eps}{16} + o(\eps^2)$. Thus, by setting $L = \frac{11\eps}{16} + o(\eps^2)$, we get that $\beta > \left(\frac{12}{11} - o(1)\right)\frac{1}{\eps}$. 
\end{proof}

\subsection{High Dimensional Spaces}

In this section, we give lower bounds on the minimum lightness of
$(1+\eps)$-SLTs in high-dimensional $\mathbb{R}^d$, first under the
Euclidean ($\ell_2$) norm, and then under the $\ell_1$ norm.

\subsubsection{Lower Bound in High-Dimensional Euclidean Space}
We first consider the Euclidean ($\ell_2$) norm.
\LowerBoundHigh*

We now generalize the construction from \Cref{ssec:four-dim-strong-bound}. The only change is that in \(\mathbb{R}^d\), instead of using tetrahedra, we use regular simplices with \(d\) vertices on the unit sphere $\mathbb{S}^{d-1}$. The tetrahedral construction corresponds to the case \(d=4\). We construct a point set $P$ as follows. Let $s$ be the origin and consider the unit sphere $\mathbb{S}^{d - 1}$. Similar to the four-dimensional Euclidean case, we construct a curve $C$, which is a curve on the Clifford torus in four dimensions.

\begin{equation*}
    \Gamma(t) = \frac{1}{\sqrt{2}}\left(\cos{t}, \sin{t}, \cos{\frac{t}{\sqrt{\eps}}}, \sin{\frac{t}{\sqrt{\eps}}}, 0, 0, \ldots, 0\right), \qquad t\in [0, \pi].
\end{equation*}

We select a sequence of points $\{p_1, p_2, \ldots p_N\}$ on $C$ such that any two consecutive points are distance $b$ apart, where $b = \eps(1 + \eps^{1/3})$. Note that $N = \Theta(\eps^{-3/2})$. 
Then, we place regular $(d-1)$-dimensional simplices, each containing two consecutive points $p_i$ and $p_{i + 1}$, for $i=1,\ldots , N-1$. Let 
\begin{equation*}
    \Delta_i = \{u_{i, 1} = p_i, u_{i, 2} = p_{i + 1}, u_{i, 3}, u_{i, 4}, \ldots u_{i, d}\}    .
\end{equation*}
We call the vertices of $\Delta_i$, $i=1,\ldots , N-1$, \emph{simplex points}.

We place the simplices consistently along the curve so that the distance between any two simplex points is greater than $\eps$. 
This can be done similarly to the four-dimensional case. Then, for each simplex $\Delta_i$, we find a point $x_i$, slightly outside of the unit sphere $\mathbb{S}^{d - 1}$, such that the distance from $x_i$ to each point in $\Delta_i$ is $L$, to be specified later. 
Our point set is
\begin{equation*}
    P = \{s\} \cup \{p_i\}_{i = 1}^N \cup \{u_{i, j}\}_{1 \leq i \leq N-1, 1 \leq j \leq d}. 
\end{equation*}

Let $o_i$ be the circumcenter of $\Delta_i$. Thus, $x_i$ must lie on the ray through $o_i$. Let $x_i = h\cdot \frac{o_i}{\|o_i\|} + o_i$. The circumradius of $\Delta_i$ is $\sqrt{\frac{d - 1}{2d}} \cdot b$. By the Pythagorean theorem for the right triangle ${o_i p_i x_i}$, we have
\begin{equation}
    \label{eq:highdim-Pythagorean}
    h^2 + \frac{(d - 1)b^2}{2d} = L^2. 
\end{equation}
Using an argument similar to \Cref{clm:base-base-forbidden} and \Cref{clm:x-base-forbid}, we conclude that if
\begin{equation}
    \label{eq:highdim-stretch}
    h + L > \eps + \Theta(\eps^2),
\end{equation}
then any $(1 + \eps, \beta)$-SLT of $P$ with root $s$ must contain edges from $s$ to every simplex point. To estimate the weight of $\mst(P)$, we construct a graph $G$ containing an edge $(s, p_1)$ and all edges $(x_i, u_{i, j})$ for $i = 1,\ldots , N - 1$ and $j = 1,\ldots , d$. Since $G$ is connected, we have
\begin{equation}
    \label{eq:highdim-mst}
    \wts(\mst(P)) \leq \wts(G) = 1 + \sum_{i=1}^{N-1}\sum_{j=1}^{d}\wts(x_i, u_{i, j}) = 1 + (N - 1)d \cdot L.
\end{equation}

Consider any $(1 + \eps, \beta)$-SLT $T$ of $P$ rooted at $s$. Since $T$ contains all edges from $s$ to $u_{i, j}$ for $i = 1,\ldots , N - 1$ and $j = 1,\ldots , d$, we have the following bound on $\wts(T)$:
\begin{equation}
    \label{eq:highdim-slt}
    \wts(T) \geq (d - 1)(N - 1) + 1. 
\end{equation}

From \Cref{eq:highdim-mst} and \Cref{eq:highdim-slt}, we can bound the lightness $\beta$ as
\begin{equation*}
    \beta = \frac{\wts(T)}{\wts(\mst(P))} \geq \frac{(d - 1)(N - 1) + 1}{1 + (N - 1)d \cdot L}.
\end{equation*}
By choosing the minimum $L$ satisfying \Cref{eq:highdim-Pythagorean} and \Cref{eq:highdim-stretch}, we get $L = \frac{3\eps}{4} + o(\eps^2)$. Thus,
\begin{equation*}
    \beta = \frac{(d - 1)(N - 1) + 1}{1 + (N - 1)d \cdot \left(\frac{3\eps}{4} + o(\eps^2)\right)} = \frac{1}{\eps}\left(\frac{4}{3} - o_\eps(1)\right) ,
\end{equation*}
for sufficiently large $d$, as claimed. 

\subsubsection{Lower Bound in High-Dimensional \texorpdfstring{$\ell_1$}{ell1} Space}

In this section, we prove a lower bound for the lightness of shallow-light trees in high-dimensional $\ell_1$ spaces.

\LowerBoundHighellone*

We briefly describe the idea behind our lower bound. In \cite{KRY95}, the authors construct a series-parallel graph such that any graph with root-stretch $1+\eps$ must have lightness $1+2/\eps$. Their construction consists of a root point $s$, a star whose edges have weight slightly larger than $\eps/2$, edges of weight $1$ from $s$ to every leaf of the star, and an edge of weight $1-\eps/2$ from $s$ to the center of the star. Any graph with root-stretch $1+\eps$ must contain all edges from $s$ to the leaves.

Our idea is to use this star construction to obtain a lower bound in $\ell_1^\infty$. In $\ell_2$, such a star with more than two leaves cannot be embedded with the required distances, regardless of the dimension. In contrast, in $\ell_1^d$, one can embed a star with $d$ leaves. For example, $a=(\frac{\eps}{2},\frac{\eps}{2},\ldots,\frac{\eps}{2})$ be the center of the star, and define each leaf to have the same coordinates as $a$, except in one coordinate, where the value is set to $0$. We then place a large number of stars, each with $d$ leaves, around $s$.

We continue with the detailed construction in $\ell_1^d$, for sufficiently large $d\in \mathbb{N}$. Let $\eps = \frac{1}{k}$ for some sufficiently large positive integer $k$, let $\eps'$ be a real number slightly large than $\eps$, which we will define later. First, consider a star with center at $v = (0, 0, 0, \frac{\eps'}{2}, \frac{\eps'}{2}, \ldots, \frac{\eps'}{2})$ and leaves $u_i = v - \frac{\eps'}{2}e_{i + 3}$ for $i = 1, \ldots , d - 3$, where $e_j$ is the $j$th unit vector in an orthonormal basis. Let $S = \{v\} \cup \{u_i\}_{i = 1}^{d - 3}$.
Our point set $P$ consists of the origin $s$ and many translated copies of $S$. 
Specifically, let
$$B = \left\{ \left(x \cdot \frac{\eps'}{2}, y \cdot \frac{\eps'}{2}, z \cdot \frac{\eps'}{2}, 0, \ldots ,0\right) :  x, y, z\in \mathbb{N}_0\,\,\, \mathrm{and} \,\,\, x + y + z = \frac{2}{\eps'} - d + 4.\right\} .$$
Our point set is
\begin{equation*}
   P:= \{s\} \cup  \{S+b\}_{b \in B}.
\end{equation*}

Let $C = \{v+b\}_{b \in B}$ be the set of centers of stars (called \emph{center points}) and $L = \{b + u_i\}_{b \in B}$ the set of leaves of stars (called \emph{leaf points}). By our choice, the distance from $s$ to any leaf point is exactly $1$. We choose $\eps'$ such that $2/\eps'$ is a positive integer greater than $(d - 4)$, specifically:

\begin{equation*}
    \eps' = \frac{2}{\frac{2}{\eps} - 1} = \eps + O(\eps^2). 
\end{equation*}
Observe that the distance between $s$ and any center point is $1 + \frac{\eps'}{2}$.

We show that no other point can be used in an approximate path from $s$ to any leaf point. 

\begin{claim}
    \label{clm:no-leaf-sub}
    For every $q \in L$ and $p \in P \setminus \{s, q\}$, we have
    \begin{equation*}
        \|s - p\|_1 + \|p - q\|_1 > (1 + \eps)\|s - q\|_1.
    \end{equation*}
\end{claim}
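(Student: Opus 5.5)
The plan is to reduce the claim to a coordinatewise comparison, using the fact that every point of $P$ has nonnegative coordinates that are integer multiples of $\eps'/2$.

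\textbf{Step 1 (reformulation).} Every point of $P$ lies in the nonnegative orthant, so $\|s-p\|_1=\sum_j p_j$ for every $p\in P$. For nonnegative vectors $p,q$ we have the coordinatewise identity $p_j+|p_j-q_j|-q_j=2\max(0,p_j-q_j)$. Summing over $j$ gives
\[
\|s-p\|_1+\|p-q\|_1-\|s-q\|_1 \;=\; 2\sum_j \max(0,\,p_j-q_j).
\]
By construction $\|s-q\|_1=1$ for every leaf point $q$. It therefore suffices to show that $\sum_j \max(0,p_j-q_j)>\eps/2$ whenever $q\in L$ and $p\in P\setminus\{s,q\}$.

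\textbf{Step 2 (granularity).} Every coordinate of every point of $P$ is an integer multiple of $\eps'/2$. This holds for the translation vectors $b\in B$, for the center $v$, and for the leaves $u_i$. Hence if $p_j>q_j$ for even one index $j$, then $p_j-q_j\ge \eps'/2$. Since $\eps'=2/(2/\eps-1)>\eps$, this excess strictly exceeds $\eps/2$. So the claim follows once we show that $p$ is not coordinatewise dominated by $q$.

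\textbf{Step 3 (no domination).} Suppose $p\le q$ coordinatewise. Then $\sum_j p_j\le \sum_j q_j=1$. On the other hand, every non-root point of $P$ has coordinate sum at least $1$: a leaf point has sum exactly $1$ and a center point has sum $1+\eps'/2$, as computed right before the claim. So $\sum_j p_j\ge 1$, and equality must hold in every coordinate, giving $p=q$. This contradicts $p\ne q$.

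Hence some coordinate of $p$ exceeds the corresponding coordinate of $q$, and Steps 1 and 2 give the strict inequality. The only thing needing care is the bookkeeping that all coordinate values lie on the $\eps'/2$ grid and the two coordinate-sum computations, so I do not expect a real obstacle.
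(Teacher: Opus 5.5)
Your proof is correct, and it takes a different route from the paper. The paper splits into three cases on $p$:
\begin{itemize}
\item $p$ is the center of the same star, so the path length is $1+\eps'/2+\eps'/2$;
\item $p$ is another leaf of the same star, so $\|p-q\|_1=\eps'$;
\item $p$ lies in a different star, where it argues that $b\neq b'$ have equal coordinate sums and nonnegative entries on the $\eps'/2$ grid, hence $\|b-b'\|_1\ge\eps'$, and combines this with $\|s-p\|_1\ge 1$.
\end{itemize}
You collapse all three cases into the orthant identity $\|p\|_1+\|p-q\|_1-\|q\|_1=2\sum_j\max(0,p_j-q_j)$. You combine it with grid granularity and with the observation that no non-root point can be coordinatewise dominated by $q$, since every non-root point has $\ell_1$-norm at least $1=\|q\|_1$. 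Your version is shorter and more robust. It needs no explicit inter-point distance computations, and no implicit use of the fact that the translation vectors $b$ live on coordinates disjoint from the star coordinates (which the paper's step $\|p-q\|_1\ge\|b-b'\|_1$ silently relies on). It also applies verbatim to any variant of the gadget that keeps all points in the nonnegative orthant on the $\eps'/2$ grid, with non-root norms at least $1$ and leaf norms exactly $1$. The paper's casework, in exchange, exhibits the exact distances inside a star, such as $\|u_i-v\|_1=\eps'/2$, which are reused in the MST upper bound. It also makes visible which configurations are tight, namely those with excess exactly $\eps'$.
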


\begin{proof}
    Since $\|s - q\|_1 = 1$, we only need to show that $\|s - p\|_1 + \|p - q\|_1 > 1 + \eps$. Assume that $q = b + u_i$ for some $b \in B$ and $i \in \{1,\ldots , d - 3\}$. If $p = b + v$, then by definition of $p$ and $q$, $\|p - q\|_1 = \frac{\eps'}{2}$. Thus,
    \begin{equation*}
        \begin{split}
            \|s - p\|_1 + \|p - q\|_1 = 1 + \frac{\eps'}{2} + \frac{\eps'}{2} = 1 + \eps' > 1 + \eps , 
        \end{split}
    \end{equation*}
    since $\eps' > \eps$. 

    If $p = b + u_{i'}$ for some $i' \neq i$, then
    \begin{equation*}
        \begin{split}
            \|p - q\|_1 &= \|b + u_{i'} - b - u_{i}\|_1 = \|u_{i'} - u_{i}\|_1\\
            &= \left\|v - \frac{\eps'}{2} \cdot e_{i' + 3} - \left(v - \frac{\eps'}{2} \cdot e_{i + 3}\right)\right\|_1 \\
            &= \frac{\eps'}{2}\|e_{i + 3} - e_{i' + 3}\|_1 = \eps'.
        \end{split}
    \end{equation*}
    Consequently, $\|s - p\|_1 + \|p - q\|_1 = 1 + \eps' > 1 + \eps$.

    The last case is when $p$ belongs to a different star, i.e., $p \in b' + S$, for some $b' \neq b$. Observe that $b$ and $b'$ are different in at least $2$ coordinates, since their $\ell_1$ norms are the same and none of them contains any negative coordinate. Therefore, $\|p - q\|_1 \geq \|b - b'\| \geq \eps' > \eps$. Since $\|s - p\|_1 \geq 1$ for all $p \neq s$, we have $\|s - p\|_1 + \|p - q\|_1 > 1 + \eps$. 
\end{proof}

We are ready to prove \Cref{thm:high-dim-lower-bound-ell1}.
\begin{proof}[Proof of \Cref{thm:high-dim-lower-bound-ell1}]
    Let $\eps>0$ be given, and assume w.l.o.g.\ that $\eps=\frac{1}{k}$ for some  $k\in \mathbb{N}$. Consider the point set $P$ constructed above with $\eps'>\eps$ and $d\in \mathbb{N}$ to be specified below. 
    Let $T$ be an $(1 + \eps, \beta)$-SLT of $P$ rooted at $s$. \Cref{clm:no-leaf-sub} implies that $T$ contains the edges $(s, q)$ for all $q \in L$, each of unit weight. Therefore, $\wts(T) \geq |L| \cdot 1 = (d - 3) \cdot |B|$. Let $N = \frac{2}{\eps'} - d + 4=2k-d+3$, and note that $N > 0$ if $\eps>0$ is sufficiently small. Since the number of triples $(x,y,z)\in \mathbb{N}_0^3$ with $x+y+z=N$ is $|B|=\binom{N + 2}{2}$, then
    \begin{equation}
        \label{eq:l1SLT}
        \wts(T) \geq (d - 3) \cdot \binom{N + 2}{2}. 
    \end{equation} 

    In order to give an upper bound for $\wts(\mst(P))$, we construct a spanning graph $G$ on $P$. Observe that the coordinates of any point in $C$ are of the form $\frac{\eps'}{2}(x, y, z, 1, 1, \ldots 1)$ with $x, y, z\in \mathbb{N}_0$ and $x + y + z = N$. Add the edge between $s$ and an arbitrary leaf in $L$ to $G$. For every point $c = \frac{\eps'}{2}(x, y, z, 1, 1, \ldots, 1)$, we define the associated point of $c$ to be $c' = \frac{\eps'}{2}(x, y - 1, z + 1, 1, 1, \ldots 1)$ if $y > 0$, and $c' = \frac{\eps'}{2}(x + 1, 0, z - 1, 1, 1, \ldots 1)$ if $y = 0$ and $z > 0$. It is immediate that $\|p - p'\| = 2\frac{\eps'}{2} = \eps'$. For every $c \in C$, we add the edge $(c, c')$ between $c$ and its associated point $c'$. Finally, add all edges between $c$ and its corresponding leaves. Thus, $G$ contains the following edges:
    \begin{itemize}
        \item An edge between $s$ and a point in $L$, which has weight $1$. 
        \item $|C|$ edges between centers $c\in C$ and their associated points, each of weight at most $\eps'$. The total weight of these edges is $|C| \cdot \eps' = |B| \cdot \eps' = \binom{N + 2}{2} \cdot \eps'$.
        \item $|L|$ edges between leaves in $L$ and their corresponding centers in $C$, each of weight $\frac{\eps'}{2}$. The total weight of these edges is $|L| \cdot \frac{\eps'}{2} = (d - 3) \cdot \binom{N + 2}{2}\cdot \frac{\eps'}{2}$. 
    \end{itemize}
    Overall, we have $\wts(G) 
    \leq 1 + \binom{N + 2}{2} \cdot \eps' + (d - 3) \cdot \binom{N + 2}{2}\cdot \frac{\eps'}{2}$.
     Since $G$ is connected, 
    \begin{equation}
    \label{eq:l1mst}
        \wts(\mst) \leq 1 + \binom{N + 2}{2} \cdot \eps' + (d - 3) \cdot \binom{N + 2}{2}\cdot \frac{\eps'}{2}.
    \end{equation}

    We now bound the value of $\beta$. From \Cref{eq:l1SLT,eq:l1mst}, we have
    \begin{align}
    \beta = \frac{\wts(T)}{\wts(\mst)} 
            &\geq \frac{(d - 3) \cdot \binom{N + 2}{2}}{1 + \binom{N + 2}{2} \cdot \eps' + (d - 3) \cdot \binom{N + 2}{2}\cdot \frac{\eps'}{2}}\nonumber\\
             &= \frac{(d - 3) \cdot \binom{N + 2}{2}}{\frac{2}{\eps'} + 2\cdot \binom{N + 2}{2}  + (d - 3) \cdot \binom{N + 2}{2}} \cdot \frac{2}{\eps'}.\label{eq:l1_beta1}
    \end{align}
    Since $N = \frac{2}{\eps'} - d + 4=\Theta(\frac{1}{\eps})$ if $\eps d<1$, and $\eps' = \eps + O(\eps^2)$, we have 
    $$\frac{2}{\eps'} + 2\cdot \binom{N + 2}{2} 
    = \frac{2}{\eps + O(\eps^2)} +  \Theta\left(\frac{1}{\eps^2}\right) 
    = \Theta\left(\frac{1}{\eps^2}\right) .$$
    Since $(d - 3)\cdot \binom{N + 2}{2} = (d - 3) \cdot\Theta\left(\frac{1}{\eps^2}\right)$, we get
    \begin{equation}
        \label{eq:exta_factor}
        \frac{1 + \binom{N + 2}{2}}{(d - 3)\cdot \binom{N + 2}{2}} = \Theta(1/d). 
    \end{equation}

    Combining \Cref{eq:l1_beta1,eq:exta_factor}, we obtain
    \begin{equation*}
        \beta 
        \geq \frac{1}{\Theta(1/d) + 1} \cdot \frac{2}{\eps'}
        = \left(1 - O\left(\frac{1}{d}\right) \right) \cdot \frac{2}{\eps'} = (1 - O_\eps(1))\cdot \frac{2}{\eps}, 
    \end{equation*}
    as claimed. The last equation holds because $\eps < \frac{1}{d}$ and $\eps' = \eps + O(\eps^2)$.
\end{proof}

\section*{Acknowledgment}
The authors used ChatGPT to assist with redrafting and polishing the exposition of the manuscript, including rewriting portions of the introduction, technical overview, and proof presentation. All definitions, theorem statements, algorithms, proofs, and final text were reviewed, edited, and approved by the authors, who take full responsibility for the correctness, originality, and integrity of the paper.

Numerical calculations in the proof of \Cref{lem:colinear}, \Cref{thm:LB-3D}, \Cref{thm:easy-4d-bound}, and \Cref{thm:four-dim-lower-bound} were assisted by ChatGPT. 

The lower-bound construction of \Cref{thm:high-dim-lower-bound-ell1} builds upon an initial idea developed by the authors. Through an iterative interaction with ChatGPT, this idea was ultimately refined into the final construction, which was subsequently verified and fully formalized by the authors.

Hung Le and Cuong Than are supported by the NSF CAREER award CCF-2237288, the NSF grants CCF-2517033 and CCF-2121952, and a Google Research Scholar Award. Cuong Than is also supported by a Google Ph.D. Fellowship.
Research by Csaba D.\ T\'oth was supported by the NSF award DMS-2154347.
Shay Solomon is funded by the European Union (ERC, DynOpt, 101043159). Views and opinions expressed are however those of the author(s) only and do not necessarily reflect those of the European Union or the European Research Council. Neither the European Union nor the granting authority can be held responsible for them. Shay Solomon is also funded by a grant from the United States-Israel Binational Science Foundation (BSF), Jerusalem, Israel, and the United States National Science Foundation (NSF). Tianyi Zhang is supported by the Fundamental and Interdisciplinary Disciplines Breakthrough Plan of the Ministry of Education of China (No. JYB2025XDXM118) and the ``111 Center'' (No. B26023).

\bibliographystyle{alphaurl}
\bibliography{ref}
\end{document}